\newtheorem{propo}{Proposition}
\newtheorem{theor}{Theorem}
\newtheorem{defin}{Definition}
\newtheorem{remar}{Remark}
\newcommand{\vertii}[1]{{\left\vert\kern-0.3ex\left\vert #1 
    \right\vert\kern-0.3ex\right\vert}}
\newcommand{\vertiii}[1]{{\left\vert\kern-0.3ex\left\vert\kern-0.3ex\left\vert #1 
    \right\vert\kern-0.3ex\right\vert\kern-0.3ex\right\vert}}
\newcommand*\dd{\mathop{}\!\mathrm{d}}
\newcommand*\circled[1]{\tikz[baseline=(char.base)]{\node[shape=circle,draw,inner sep=2pt] (char) {#1};}}
\title{Properties of some dynamical systems for three collapsing inelastic particles}
\author{Th\'eophile Dolmaire}
\author{Juan J. L. Vel\'azquez}
\affil{Institute for Applied Mathematics, University of Bonn, Endenicher Allee 60, D-53115 Bonn, Germany}
\begin{document}

\maketitle

\begin{abstract}
\noindent
In this article we continue the study of the collapse of three inelastic particles in dimension $d \geq 2$, complementing the results we obtained in the companion paper \cite{DoVeAr1}. We focus on the particular case of the nearly-linear inelastic collapse, when the order of collisions becomes eventually the infinite repetition of the period \circled{0}-\circled{1}, \circled{0}-\circled{2}, under the assumption that the relative velocities of the particles (with respect to the central particle \circled{0}) do not vanish at the time of collapse. Taking as starting point the full dynamical system that describes two consecutive collisions of the nearly-linear collapse, we derive formally a two-dimensional dynamical system, called the two-collision mapping. This mapping governs the evolution of the variables of the full dynamical system. We show in particular that in the so-called Zhou-Kadanoff regime, the orbits of the two-collision mapping can be described in full detail. We study rigorously the two-collision mapping, proving that the Zhou-Kadanoff regime is stable and locally attracting in a certain region of the phase space of the two-collision mapping. We describe all the fixed points of the two-collision mapping in the case when the norms of the relative velocities tend to the same positive limit. We establish conjectures to characterize the orbits that verify the Zhou-Kadanoff regime, motivated by numerical simulations, and we prove these conjectures for a simplified version of the two-collision mapping.
\end{abstract}

\textbf{Keywords.} Inelastic Collapse; Inelastic Hard Spheres; Particle Systems.

\tableofcontents

\section{Introduction}
\numberwithin{equation}{section}

In this article we continue the study of a system of three inelastic hard spheres that we began in the companion paper \cite{DoVeAr1}. We consider the system in dimension $d \geq 2$. We assume that the particles, that we label \circled{$i$} with $i \in \{0,1,2\}$, are spheres of diameter $1$, moving freely when at a distance larger than $1$ from each other, and colliding if and only if the respective centers of two particles of the system are at distance $1$. In this case, the respective velocities $v_1$ and $v_2$ of the two colliding particles are instantaneously changed into $v_1' $ and $v_2'$ respectively, according to the law:
\begin{align*}
\left\{
\begin{array}{ccc}
v_1' + v_2' &=& v_1+v_2,\\
\left[ (v_1'-v_2')\cdot \omega_{1,2} \right] &=& - r \left[ (v_1-v_2)\cdot \omega_{1,2}\right],
\end{array}
\right.
\end{align*}
where $\omega_{1,2}$ is the unitary vector pointing from the center of the first particle towards the center of the second particle, and $r \in\, ]0,1[$ is a \emph{fixed} number, called the \emph{restitution coefficient}. From the above collision law, it is clear that the momentum of the system is conserved, but that a certain amount of kinetic energy is dissipated any time two particles collide.\\
\newline
Systems of inelastic hard spheres are extensively used as an idealized description, at the microscopic scale, of granular media (\cite{BrPo004}, \cite{CHMR021}). Such particle systems exhibit interesting behaviours (\cite{BeJN996}), such as clustering (\cite{GoZa993}, \cite{PoSc005}). Another typical phenomenon, known in the literature as \emph{inelastic collapse}, consists in the concentration of some particles of the system in small regions, that collide with each other more and more often, so that eventually an infinite number of collisions take place in finite time. The inelastic collapse was first observed in a one-dimensional model (\cite{BeMa990}), and later in higher dimension (\cite{McYo993}). This phenomenon causes challenging dynamical system problems. The properties of a system that experiences an inelastic collapse are still mainly unknown, even when such a system is constituted with a small number of particles. In addition, since such a phenomenon was observed numerically, it was expected to be stable, at least for some particular initial configurations, which we proved in \cite{DoVeAr1}. As a consequence, the well-posedness of the dynamics of such a particle system is a difficult problem, which remains mainly open. In comparison, such well-posedness is a well-known result in the case of elastic hard spheres, established by Alexander \cite{Alex975} (see also Proposition 4.1.1 in Section ``The $N$-particle flow'' of \cite{GSRT013} for a modern presentation). Let us also mention that in \cite{DoVeNot}, we obtained an Alexander's theorem for a different model of inelastic particles.\\
The literature concerning the inelastic collapse is rich, in particular concerning the one-dimensional case (\cite{ShKa989}, \cite{BeMa990}, \cite{McYo991}, \cite{CoGM995}, \cite{GrMu996}, \cite{GSBM998}, \cite{CDKK999}, \cite{BeCa999} and more recently \cite{ChKZ022} and \cite{HuRo023}). Concerning higher dimensions, the phenomenon is studied in \cite{ZhKa996} and \cite{ScZh996}. In the companion paper \cite{DoVeAr1} of the present article, we review extensively the literature about the inelastic collapse.\\
We obtained in the companion paper \cite{DoVeAr1} general results concerning collapsing systems of three inelastic particles, in dimension $d \geq 2$. We showed that only two orders of collisions are possible for a system of three collapsing particles. Namely, up to relabel the particles, the order of collisions becomes eventually the infinite repetition of the period \circled{0}-\circled{1}, \circled{0}-\circled{2}, or the infinite repetition of the period \circled{0}-\circled{1}, \circled{0}-\circled{2}, \circled{1}-\circled{2}. We named the former case the \emph{nearly-linear collapse}, and the latter the \emph{triangular collapse}, motivated by the final geometry of the particles at the time of collapse. The study of the triangular collapse in \cite{DoVeAr1} strongly suggests that the triangular collapse is unstable, although we do not have proved such a fact in \cite{DoVeAr1}. We proved that at the time when the collapse takes place, the three particles are in contact, and that the normal components of the relative velocities vanish. We wrote the full dynamical system governing the evolution of the particles during the collisions, and in particular we obtained asymptotics concerning the vanishing variables. Nevertheless, such a full dynamical system is extremely complicated, due to its high dimensionality: when we consider three particles evolving in $\mathbb{R}^2$, the system is $7$-dimensional, and $11$-dimensional for particles in $\mathbb{R}^3$. Therefore, one has to seek for simplifications, and to identify asymptotic regimes where some terms become negligible. Such an approach was already adopted by Zhou and Kadanoff in \cite{ZhKa996}, where two necessary conditions where derived for the existence and stability (with respect to perturbations of the initial data) of the inelastic collapse.\\
In particular, Zhou and Kadanoff identified a regime, that they named the ``flat surface approximation'', in which the full dynamical system can be reduced to a one-dimensional iteration, and the collapsing trajectories are identified to self-similar solutions of this iteration. However, there is no proof that the system of particles will necessarily enter such a regime, although this behaviour was observed numerically.\\
\newline
In general, one can show, using the asymptotic results in \cite{DoVeAr1}, that the full dynamical system can be formally reduced to a simplified two-dimensional system, as it was also observed in \cite{ZhKa996}. Understanding this two-dimensional system is the main objective of the present article. However, such a two-dimensional system might be different regarding the evolution of the system of particles, depending on the two following aspects: the order of collisions, that we discussed already, and the norm of the limit of the relative velocities. Concerning the order of collisions, it is important to know it a priori, in order to always iterate the same mapping, that describes the repetition of a given period of collisions. Concerning the norm of the relative velocities $v_1-v_0$ and $v_2-v_0$ (assuming that the particle \circled{0} is involved in all the collisions), each of them can be converging to zero, or not, at the limiting time of collapse. We expect that the generic case, that is the only case associated to a set of initial data of positive measure, corresponds to the case when none of the relative velocities vanish at the time of collapse. Such a statement can be motivated by the result of stability we obtained in \cite{DoVeAr1}, where we constructed a collapsing initial configuration, stable under perturbation of the initial datum, and such that the relative velocities do not vanish asymptotically. If on the contrary one assumes that one (or two) of the two relative velocities vanishes at the time of collapse, one would obtain a different dynamical system. Note that the case when one or two relative velocities vanish at the time of collapse constitutes a pathological configuration for the system of particles, in the sense that the dynamics of the system of particles becomes ill-posed after the time of collapse.\\
In the present article the simplified two-dimensional dynamical system is obtained with the following a priori assumptions on the system of particles: 
\begin{itemize}
\item first, we assume that the system of particles experiences indeed an inelastic collapse, so that we can consider infinitely many iterations of the mapping that represents a single collision,
\item second, we will assume that the sequence of collisions is the infinite repetition of the pair \circled{0}-\circled{1}, \circled{0}-\circled{2}, which corresponds to the \emph{nearly-linear collapse} following the nomenclature of \cite{DoVeAr1}, and that corresponds also to the situation studied in \cite{ZhKa996} (and which is, most likely, the only order of collisions associated to a set of initial data of positive measure),
\item third, we will to assume that none of the relative velocities $v_1-v_0$ and $v_2-v_0$ of the particles do not vanish in the limit, at the time of collapse.
\end{itemize}
As we shall see, this two-dimensional system turns out to be a complicated mathematical object, even though simpler than the full dynamical system of the three particles system. Let us denote by $\eta_1$ and $\eta_2$ the respective normal components of the two relative velocities $v_1-v_0$ and $v_2-v_0$, and by $\tau$ the time between two consecutive collisions. Essentially, the two-dimensional system describes the evolution of the variables $\eta_1$, $\eta_2$ and $\tau$, which in turn govern the evolution of all the other variables. However, the so-called \emph{Zhou-Kadanoff regime}, that we described in \cite{DoVeAr1}, allows a complete study of the two-dimensional system. In this regime, which corresponds to the case when $\tau \ll \eta_1,\eta_2$ asymptotically, the two-collision mapping reduces to a one-dimensional mapping. This regime corresponds also to the flat surface approximation of Zhou and Kadanoff. As a consequence, it is a central question to know under which conditions the Zhou-Kadanoff regime holds, since it would ultimately provide important information concerning the asymptotic behaviour of a system of three inelastic particles in the neighbourhood of the collapse. Therefore, we attached ourselves in this article to determine when the Zhou-Kadanoff regime holds.\\
\newline
The main results of the present article are the following. We first derive formally the two-dimensional reduction of the full dynamical system which describes two consecutive collisions in the case of the nearly-linear collapse, assuming that the norms of the two relative velocities are not vanishing at the time of collapse. The pair of variables $\left(\varphi_1,\varphi_2\right)$ of this two-dimensional reduction are described in terms of $\eta_1$, $\eta_2$ and $\tau$, and we identify the Zhou-Kadanoff regime as the regime in which $\varphi_2$ converges to zero.\\
Then, we describe the orbits of this two-dimensional system using a combination of analytical arguments and numerical simulations. In particular, we prove rigorously that the Zhou-Kadanoff regime is always attracting in a non-trivial region of the phase space of the two-dimensional reduction. In the symmetric case, that is, when the norms of the relative velocities converge to the same positive limit at the time of collapse, we describe all the fixed points of the two-dimensional dynamical system, and we study their stability. Motivated by numerical simulations, we conjecture a characterization of the orbits that satisfy eventually the Zhou-Kadanoff regime. Finally, we derive a simplified version of the two-dimensional reduction, corresponding to the limit when the energy of the system is small at the time of collapse, and we prove the conjecture for this simplified system.\\
\newline
Let us describe the results in more detail. We start with recalling the general results obtained in the companion paper \cite{DoVeAr1} concerning the collapsing systems of three inelastic particles. In particular, in the regime of the collapse, we recall in which sense the variables used to describe such a collapse are all estimated by the normal components $\eta_1$ and $\eta_2$ of the relative velocities and the time difference $\tau$ between two consecutive collisions. In this article we apply these results to the case of the nearly-linear collapse with non-vanishing relative velocities at the time of collapse. We formally obtain the evolution laws of the leading order terms of the full dynamical system describing the evolution of the three particles, which are given only in terms of the three variables $\eta_1$, $\eta_2$ and $\tau$. We show how such evolution laws can be reduced to a two-dimensional dynamical system, acting on the variables $\left(\varphi_1,\varphi_2\right)$, that we define in the article. We call this dynamical system the \emph{two-collision mapping}. We identify also formally the Zhou-Kadanoff regime $\varphi_2 \rightarrow 0$ for which the system simplifies into a one-dimensional iteration that can be fully studied. These results constitute the formal study of the full dynamical system describing the evolution of the three inelastic particles.\\
From this point, we perform a mathematically rigorous study of the two-collision mapping, investigating the consequences of the Zhou-Kadanoff regime, and when such a regime holds. In particular, we prove that the region $\varphi_2 = 0$ is invariant, and we give a complete description of the dynamics of the two-collision mapping in this region. We prove that two fixed points exist in this region, one of them being stable for orbits starting from $\varphi_2=0$. Then, we prove that this region is actually attracting in the plane $\left(\varphi_1,\varphi_2\right)$ close to the stable fixed point in $\varphi_2=0$, which proves that the Zhou-Kadanoff regime is indeed stable for some initial configurations of the two-collision mapping.\\
In order to go beyond the local analysis of the stability of the region $\varphi_2=0$ and to characterize the basin of attraction of the Zhou-Kadanoff regime, we report numerical observations we made on the two-collision mapping. These observations enabled us to state conjectures, and we believe in particular that there exists a separatrix in the plane $\left(\varphi_1,\varphi_2\right)$, which divides the orbits between only two groups (except for the orbits that are contained in the separatrix itself): the orbits that are globally well-defined and for which the Zhou-Kadanoff regime eventually holds, and the orbits that are not globally well-defined. In summary, we conjecture a complete description of the phase space of the two-collision mapping, claiming that the Zhou-Kadanoff is the only regime that is associated to a set of initial data of positive measure for the two-collision mapping.\\
We then turn to a symmetric version of the collision mapping, corresponding to the case when the norms of the relative velocities tend to the same positive limit at the time of collapse. In this case, the two-collision mapping simplifies, in such a way that it corresponds to two iterations of the same dynamical system, the so-called \emph{one-collision mapping}. We provide therefore a study of the one-collision mapping. We describe with full mathematical rigor all the fixed points of the one-collision mapping. We recover in particular the unstable fixed point already observed by Zhou and Kadanoff in \cite{ZhKa996}, and we believe that this point lies on the separatrix that characterizes the Zhou-Kadanoff regime. This fixed point lies outside the region $\varphi_2=0$, and in particular for such a fixed point the Zhou-Kadanoff regime does not hold.\\
Finally, we derive the equations of the so-called \emph{low energy limit} of the two-collision mapping, which corresponds to a situation of a collapse when the remaining kinetic energy of the system is small (but non-zero) at the time of collapse. For this system, we prove the conjecture we stated for the two-collision mapping: we show that there exists a separatrix, below which the orbits are globally well-defined and eventually verify the Zhou-Kadanoff regime, and above which the orbits are not globally well-defined. In particular, we can describe completely the orbits of the two-collision mapping in the low energy limit.\\
Some of the results that we describe in this article have analogies with what Zhou and Kadanoff obtained in \cite{ZhKa996}. Nevertheless, one of the main novelties of the present article, and that is not in \cite{ZhKa996}, consists in the complete description of the orbits of the two-collision mapping, including the existence of the separatrix. For the general two-collision mapping, we conjecture such a description, and we prove it in this article only in the formal limit of the low energy.\\
\newline
The plan of the paper is the following. In the second section, we provide the expression of the full dynamical system describing the three inelastic particles obtained in the companion paper \cite{DoVeAr1}, and we recall the main asymptotic results concerning such a system, from the same article. In the third section, we derive and study the expression of the two-collision mapping, which is the simplified two-dimensional reduction of the full dynamical system. We establish the main properties of the two-collision mapping, we prove that the Zhou-Kadanoff regime is asymptotically stable, at least in a non trivial region of the two-dimensional phase space, and we conclude this section with the conjectures on the behaviour of the orbits of the two-collision mapping. In the fourth section, we describe all the fixed points of the one-collision mapping. Finally, in the fifth section, we present the formal derivation of the equations of the two-collision mapping in the \emph{low energy limit}, and we study completely this simplified dynamical system.

\section{Obtaining the complete dynamical system}
\label{SECTION__2EcrirSystmDynam}

\subsection{The model}
\label{SSECTI02.1_Le_Modele}

We consider a system of three inelastic particles in the Euclidean space $\mathbb{R}^d$ ($d \geq 2$). Such particles will be denoted by \circled{$i$} ($i \in \{0,1,2\}$). Let us denote $x_i \in \mathbb{R}^d$ and $v_i \in \mathbb{R}^d$ the respective positions and velocities of the three particles ($i \in \{0,1,2\}$). We assume that the particles are identical spheres (that is, of the same diameter, equal to $1$) that cannot overlap. We will assume that the particles evolve according to the \emph{inelastic hard sphere flow}, defined as follows.\\
When no pair of particles are in contact, that is, when $\vert x_i - x_j \vert > 1$ for all $0 \leq i<j \leq 2$, the particles move in straight line, with constant velocities.\\
When two particles collide, that is, when there exists a time $t_1 \geq 0$ such that $\vert x_i(t_1) - x_j(t_1) \vert = 1$ for a certain pair of indices $i<j$, the velocities of the two particles \circled{$i$} and \circled{$j$} are immediately changed from $v_i = v_i(0)$ and $v_j = v_j(0)$ into
\begin{equation}
\label{EQUATSS2.1VitesPost-Colli}
\left\{
\begin{array}{rl}
v_i' &= v_i - \frac{(1+r)}{2}(v_i-v_j)\cdot \omega_{i,j} \omega_{i,j},\\
v_j' &= v_j + \frac{(1+r)}{2}(v_i-v_j)\cdot \omega_{i,j} \omega_{i,j},
\end{array}
\right.
\end{equation}
where $r \in [0,1]$ is the \emph{restitution coefficient}, and $\omega_{i,j}$ is the normalized vector $\left(x_i(t_1)-x_j(t_1)\right)/\vert x_i(t_1)-x_j(t_1) \vert$ joining the centers of the two colliding particles \circled{$i$} and \circled{$j$}, at the time of the collision $t_1$.\\
\newline
The trajectories of systems of inelastic particles are defined provided that only binary collisions take place between free evolution of the particles. As for the classical elastic hard sphere system, triple collisions (that is, when three or more particles collide all together exactly at the same time) yields an ill-posed problem. An additional difficulty, specific to the case of inelastic particles, and that appears in particular for models with a fixed restitution coefficient $r$, is the possibility of infinitely many collisions to take place in a finite time. Such a phenomenon, called inelastic collapse, prevents also to define the dynamics of the particles further. The question of the global well-posedness of such a flow, for a general initial configuration, is still an open problem, and is one of the motivations of the present paper.

\subsection{Parametrizing the collisions}
\label{SSECTI02.2ParamColli}

Let us  now parametrize the system of three particles, and define the discrete dynamical system that acts on such parametrization, describing the evolution of the system from a collision to the following one. In particular, we will write the equations describing a first collision occuring at a positive time $\tau$, between the particles \circled{0} and \circled{2}, assuming that \circled{0} and \circled{1} just collided at the initial time $t=0$. Proceeding that way, any collision can be described with the same equations, up to relabel the particles.\\
At time $t=0$, we have:
\begin{align}
\left\vert x_1-x_0 \right\vert = 1,
\end{align}
corresponding to the fact that \circled{0} and \circled{1} are in contact. Let us then denote by $\omega_1$ the difference $x_1 - x_0 \in \mathbb{S}^{d-1}$. We have also
\begin{align}
\label{EQUATSS2.2LoiEvW1Om1Init2}
W_1 \cdot \omega_1 > 0,
\end{align}
introducing then the relative velocity $W_1 = v_1 - v_0$ of the particle \circled{1} with respect to \circled{0}, describing the fact that the two particles \circled{0} and \circled{1} are in a post-collisional configuration. Let us also assume that the particle \circled{2} is at a positive distance from the two others at the initial time. We have:
\begin{align}
d > 0.
\end{align}
where we denote by $\omega_2$ the vector $\displaystyle{\frac{x_2-x_0}{\vert x_2 - x_0 \vert}} \in \mathbb{S}^{d-1}$, and $1+d = \vert x_2 - x_1 \vert$, $x_2 - x_0 = (1+d) \omega_2$. Denoting the normal components of the relative velocities $W_i\cdot\omega_i$ by $\eta_i$, we assume finally:
\begin{align}
\label{EQUATSS2.2LoiEvW2Om2Init1}
\eta_2 = W_2 \cdot \omega_2 < 0,
\end{align}
where we denoted by $W_2$ the relative velocity $v_2 - v_0$, and:
\begin{align}
\label{EQUATSS2.2LoiEvParZK_zeta}
\frac{d(2+d) \vert W_2 \vert^2}{(1+d)^2 \eta_2^2} \geq 0.
\end{align}
The two conditions \eqref{EQUATSS2.2LoiEvW2Om2Init1} and \eqref{EQUATSS2.2LoiEvParZK_zeta} together constitute necessary and sufficient conditions for a collision between the particles \circled{0} and \circled{2} to take place in the future. When these two conditions hold, the collision between \circled{0} and \circled{2} takes place at time $\tau$, where:
\begin{align}
\label{EQUATSS2.2TempsColliTauV1}
\tau &= - \frac{(1+d)\left( \omega_2 \cdot W_2 \right)}{\vert W_2 \vert^2} - \frac{\sqrt{ (1+d)^2 \left( \omega_2 \cdot W_2 \right)^2 - d(2+d) \vert W_2 \vert^2 }}{\vert W_2 \vert^2} \nonumber\\
&= \frac{(1+d)(-\eta_2)}{\vert W_2 \vert^2} \left[ 1 - \sqrt{1 - \frac{d(2+d) \vert W_2 \vert^2}{(1+d)^2 \eta_2^2}} \right].
\end{align}
In particular, the quantity \eqref{EQUATSS2.2LoiEvParZK_zeta}, involved in the expression \eqref{EQUATSS2.2TempsColliTauV1} of $\tau$ is of central importance, in particular to characterize the Zhou-Kadanoff regime.

\begin{defin}[Zhou-Kadanoff parameter]
Let $d \geq 2$ be a positive integer, and $W_2$ be a vector in $\mathbb{R}^d$. We denote by $\zeta$ the positive quantity:
\begin{align}
\label{EQUATSS2.2DefinParZK_zeta}
\zeta = \frac{d(2+d) \vert W_2 \vert^2}{(1+d)^2 \eta_2^2} \geq 0.
\end{align}
The number $\zeta$ will be called the \emph{Zhou-Kadanoff parameter} (in short, the \emph{ZK parameter}).
\end{defin}
\noindent
With the variables we introduced, we can parametrize the system of three inelastic particles as:
\begin{align}
\big( \hspace{-2mm} \underbrace{\omega_1,W_1}_{\substack{\text{position and}\\ \text{velocity of}\\ \text{the particle}\\ \circled{1}}},\underbrace{d,\omega_2,W_2}_{\substack{\text{position and}\\ \text{velocity of}\\ \text{the particle}\\ \circled{2}}} \hspace{-0.5mm} \big) \in \mathbb{S}^{d-1} \times \mathbb{R}^d \times \mathbb{R}_+^* \times \mathbb{S}^{d-1} \times \mathbb{R}^d.
\end{align}
We need in total $4d-1$ real variables in order to describe the initial configuration of the system. We can also consider how such variables evolve when a collision between the particles \circled{0} and \circled{2} takes place. The evolution law of these variables are summarized in the mapping we introduce in the following definition.

\begin{defin}[Complete one-collision mapping]
Let us consider three particles \circled{0}, \circled{1}, \circled{2} in $\mathbb{R}^d$, of respective positions $x_i$ and velocities $v_i$ ($x_i,v_i \in \mathbb{R}^d \ \forall i \in \{ 1, 2, 3 \}$) described by the configuration:
\begin{align}
\left(\omega_1,W_1,d,\omega_2,W_2\right) \in \mathbb{S}^{d-1} \times \mathbb{R}^d \times \mathbb{R}_+^* \times \mathbb{S}^{d-1} \times \mathbb{R}^d,
\end{align}
where:
\begin{itemize}
\item $\omega_1 = x_1-x_0$ ( \circled{0} and \circled{1} are initially in contact),
\item $W_1 = v_1-v_0$,
\item $1+d = \vert x_2 - x_0 \vert > 1$ ( \circled{0} and \circled{2} are initially separated),
\item $(1+d) \omega_2 = x_2-x_0$,
\item $W_2 = v_2-v_0$.
\end{itemize}
Let us assume in addition that \circled{0} and \circled{1} are in a post-collisional configuration, and that a collision of type \circled{0}-\circled{2} is the next collision that will take place:
\begin{align}
\eta_1 = W_1 \cdot \omega_1 > 0,\hspace{5mm} \eta_2 = W_2 \cdot \omega_2 < 0, \hspace{2mm} \text{and} \hspace{2mm} \zeta = \frac{d(2+d) \vert W_2 \vert^2}{(1+d)^2\eta_2^2} < 1.
\end{align}
We define then the \emph{complete one-collision mapping} as the function:
\begin{align*}
\mathfrak{C} :
\left\{
\begin{array}{clc}
\mathbb{S}^{d-1} \times \mathbb{R}^d \times \mathbb{R}_+^* \times \mathbb{S}^{d-1} \times \mathbb{R}^d &\rightarrow &\mathbb{S}^{d-1} \times \mathbb{R}^d \times \mathbb{R}_+^* \times \mathbb{S}^{d-1} \times \mathbb{R}^d, \\
\left(\omega_1,W_1,d,\omega_2,W_2\right) &\mapsto &\left(\omega_1',W_1',d',\omega_2',W_2'\right),
\end{array}
\right.
\end{align*}
where:
\begin{align}
\label{EQUATSS2.2IterationSystm1}
\left\{
\begin{array}{rl}
\omega_1' &= \displaystyle{\frac{\omega_1 + \tau W_1}{(1+d')}},\vspace{1mm} \\
\eta_1' &= \displaystyle{\frac{1}{(1+d')} \left( \eta_1 + \tau \vert W_1 \vert^2 \right) - \frac{(1+r)}{2} \left( \omega_1'\cdot\omega_2' \right) \left( (1+d)\eta_2 + \tau \vert W_2 \vert^2 \right)}, \vspace{3mm}\\
(W_1^\perp)' &=W_1 - W_1\cdot \omega_1' \omega_1' + \displaystyle{\frac{(1+r)}{2}} \left( W_2\cdot\omega_2' \right) \big[ (\omega_1'\cdot\omega_2') \omega_1' - \omega_2' \big] \vspace{1mm}\\
&\hspace{20mm}= W_1 - \displaystyle{\frac{(\eta_1 + \tau \vert W_1 \vert^2)}{(1+d')}} \omega_1' + \displaystyle{\frac{(1+r)}{2}} \left( (1+d)\eta_2 + \tau \vert W_2 \vert^2 \right) \big[ (\omega_1'\cdot\omega_2') \omega_1' - \omega_2' \big], \vspace{3mm}\\
d' &= \displaystyle{\sqrt{ 1 + 2 \eta_1 \tau + \vert W_1 \vert^2 \tau^2} - 1}, \vspace{3mm}\\
\omega_2' &= (1+d)\omega_2 + \tau W_2, \vspace{3mm}\\
\eta_2' &= -r(1+d) \eta_2 - r\tau \vert W_2 \vert^2, \vspace{3mm}\\
(W_2^\perp)' &= W_2 - (W_2\cdot\omega_2')\omega_2' \vspace{2mm}\\
&\hspace{20mm}= W_2 - \left( (1+d)\eta_2 + \tau \vert W_2 \vert^2 \right) \left( (1+d)\omega_2 + \tau W_2 \right),
\end{array}
\right.
\end{align}
\noindent
with $\omega_1'\cdot \omega_2'$ defined as:
\begin{align}
\label{EQUATSS2.2CosinAngleVers1}
\omega_1'\cdot\omega_2' &= \left( \frac{\omega_1+\tau W_1}{(1+d')} \right) \cdot \left( (1+d)\omega_2+\tau W_2 \right) \nonumber\\
&= \frac{(1+d)}{(1+d')} \left( \omega_1 \cdot \omega_2 \right) + \left[ \frac{1}{(1+d')} \omega_1 \cdot W_2 + \frac{(1+d)}{(1+d')}\omega_2 \cdot W_1 \right] \tau + \frac{W_1 \cdot W_2}{(1+d')} \tau^2,
\end{align}
and $\tau$ is defined as
\begin{align}
\label{EQUATSS2.2TempsColliTauV3}
\tau = \frac{(1+d)(-\eta_2)}{\vert W_2 \vert^2} \left[ 1 - \sqrt{1 - \zeta} \right], \text{  and  } 
\zeta = \frac{d(2+d) \vert W_2 \vert^2}{(1+d)^2 \eta_2^2} \geq 0.
\end{align}
\end{defin}
\noindent
The mapping $\mathfrak{C}$ introduced in the previous definition encodes completely the dynamics of the system of three inelastic particles, describing the evolution of the particles between a collision of type \circled{0}-\circled{1} and a collision of type \circled{0}-\circled{2}. All the distances and velocities are measured from the position $x_0$ and velocity $v_0$ of the particle \circled{0}, central for the pair of collisions \circled{0}-\circled{1}, \circled{0}-\circled{2}.

\begin{remar}
$\mathfrak{C}$ defines a mapping from $\mathbb{S}^{d-1} \times \mathbb{R}^d \times \mathbb{R}_+^* \times \mathbb{S}^{d-1} \times \mathbb{R}^d$ into itself: we have a $(4d-1)$-dimensional dynamical system. Note that the system is $7$-dimensional for $d=2$, and $11$-dimensional for $d=3$.
\end{remar}

\subsection{The main properties of the nearly-linear inelastic collapse}
\label{SSECTIO2.3ProprGenerColla}

As we indicated already in the introduction, the present paper is focused on studying systems of three inelastic particles undergoing a particular type of collapse. In the companion paper \cite{DoVeAr1}, we proved that only two collision orders are possible for a system of three inelastic particles: up to relabel the particles, the collision order becomes eventually, either the infinite repetition of the period \circled{0}-\circled{1}, \circled{0}-\circled{2}, \circled{1}-\circled{2} (we called such a configuration a \emph{triangular collapse}, and we expect that such a configuration is unstable under perturbations of the initial data in the phase space), or the infinite repetition the period \circled{0}-\circled{1}, \circled{0}-\circled{2}. We focus our attention here on the later period, already studied by Zhou and Kadanoff (\cite{ZhKa996}). We proved in the companion paper \cite{DoVeAr1} that such a configuration is stable under perturbations of the initial data in the phase space, at least for some particular initial data. We called such a collapsing configuration a \emph{nearly-linear inelastic collapse}.\\
In the case of a nearly-linear collapse where \circled{0} plays the role of the central particle (the only particle involved in all the collisions), the normal components of the relative velocities $W_1 = v_1-v_0$ and $W_2 = v_2-v_0$ vanish, but there is no reason to believe that the tangential components vanish as well. On the contrary, we believe that in the generic case (that is, the only case associated to a set of initial data with a positive measure in the phase space), the norms of the relative velocities $W_1$ and $W_2$ are non zero at the limiting time of the collapse. We will therefore study only such configurations. As in \cite{ZhKa996}, we study the systems of three collapsing inelastic particles, with a priori assumptions on the final state of such systems.\\
The purpose of this section is to recall general results concerning the nearly-linear collapse, established in full details in the companion paper \cite{DoVeAr1}. Let us start with the definition of the nearly-linear collapse.

\begin{defin}[Nearly-linear inelastic collapse]
\label{DEFINSS3.1Collapse_Inelas}
Let $r \in\ ]0,1[$ be a positive real number smaller than $1$, and let us consider a system of three particles \circled{0}, \circled{1} and \circled{2} evolving according to the $r$-inelastic hard sphere flow introduced in Section \ref{SSECTI02.1_Le_Modele}, on a time interval $[0,\tau^*[$, with $\tau^* > 0$.\\
We say that the system undergoes a \emph{nearly-linear inelastic collapse} at the time $\tau^*$, called the \emph{time of the collapse}, or the \emph{collapsing time}, if there exists an increasing sequence of positive times $(t_n)_{n\in\mathbb{N}}$, with $t_0 = 0$, such that $t_n \xrightarrow[n \rightarrow + \infty]{}\tau^*$, where the times $t_n$ correspond exactly to the times of collisions between the particles, such that $\sup_{n}t_n$ is the only accumulation point of the sequence $\left(t_n\right)_{n\in\mathbb{N}}$, and in addition such that the sequence of collisions becomes eventually the infinite repetition of the pairs of collisions \circled{0}-\circled{1}, \circled{0}-\circled{2}.\\
In particular, for any $t \in \ ]t_n,t_{n+1}[$, the system of particles evolves according to the free flow.\\
The particle \circled{0}, involved in infinitely collisions with the two other particles \circled{1} and \circled{2}, will be called the \emph{central particle}. The particles \circled{1} and \circled{0}, involved in infinitely many collisions only with the particle \circled{0}, will be called, alternatively, the \emph{colliding}, and the \emph{spectator} particles.
\end{defin}

\subsubsection{Converging quantities for the nearly-linear inelastic collapse}

Let us now present the elementary properties of a system of three inelastic particles that undergoes a nearly-linear inelastic collapse. We summarize the results of \cite{DoVeAr1} in the following proposition.

\begin{propo}[Vanishing and converging variables in the case of a nearly-linear collapse]
\label{PROPOSS3.2ConveInterTemps}
Let $r \in\ ]0,1[$ be a positive number smaller than $1$, and let us consider a system of three particles \circled{0}, \circled{1} and \circled{2} evolving according to the $r$-inelastic hard sphere flow, on a time interval $[0,\tau^*[$, and undergoing a nearly-linear inelastic collapse at time $\tau^* > 0$.\\
Then, the following properties hold true.
\begin{itemize}
\item The sequence $\left( \tau_n \right)_n$ of the intervals $\tau_n = t_n - t_{n-1}$ between two consecutive collisions is summable, that is,  $\left( \tau_n \right)_n \in\ \ell^1$. In particular, we have:
\begin{align}
\tau_n \xrightarrow[n \rightarrow +\infty]{} 0.
\end{align}
\item The particles \circled{0} and \circled{1}, resp. \circled{0} and \circled{2}, are in contact at the collapsing time $\tau^*$, that is, the relative positions $x_1(t_n)-x_0(t_n)$ and $x_2(t_n)-x_0(t_n)$ converge respectively towards $\overline{\omega}_1$ and $\overline{\omega}_2$ as $n \rightarrow +\infty$, and we have:
\begin{align}
\vert \overline{\omega}_1 \vert = 1 \hspace{5mm} \text{and} \hspace{5mm} \vert \overline{\omega}_2 \vert = 1.
\end{align}
\item We have:
\begin{align}
\eta_{1,n},\eta_{2,n} \xrightarrow[n \rightarrow +\infty]{} 0,
\end{align}
where $\eta_{i,n}$ denotes the normal component of the relative velocity between the particles \circled{0} and \circled{i}, that is $\eta_{i,n} = W_i\cdot \omega_i = (v_i-v_0)\cdot(x_i-x_0)$.\\
In addition, the normal components $\eta_{1,n}$ and $\eta_{2,n}$ converge exponentially fast to zero, at a rate at least equal to
\begin{align*}
\max\left(\frac{(1+r)}{2} \big\vert \cos\overline{\theta} \big\vert,r\right),
\end{align*}
where $\overline{\theta}$ denotes the angle between the angular parameters $\omega_1 = x_1-x_0$ and $\omega_2 = x_2-x_0$ at the limiting time $\tau^*$ of the collapse. As a consequence, the series of the normal components $\sum_{n \geq 0} \vert \eta_{1,n} \vert$ and $\sum_{n \geq 0}\vert \eta_{2,n} \vert$ are both converging.
\item The sequences of the tangential components $\left(W_1^\perp\right)_{n\in\mathbb{N}}$ and $\left(W_2^\perp\right)_{n\in\mathbb{N}}$ of the relative velocities $W_1 = v_1-v_0$ and $W_2 = v_2-v_0$ are converging as $n\rightarrow+\infty$.
\end{itemize}
\end{propo}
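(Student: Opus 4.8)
The plan is to treat the four assertions in an order that avoids any circular dependence, the only genuinely delicate one being the exponential decay rate. First I would record the single auxiliary fact on which everything rests: since every collision dissipates kinetic energy while momentum is conserved, the total kinetic energy in the center-of-mass frame is nonincreasing, so all velocities $v_i$, and hence the relative velocities $W_1,W_2$ and their normal components $\eta_{1,n},\eta_{2,n}$, remain uniformly bounded along the whole trajectory. The first bullet is then immediate: by the definition of the collapse the collision times satisfy $t_n \nearrow \tau^* < +\infty$, so $\sum_{k\le n}\tau_k = t_n - t_0 \to \tau^*$, which means $(\tau_n)_n \in \ell^1$ and in particular $\tau_n \to 0$.

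For the second bullet I would show separately that the relative positions converge and that their limits have unit norm. Between two consecutive collisions each particle moves freely, so $\vert x_i(t_{n+1}) - x_i(t_n)\vert \le \Vert v_i\Vert_\infty\,\tau_{n+1}$; since $(\tau_n)_n \in \ell^1$ and the velocities are bounded, the sequences $(x_i(t_n))_n$ are Cauchy, and therefore $x_i(t_n) - x_0(t_n) \to \overline\omega_i$. To obtain $\vert\overline\omega_i\vert = 1$ I would use the explicit formula for $d'$ in \eqref{EQUATSS2.2IterationSystm1}, namely $1 + d' = \sqrt{1 + 2\eta_1\tau + \vert W_1\vert^2\tau^2}$: with $\tau_n \to 0$ and $\eta,\vert W\vert$ bounded, the gap $d_n \to 0$, so both relative distances tend to $1$.

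The third bullet is the heart of the matter and the step I expect to be the main obstacle. The strategy is to extract from the evolution laws \eqref{EQUATSS2.2IterationSystm1} the leading-order recursion for the pair of normal components: the colliding particle obeys $\eta_2' = -r(1+d)\eta_2 - r\tau\vert W_2\vert^2$, which contracts its normal component by a factor close to $r$, whereas the spectator obeys the law for $\eta_1'$, whose coupling to the collider is governed by the geometric factor $\omega_1'\cdot\omega_2' \to \cos\overline\theta$ together with the restitution weight $(1+r)/2$. Once the directions have converged (second bullet), this becomes a non-autonomous linear recursion on $(\eta_{1,n},\eta_{2,n})$ whose coefficient matrix tends to a fixed matrix depending only on $r$ and $\overline\theta$; I would bound the asymptotic amplification by $\max\big(\tfrac{1+r}{2}\vert\cos\overline\theta\vert, r\big)$ and transfer this spectral bound to the actual sequence by a perturbation argument (continuity of the decay rate under asymptotically vanishing perturbations of the recursion), yielding the stated exponential rate and hence the convergence of $\sum\vert\eta_{1,n}\vert$ and $\sum\vert\eta_{2,n}\vert$. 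The difficulty is twofold: the remainder terms carrying $\tau\vert W\vert^2$ are a priori of the same order as the leading normal components (they become negligible only in the Zhou--Kadanoff regime, which is \emph{not} assumed here), so they must be absorbed into the recursion rather than discarded; and the rate itself depends on the limit angle $\overline\theta$, which forces the decay estimate to be carried out only after the convergence of the directions has been secured.

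Finally, for the fourth bullet I would exploit the summability just obtained. Reading off from \eqref{EQUATSS2.2IterationSystm1} the evolution of the tangential parts, for instance $(W_2^\perp)' = W_2 - \big((1+d)\eta_2 + \tau\vert W_2\vert^2\big)\big((1+d)\omega_2 + \tau W_2\big)$, one sees that the increment of $W_i^\perp$ at the $n$-th collision is bounded by $C(\vert\eta_{i,n}\vert + \tau_n)$. Since both $\sum\vert\eta_{i,n}\vert$ and $\sum\tau_n$ converge, these increments are absolutely summable, so the telescoped sequences $(W_1^\perp)_n$ and $(W_2^\perp)_n$ are Cauchy and therefore converge, which is the last assertion.
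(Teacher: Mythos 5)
A preliminary remark: the paper does not actually prove this proposition. It is recalled from the companion paper \cite{DoVeAr1} (``the proofs of all the results presented above are performed in \cite{DoVeAr1}''), so your proposal can only be assessed on its own merits rather than against an in-text argument. Your treatment of the first, second and fourth bullets is correct and essentially the only possible one: summability of $(\tau_n)_n$ is immediate from $t_n\nearrow\tau^*<+\infty$; boundedness of the velocities together with $\sum_n\tau_n<\infty$ makes the positions Cauchy, and $d_n'=\sqrt{1+2\eta_1\tau+\vert W_1\vert^2\tau^2}-1=\mathcal{O}(\tau_n)\to 0$ forces the unit norms; and once $\sum_n\vert\eta_{i,n}\vert<\infty$ is secured, the telescoping bound on the increments of $W_i^\perp$ closes the last bullet.

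The gap is in the third bullet, precisely where you anticipate difficulty. You propose to view $(\eta_{1,n},\eta_{2,n})$ as a non-autonomous linear recursion with converging coefficient matrix and to obtain the rate $\max\big(\tfrac{1+r}{2}\vert\cos\overline{\theta}\vert,r\big)$ as a spectral bound plus perturbation. But the one-collision linear map read off from \eqref{EQUATSS2.2IterationSystm1} is triangular, of the schematic form $\eta_1\mapsto\eta_1-\tfrac{1+r}{2}(-\cos\overline{\theta})(-\eta_2)+\dots$, $\eta_2\mapsto-r\eta_2+\dots$: its diagonal entries are $1$ and $-r$, its spectral radius is $1$ (the spectator's normal component is not contracted), and no perturbation argument applied to this map yields decay at all. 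What actually produces the stated rate is not the spectrum but the sign constraints forced by the assumed collision order: the spectator's normal component is positive before the collision and must be negative afterwards for the next collision to occur, so the two terms in $\eta_1'=\tfrac{1}{1+d'}(\eta_1+\tau\vert W_1\vert^2)-\tfrac{1+r}{2}(\omega_1'\cdot\omega_2')\big((1+d)\eta_2+\tau\vert W_2\vert^2\big)$ have opposite signs and one gets $\vert\eta_1'\vert\le\tfrac{1+r}{2}\vert\omega_1'\cdot\omega_2'\vert(1+d)\vert\eta_2\vert$, while the identity $(1+d)\eta_2+\tau\vert W_2\vert^2=-(1+d)(-\eta_2)\sqrt{1-\zeta}$ gives $\vert\eta_2'\vert=r(1+d)\vert\eta_2\vert\sqrt{1-\zeta}\le r(1+d)\vert\eta_2\vert$. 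These two inequalities deliver $\max(\vert\eta_1'\vert,\vert\eta_2'\vert)\le\big(\max\big(\tfrac{1+r}{2}\vert\cos\overline{\theta}\vert,r\big)+o(1)\big)\vert\eta_2\vert$ per collision with no spectral analysis, and they simultaneously dispose of the remainder $\tau\vert W_1\vert^2$ that you rightly worry about: it enters with the favourable sign and is discarded in the upper bound, rather than ``absorbed into the recursion''. Without invoking these sign constraints, your scheme would at best give the spectral radius of the two-collision composition, and only under the extra hypothesis $\big(\tfrac{1+r}{2}\cos\overline{\theta}\big)^2\ge 4r$ guaranteeing real eigenvalues (otherwise the modulus is $\sqrt{r}$, which can exceed the claimed rate); that inequality is the Zhou--Kadanoff existence condition \eqref{EQUATSS3.4ZKCl1CondiExist}, a consequence of the collapse rather than one of the hypotheses of the proposition, so assuming it would be circular.
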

\noindent
When the system experiences a nearly-linear inelastic collapse, the particles arrange in a nearly-linear chain and are in contact at the limiting time $\tau^*$ of collapse, hence the name of the configuration. In the middle of this chain lies the central particle \circled{0}. The two external particles \circled{1} and \circled{2} form an angle $\overline{\theta} = \left(\overline{\omega_1},\overline{\omega_2}\right)$. We will see that such an angle cannot be smaller than $\pi/2$.\\
According to Proposition \ref{PROPOSS3.2ConveInterTemps}, the relative velocities are converging at the limiting time of the collapse. Let us then denote by $a$ and $b$ the squares of the respective norms of the relative velocities $W_1$ and $W_2$ at the collapsing time $\tau^*$, that is:
\begin{align}
\left\vert W_{1,n} \right\vert \xrightarrow[n \rightarrow +\infty]{} a \hspace{3mm} \text{and} \hspace{3mm} \left\vert W_{2,n} \right\vert \xrightarrow[n \rightarrow +\infty]{} b.
\end{align}
We will assume throughout this paper that $a,b > 0$.\\
\newline
We now have a clear picture of the configuration of the system at the limiting time of collapse $\tau^*$. Let us recall that the proofs of all the results presented above are performed in \cite{DoVeAr1}.

\subsubsection{The results of Zhou and Kadanoff}
\label{SSSCT2.3.2ResulZhou&Kadan}

We recall here the only results, to the best of our knowledge, existing in the literature (except the companion paper \cite{DoVeAr1}) concerning the collapse of three inelastic particles in dimension $d \geq 2$. The results are due to Zhou and Kadanoff (\cite{ZhKa996}), and they describe, for a system experiencing a nearly-linear inelastic collapse, conditions on the angle $\overline{\theta}$ (formed by the particles at the limiting time of collapse) in terms of the restitution coefficient $r$, in order to have existence and stability of the collapse. The results can be summarized in the two following points.
\begin{itemize}
\item The existence of such a nearly-linear inelastic collapse is possible only if:
\begin{align}
\label{EQUATSS3.4ZKCl1CondiExist}
-\cos \theta \geq \frac{4\sqrt{r}}{1+r} \cdotp
\end{align}
\item In addition, such a nearly-linear inelastic collapse is stable (with respect to perturbations of the initial data leading to such a collapse) only if:
\begin{align}
\label{EQUATSS3.4ZKCl2CondiStabi}
-\cos \theta > \frac{2r^{1/3}(1+r^{1/3})}{1+r} \cdotp
\end{align}
\end{itemize}
In particular, no nearly-linear inelastic collapse can exist if the restitution coefficient is larger than $r_\text{exist.}$, and no such collapse can be stable if the restitution coefficient is larger than $r_\text{stabi.}$, where:
\begin{align}
r_\text{exist.} = 7-4\sqrt{3} \simeq 0.07179677 \hspace{5mm} \text{and} \hspace{5mm} r_\text{stabi.} = 9 - 4\sqrt{5} \simeq 0.05572809.
\end{align}

\subsection{Asymptotic behaviour of the variables describing the full dynamical system}
\label{SSECTIO2.4CompoAsympVaria}

In this section, we will investigate the asymptotic behaviour of the variables of the dynamical system \eqref{EQUATSS2.2IterationSystm1} to describe a collision, assuming that the system of three inelastic particles experiences a nearly-linear inelastic collapse. As we mentioned above already, we will also assume that the relative velocities $W_1$ and $W_2$ will not vanish at the time of the collapse. Let us start with some notations.

\subsubsection{Notations}

In the rest of this section, concerning a variable $y$, we will denote by
\begin{align*}
y = \mathcal{O}\left(x_1,\dots,x_n\right)
\end{align*}
when this variable $y$ is bounded by the $n$ other variables $x_1,\dots,x_n$ as $y,x_1,\dots,x_n \rightarrow 0$, that is, if there exist $n$ positive constants $C_1,\dots,C_n > 0$ such that $
\vert y \vert \leq C_1 \vert x_1 \vert + \dots C_n \vert x_n \vert$ for $y,x_1,\dots,x_n$ in a neighbourhood of $0$. If the two variables $y$ and $z$ are bounded by each other as $y,z \rightarrow 0$, that is, if there exist two positive constants $C_1,C_2 > 0$ such that $\vert y \vert \leq C_1 \vert z \vert$ and $\vert z \vert \leq C_2 \vert y \vert$ for $y,z$ in a neighbourhood of $0$, such a situation will be denoted by
\begin{align*}
y \lessgtr z.
\end{align*}
In the same way, we will denote by
\begin{align*}
y = o\left(x_1,\dots,x_n\right)
\end{align*}
when the variable $y$ is negligible with respect to the $n$ variables $x_1,\dots,x_n$ as $x_1,\dots,x_n \rightarrow 0$, that is, if there exist $n$ functions $\varphi_1,\dots,\varphi_n:\mathbb{R}_+\rightarrow \mathbb{R}_+$ vanishing at $0$ and such that $\vert y \vert \leq \varphi_1\left(\vert x_1 \vert\right) + \dots + \varphi_n\left(\vert x_n \vert\right)$, or equivalently, if there exists a function $\varphi:\mathbb{R}_+\rightarrow \mathbb{R}_+$ vanishing at $0$ such that $\vert y \vert \leq \varphi\left( \vert x_1 \vert + \dots \vert x_n \vert \right)$ for $y,x_1,\dots,x_n$ in a neighbourhood of $0$.

\subsubsection{Asymptotic comparison of the variables, for non vanishing tangential velocities}

As a matter of fact, all the vanishing quantities involved in the description \eqref{EQUATSS2.2IterationSystm1} of a collapsing system of particles can all be estimated with three variables only: the normal components $\eta_1$ and $\eta_2$ of the relative velocities, and the difference between two times of collisions $\tau$ (see Proposition 4 in \cite{DoVeAr1}). As a consequence, we expect that understanding the asymptotic behaviour of these three real variables only would allow to understand the full dynamical system \eqref{EQUATSS2.2IterationSystm1}. Besides, under the assumption that the norms of the relative velocities are bounded from below by a positive constant (after a sufficiently large number of collisions), we will now see that the variable $\tau$ is also estimated by the normal components $\eta_1$ and $\eta_2$. Indeed, the most important consequence of the assumption $\vert W_{2,n}^\perp \vert \geq \overline{w} > 0$ is that in this case the Zhou-Kadanoff parameter $\zeta_n$ satisfies
\begin{align}
\zeta_n \lessgtr \frac{d_n}{\eta_{2,n}^2} \hspace{5mm} \text{as } n\rightarrow +\infty.
\end{align}
Therefore, since in order to have a collapse this parameter has always to remain below $1$, we deduce that
\begin{align}
\label{EQUATSS3.3CompaAsymp_d__3}
d_n = \mathcal{O}(\eta_{2,n}^2) \hspace{5mm} \text{as } n\rightarrow +\infty.
\end{align}
Considering now the formula \eqref{EQUATSS2.2TempsColliTauV3} of the time of collision $\tau_n$, we find:
\begin{align}
\label{EQUATSS3.3CompaAsympTauV2}
\tau_n \leq \frac{2d_n}{\vert \eta_{2,n} \vert} = \underbrace{\frac{2d_n}{\eta_{2,n}^2}}_{\text{bounded}} \cdot\ \vert \eta_{2,n} \vert,
\end{align}
(keeping in mind that $\eta_2$ is negative) so that in particular we have the estimate:
\begin{align}
\tau_n = \mathcal{O}(\eta_{2,n}).
\end{align}
\noindent
This observation and its consequences are summarized in the following proposition.

\begin{propo}[Asymptotic comparison of the variables in the collapsing regime, assuming that the relative velocities do not vanish]
\label{PROPOSS3.3AsympCompa__II_}
Let $r \in\ ]0,1[$ be a positive real number smaller than $1$, and let us consider a system of three inelastic particles \circled{0}, \circled{1} and \circled{2} evolving according to the $r$-inelastic hard sphere flow on a time interval $[0,\tau^*[$, and undergoing a nearly-linear inelastic collapse at time $\tau^* > 0$. Let us assume that there exists $\overline{w}>0$ such that $\vert W_{2,n} \vert \geq \overline{w}$ for all $n$ large enough.\\
Let us denote by $t_{\varphi(n)}$ the different times when a collision of type \circled{0}-\circled{1} takes place, immediately followed at times $t_{\varphi(n)+1} = t_{\varphi(n)}+\tau_n$ by a collision of type \circled{0}-\circled{2}. At time $t_{\varphi(n)}$, let us denote by $d_n$ the distance between the particles \circled{0} and \circled{2}, by $\omega_{1,n}$ and $\omega_{2,n}$ the respective angular parameters between the pairs \circled{0} and \circled{1}, and \circled{0} and \circled{2} respectively, by $\eta_{1,n}$ and $\eta_{2,n}$ the respective normal components of the relative velocities, and by $W_{1,n}^\perp$ and $W_{2,n}^\perp$ the respective tangential components of the relative velocities of the pairs \circled{0} and \circled{1}, and \circled{0} and \circled{2} respectively. At time $t_{\varphi(n)+1} = t_{\varphi(n)}+\tau_n$, let us denote by $d'_n$ the distance between the particles \circled{0} and \circled{1}, and by $\omega_{1,n}'$ and $\omega_{2,n}'$ the angular parameters, by $\eta_{1,n}'$ and $\eta_{2,n}'$ the normal components, and by $(W_{1,n}^\perp)'$ and $(W_{2,n}^\perp)'$ the respective tangential components of the relative components of the pairs \circled{0} and \circled{1}, and \circled{0} and \circled{2} respectively.\\
Then, we have the following asymptotic relations, as $n \rightarrow +\infty$:
\begin{align}
\label{EQUATSS3.3PROPOComp2Tau_d}
\tau_n = \mathcal{O}(\eta_{2,n}), \hspace{3mm} \text{and} \hspace{5mm} d_n = \mathcal{O}(\eta_{2,n}^2),
\end{align}
\begin{align}
\label{EQUATSS3.3PROPOComp2_d'__}
d'_n = \mathcal{O}\left(\eta_{1,n}\tau_n\right),
\end{align}
\begin{align}
\label{EQUATSS3.3PROPOComp2_eta'}
\eta_{1,n}'  = \mathcal{O}\left(\eta_{1,n},\eta_{2,n}\right), \hspace{3mm} \text{and} \hspace{5mm} \eta_{2,n}' = \mathcal{O}(\eta_{2,n}),
\end{align}
\begin{align}
\label{EQUATSS3.3PROPOComp2omga'}
\left\vert \omega_{1,n}'-\omega_{1,n} \right\vert = \mathcal{O} \left( \tau_n  \right), \hspace{3mm} \text{and} \hspace{5mm} \left\vert \omega_{2,n}'-\omega_{2,n} \right\vert = \mathcal{O}\left(\tau_n\right)
\end{align}
and finally:
\begin{align}
\label{EQUATSS3.3PROPOComp2PerpD} 
\left\vert (W_{1,n}^\perp)'-W_{1,n}^\perp \right\vert &= \mathcal{O} \left(\eta_{2,n} \right), \hspace{3mm} \text{and} \hspace{5mm} \left\vert (W_{2,n}^\perp)'-W_{2,n}^\perp \right\vert = \mathcal{O}\left(\tau_n\right).
\end{align}
\end{propo}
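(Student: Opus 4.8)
The plan is to prove each estimate by direct substitution into the explicit evolution laws \eqref{EQUATSS2.2IterationSystm1}, expanding and collecting orders of magnitude, using as black boxes Proposition \ref{PROPOSS3.2ConveInterTemps} (so that $d_n,d_n',\eta_{1,n},\eta_{2,n}\to 0$, the $\omega_{i,n}$ converge, and $\vert W_{1,n}\vert,\vert W_{2,n}\vert$ stay in a fixed compact subinterval of $\,]0,+\infty[\,$, whence $\omega_{1,n}'\cdot\omega_{2,n}'\to\cos\overline{\theta}$ is bounded), together with the first line \eqref{EQUATSS3.3PROPOComp2Tau_d} already established before the statement. The one preliminary I would add is the two-sided refinement $d_n \lessgtr \tau_n\vert\eta_{2,n}\vert$: inserting $\zeta_n \lessgtr d_n/\eta_{2,n}^2$ into the exact formula \eqref{EQUATSS2.2TempsColliTauV3} and using the identity $1-\sqrt{1-\zeta}=\zeta/(1+\sqrt{1-\zeta})\lessgtr \zeta$ for $\zeta\in[0,1[$ gives $\tau_n\lessgtr \vert\eta_{2,n}\vert\,\zeta_n\lessgtr d_n/\vert\eta_{2,n}\vert$. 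In particular $d_n=\mathcal{O}(\tau_n)$, which upgrades $d_n=\mathcal{O}(\eta_{2,n}^2)$ to the size that is actually needed below.

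With this in hand the ``diagonal'' estimates are immediate. For $\eta_{2,n}'$ the formula gives $\eta_2'=-r(1+d)\eta_2-r\tau\vert W_2\vert^2=\mathcal{O}(\eta_2)+\mathcal{O}(\tau)=\mathcal{O}(\eta_2)$. For $\eta_{1,n}'$ the first summand is $\mathcal{O}(\eta_1,\tau)=\mathcal{O}(\eta_1,\eta_2)$, and the second, carrying the bounded factor $\omega_1'\cdot\omega_2'$ and $(1+d)\eta_2+\tau\vert W_2\vert^2=\mathcal{O}(\eta_2)$, is $\mathcal{O}(\eta_2)$. For $\omega_{1,n}'$ I would write $\omega_1'-\omega_1=(\tau W_1-d'\omega_1)/(1+d')$ and bound $\tau\vert W_1\vert=\mathcal{O}(\tau)$ and $d'\le \eta_1\tau+\tfrac{1}{2}\vert W_1\vert^2\tau^2=\mathcal{O}(\tau)$; for $\omega_{2,n}'$ one has $\omega_2'-\omega_2=d\,\omega_2+\tau W_2$ and uses precisely the refinement $d=\mathcal{O}(\tau)$.

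The sharp tangential estimates require genuine cancellation. For $(W_{2,n}^\perp)'$ the naive bound is only $\mathcal{O}(\eta_2)$; expanding
$(W_2^\perp)'-W_2^\perp=\eta_2\omega_2-\big[(1+d)\eta_2+\tau\vert W_2\vert^2\big]\big[(1+d)\omega_2+\tau W_2\big]$
one must exploit the cancellation of the two leading $\eta_2\omega_2$ terms, which leaves $-\eta_2\omega_2(2d+d^2)$ plus terms carrying an explicit factor $\tau$; the relation $\eta_2 d=\mathcal{O}(\eta_2\cdot\tau\vert\eta_2\vert)=\mathcal{O}(\tau)$ then yields the claimed $\mathcal{O}(\tau)$. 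The estimate for $(W_{1,n}^\perp)'$ is analogous: after subtracting $W_1^\perp=W_1-\eta_1\omega_1$, the $\mathcal{O}(\eta_1)$ contributions combine as $\eta_1\big(\omega_1-\omega_1'/(1+d')\big)=\mathcal{O}(\eta_1\tau)=\mathcal{O}(\eta_2)$, while the remaining terms are $\mathcal{O}(\tau)=\mathcal{O}(\eta_2)$ directly.

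I expect the genuine obstacle to be $d_n'=\mathcal{O}(\eta_{1,n}\tau_n)$. From $d'=\tau(2\eta_1+\vert W_1\vert^2\tau)/\big(1+\sqrt{1+2\eta_1\tau+\vert W_1\vert^2\tau^2}\big)$ one only obtains $d'\lessgtr \eta_1\tau+\tau^2$, so the bound is equivalent to absorbing $\tau^2$ into $\eta_1\tau$, i.e.\ to the one-sided comparison $\tau_n=\mathcal{O}(\eta_{1,n})$, which (via $\tau\lessgtr d/\vert\eta_2\vert$ and $d=\mathcal{O}(\eta_2^2)$) reduces to a comparison of the two normal components $\eta_{1,n}$ and $\eta_{2,n}$. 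This comparison is not a formal consequence of the substitutions: it has to be extracted from the collapse dynamics itself — through the post-collisional relation linking $\eta_{1,n}$ to the incoming normal speed of the \circled{0}-\circled{1} collision together with the exponential rates of Proposition \ref{PROPOSS3.2ConveInterTemps}, or by invoking the corresponding estimate of \cite{DoVeAr1}. Establishing it cleanly is, I expect, the crux; once it is available, every remaining inequality is the bookkeeping of orders described above.
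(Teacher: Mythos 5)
Your plan is sound and in fact supplies more detail than the paper does: around Proposition \ref{PROPOSS3.3AsympCompa__II_} the authors only derive \eqref{EQUATSS3.3PROPOComp2Tau_d} (from $\zeta_n \lessgtr d_n/\eta_{2,n}^2$, the constraint $\zeta_n<1$, and the formula \eqref{EQUATSS2.2TempsColliTauV3}), and import every remaining estimate from Proposition 4 of \cite{DoVeAr1}; the substitutions you carry out are never written there. Your additions are correct and genuinely needed for a self-contained substitution argument. The two-sided refinement $d_n \lessgtr \tau_n\vert\eta_{2,n}\vert$, obtained from $1-\sqrt{1-\zeta}=\zeta/(1+\sqrt{1-\zeta})$, is exactly what makes $d_n=\mathcal{O}(\tau_n)$ available in $\omega_{2,n}'-\omega_{2,n}=d_n\omega_{2,n}+\tau_n W_{2,n}$ and what renders the leftover term $\eta_2\omega_2(2d+d^2)$ of order $\mathcal{O}(\tau\eta_2^2)=o(\tau)$ after the cancellation of the leading $\eta_2\omega_2$ contributions in $(W_{2,n}^\perp)'-W_{2,n}^\perp$; the one-sided bound $d_n=\mathcal{O}(\eta_{2,n}^2)$ alone gives no comparison of $d_n$ with $\tau_n$ from below, so without your refinement those two items would not close. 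The treatments of $\eta_{1,n}'$, $\eta_{2,n}'$, $\omega_{1,n}'$ and $(W_{1,n}^\perp)'$ are exactly what the formulas \eqref{EQUATSS2.2IterationSystm1} give.

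Your diagnosis of \eqref{EQUATSS3.3PROPOComp2_d'__} as the one estimate that does not follow from substitution is also accurate: since $d_n'\gtrsim \tau_n^2$ unconditionally, $d_n'=\mathcal{O}(\eta_{1,n}\tau_n)$ is equivalent to $\tau_n=\mathcal{O}(\eta_{1,n})$, and neither the Zhou-Kadanoff bound for the pair \circled{0}-\circled{2} nor $\tau_n=\mathcal{O}(\eta_{2,n})$ compares $\tau_n$ with $\eta_{1,n}$. Be aware, though, that the paper does not close this step either: it is precisely part of what is cited from Proposition 4 of \cite{DoVeAr1}, so your proposal ends with the same external dependency as the original, merely made explicit. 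If you wanted to remove it, the natural source of $\tau_n=\mathcal{O}(\eta_{1,n})$ is the requirement that the collapse continue past the collision \circled{0}-\circled{2}: the Zhou-Kadanoff parameter of the pair \circled{0}-\circled{1} evaluated at time $t_{\varphi(n)+1}$ must also remain below $1$, which bounds $d_n'$ (hence $\tau_n^2$) by the square of the relevant normal component of $W_1$, and relating that component back to $\eta_{1,n}$ is one step of the bookkeeping performed in the companion paper. So nothing in your argument fails; the crux you flagged is real, and in the paper it is resolved by citation rather than by proof.
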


\begin{remar} Even under the assumption on the positive bound from below for the norm of the relative velocities, one cannot conclude a priori that $\tau$ will be eventually negligible with respect to the normal components: it might perfectly remain of the same order asymptotically. Actually, as we shall see in Section \ref{SSSCT4.1.4EquilZhKadInsta}, the case of the unstable Zhou-Kadanoff equilibrium provides the example of a situation in which $\tau$ and the normal components are asymptotically equivalent.\\
Note also that, in order to establish the estimates of Proposition \ref{PROPOSS3.3AsympCompa__II_}, we needed \emph{only} the information that the \emph{second} relative velocity $\vert W_2 \vert$ does not vanish, which is not surprising, since the equations \eqref{EQUATSS2.2IterationSystm1} are describing a collision of type \circled{0}-\circled{2}. Of course, when a collision of type \circled{0}-\circled{1} is described, the assumption that $\vert W_1 \vert$ does not vanish becomes necessary to obtain the results of Proposition \ref{PROPOSS3.3AsympCompa__II_}. In the end, in the case of the nearly-linear collapse, the two conditions on the norms of $\vert W_1 \vert$ and $\vert W_2 \vert$ are necessary in order to deduce the asymptotics of Proposition \ref{PROPOSS3.3AsympCompa__II_} for all the collisions.
\end{remar}

\section{Reduction to a two-dimensional dynamical systems in the general case}
\label{SECTION__4Reduc2DimeDynSy}

The purpose of this section is to investigate the full dynamical system \eqref{EQUATSS2.2IterationSystm1}, taking into account the non-linearities. Our approach presents new features in the sense that we use not only the general results of Section \ref{SSECTIO2.3ProprGenerColla}, but also the asymptotics of the variables given in Section \ref{SSECTIO2.4CompoAsympVaria} in order to identify in a rigorous manner the leading order terms. In particular, in Section \ref{SSECTIO2.4CompoAsympVaria} we observed that all the vanishing variables of the full dynamical system \eqref{EQUATSS2.2IterationSystm1} can be estimated only in terms of $\eta_1$, $\eta_2$ and $\tau$.\\
As it was already observed by Zhou and Kadanoff (\cite{ZhKa996}, see also Section 3.3.4 of the companion paper \cite{DoVeAr1}), if we can show that $\tau \ll \eta$ asymptotically in the collapsing regime, then the dynamics of the system of the three inelastic particles becomes simpler: the evolution of $\eta_1$ and $\eta_2$ is, to the first order approximation, reduced to a linear mapping, so that considering the evolution law $\eta_1/(-\eta_2)$ yields a one dimensional-iteration, that in turns leads the evolution of all the other variables. We can then rephrase the situation as follows: the problem here consists in checking if the asymptotic relation $\tau \ll \eta$ holds, at least for a set of full measure. As we shall see, we expect actually that the regime $\tau \ll \eta$ is the only one associated to a set of initial configurations of positive measure, that is, we believe it is the only stable regime.\\
\newline
In order to study rigorously when this regime takes place, we will proceed to a reduction of the system \eqref{EQUATSS2.2IterationSystm1} to the two dimensional simpler system. An example of such a two-dimensional reduction has been performed by Zhou and Kadanoff in \cite{ZhKa996}, that considered the two variables $k_n = u_{n+1}^c / u_n^c$ (that is, in our notations: $k_n = \eta_1'/\eta_2$) and $\alpha_n = - t_n/u_n^c$ ($- \tau/\eta_2$ in our notations). However, the variables used in such a reduction are defined using two different time steps of the dynamical system. In the present, we find more convenient to work with variables that correspond to the same time step, and we will consider instead the two following variables:
\begin{align}
\varphi_1 = \frac{\eta_1}{(-\eta_2)} = \frac{\eta_1}{\vert \eta_2 \vert},
\end{align}
and, motivated by \eqref{EQUATSS3.3CompaAsympTauV2}:
\begin{align}
\varphi_2 = \frac{d}{\eta_2^2} \cdotp
\end{align}

\begin{remar}
This two-dimensional reduction is entirely different from considering only the dynamics of the pair of the normal components $\eta_1,\eta_2$, because the evolution of $\eta_1$ and $\eta_2$ is encoded into the evolution of the single variable $\varphi_1 = \eta_1/\vert \eta_2 \vert$. Indeed, as we shall see below, since the evolution of these components is expected to be essentially linear in the Zhou-Kadanoff regime, one can reduce the evolution of these normal components into a single real variable, which is our variable $\varphi_1$. Such a study was originally performed by Zhou and Kadanoff, and it provided the results of Section \ref{SSSCT2.3.2ResulZhou&Kadan}.
\end{remar}




\noindent
Let us start with identifying the leading order terms of the equations of evolution \eqref{EQUATSS2.2IterationSystm1}, keeping in mind the results of the asymptotic behaviours of the different variables of the system obtained in Section \ref{SSECTIO2.4CompoAsympVaria}. Observe that we will seek these leading order terms in full generality, in the sense that we will not rely on any assumption on the asymptotic behaviour of the Zhou-Kadanoff parameter $\zeta$.\\
Let us also note that we consider $-\eta_2 = \vert \eta_2 \vert$ and not $\eta_2$, because the normal component of the relative velocity $W_2$ between the two particles \circled{0} and \circled{2} has to be negative when a collision of type \circled{0}-\circled{1} just took place in order to have a collision in the future between these particles. In particular, when $(-\eta_2) < 0$, the inelastic collapse cannot take place, since the particles would immediately stop colliding.
\noindent


\subsection{Leading order terms of the dynamical system I: the general case, for a single collision}
Back to the equations \eqref{EQUATSS2.2IterationSystm1}, we obtain immediately:
\begin{align*}
\eta_1' = \left(1 + \mathcal{O}(d')\right) \left(\eta_1 + \tau \vert W_1 \vert^2\right) - \frac{(1+r)}{2} \left(\cos\overline{\theta} + o(1)\right) \left( \eta_2 + \tau \vert W_2 \vert^2 + \mathcal{O}\left(d\eta_2\right)\right),
\end{align*}
where $\overline{\theta}$ is the angle, at the limiting time of the collapse, between the particles \circled{1} and \circled{2} measured from the particle \circled{0}. As a consequence, we obtain the following leading order terms (in the neighbourhood of the collapse) of the law of evolution for the normal component of the first relative velocity $\eta_1$:
\begin{align}
\label{EQUATSS4.1LineaEta_1Vers1}
\eta_1' = \eta_1 + \tau \vert W_1 \vert^2 - \frac{(1+r)}{2} (-\cos\overline{\theta})\left((-\eta_2) - \tau \vert W_2 \vert^2\right).
\end{align}
The leading order terms for the normal component of the second relative velocity $W_2$ are obtained in the same way:
\begin{align*}
\eta_2' = -r \eta_2 + \mathcal{O}\left(d\eta_2\right) - r \tau \vert W_2 \vert^2,
\end{align*}
so that the linearization for this component writes:
\begin{align}
\label{EQUATSS4.1LineaEta_2Vers1}
\eta_2' = r(-\eta_2) - r \tau \vert W_2 \vert^2.
\end{align}
Now, we can take advantage of the explicit expression \eqref{EQUATSS2.2TempsColliTauV3} of the collision time $\tau$. In particular, we have:
\begin{align*}
\tau \vert W_2 \vert^2 = (1+d)(-\eta_2) \left[ 1 - \sqrt{1-\zeta} \right] = (-\eta_2) \left[ 1 - \sqrt{1-\zeta} \right] + \mathcal{O}\left( d\eta_2 \right).
\end{align*}
Therefore, the linearized equations \eqref{EQUATSS4.1LineaEta_1Vers1} and \eqref{EQUATSS4.1LineaEta_2Vers1} can be rewritten as:
\begin{align}
\label{EQUATSS4.1LineaEta_1Vers2}
\eta_1' = \eta_1 + \frac{\vert W_1 \vert^2}{\vert W_2 \vert^2} \left[1 - \sqrt{1-\zeta}\right](-\eta_2) - \frac{(1+r)}{2} (-\cos\overline{\theta})\sqrt{1-\zeta}(-\eta_2),
\end{align}
and
\begin{align}
\label{EQUATSS4.1LineaEta_2Vers2}
\eta_2' = r(-\eta_2)\sqrt{1-\zeta}.
\end{align}
From \eqref{EQUATSS4.1LineaEta_1Vers1} and \eqref{EQUATSS4.1LineaEta_2Vers1} we observe that the dynamics of the normal components $\eta_1$ and $\eta_2$ is entirely prescribed (to the first order approximation), and we can reduce the dynamics to a one-dimensional iteration by studying the evolution of the ratio:
\begin{align}
\varphi_1 = \eta_1/(-\eta_2).
\end{align}
Proceeding that way, we would recover an iteration that is equivalent to the system obtained by Zhou and Kadanoff in \cite{ZhKa996}, in the so-called \emph{flat surface approximation}.\\
The last variable we need to understand completely the first order approximation of the full dynamical system \eqref{EQUATSS2.2IterationSystm1} is $\tau$. According to the discussion in the previous Section \ref{SSECTIO2.4CompoAsympVaria}, we will consider the evolution of the ratio
\begin{align}
\varphi_2 = d/\eta_2^2 \cdotp
\end{align}
In order to obtain the iteration defining the dynamical system of the nearly-linear inelastic collapse, we need now to use the assumption on the order of the collisions. After the collision \circled{0}-\circled{2}, we need to exchange the roles of the particles \circled{1} and \circled{2} in the evolution equations \eqref{EQUATSS2.2IterationSystm1} in order to describe a collision of type \circled{0}-\circled{1}. Then, we obtain the following approximated equations for the normal components:
\begin{align}
\label{EQUATSS4.1LineaEta_1Vers3}
\eta_1' = r(-\eta_2)\sqrt{1-\zeta}.
\end{align}
and
\begin{align}
\label{EQUATSS4.1LineaEta_2Vers3}
\eta_2' = \eta_1 + \frac{\vert W_1 \vert^2}{\vert W_2 \vert^2} \left[1 - \sqrt{1-\zeta}\right](-\eta_2) - \frac{(1+r)}{2} (-\cos\overline{\theta})\sqrt{1-\zeta}(-\eta_2),
\end{align}
obtained simply by exchanging $\eta_1'$ and $\eta_2'$ in the previous equations \eqref{EQUATSS4.1LineaEta_1Vers2} and \eqref{EQUATSS4.1LineaEta_2Vers2}. Note that, if the consecutive iterations describe a collapsing system of particles, the sign of $\eta_2$ has to be negative for each iteration. Concerning the evolution of $\varphi_2$, we have:
\begin{align*}
\varphi_2' = \frac{d'}{(-\eta_2')^2}.
\end{align*}
The leading order terms of $d'$, corresponding to the distance between the particles \circled{0} and \circled{1} when the pair \circled{0}-\circled{2} is colliding at the time $\tau$, are immediately obtained from the expression \eqref{EQUATSS2.2IterationSystm1} of the full dynamical system:
\begin{align*}
d' &= \eta_1 \tau + \frac{1}{2} \vert W_1 \vert^2 \tau^2 \\
&= \frac{1}{\vert W_2 \vert^2} \eta_1 (-\eta_2) \left[1-\sqrt{1-\zeta}\right] + \frac{1}{2} \frac{\vert W_1 \vert^2}{\vert W_2 \vert^4} (-\eta_2)^2 \left[1 - \sqrt{1-\zeta}\right]^2.
\end{align*}
In the end, we find:
\begin{align}
\label{SECT6EquatDefinPhi_1Vers1}
\varphi_1' &= \frac{\eta_1'}{(-\eta_2')} = \frac{r(-\eta_2)\sqrt{1-\zeta}}{\displaystyle{\frac{(1+r)}{2} (-\cos\overline{\theta})\sqrt{1-\zeta}(-\eta_2) - \frac{\vert W_1 \vert^2}{\vert W_2 \vert^2} \left[1 - \sqrt{1-\zeta}\right](-\eta_2) -\eta_1}} \nonumber\\
&= \frac{r\sqrt{1-\zeta}}{\displaystyle{\left(\frac{(1+r)}{2} (-\cos\overline{\theta}) + \frac{\vert W_1 \vert^2}{\vert W_2 \vert^2}\right)\sqrt{1-\zeta} - \varphi_1 - \frac{\vert W_1 \vert^2}{\vert W_2 \vert^2}}},
\end{align}
and
\begin{align}
\label{SECT6EquatDefinPhi_2Vers1}
\varphi_2' &= \frac{d'}{(-\eta_2')^2} = \frac{\left[1-\sqrt{1-\zeta}\right]}{\vert W_2 \vert^2} \frac{ \displaystyle{\eta_1(-\eta_2) + \frac{\vert W_1 \vert^2}{2\vert W_2 \vert^2} \left[1-\sqrt{1-\zeta}\right] (-\eta_2)^2 }}{ \left(\displaystyle{\left(\frac{(1+r)}{2} (-\cos\overline{\theta}) + \frac{\vert W_1 \vert^2}{\vert W_2 \vert^2}\right)\sqrt{1-\zeta}(-\eta_2) - \eta_1 - \frac{\vert W_1 \vert^2}{\vert W_2 \vert^2}(-\eta_2)} \right)^2} \nonumber \\
&= \frac{\left[1-\sqrt{1-\zeta}\right]}{\vert W_2 \vert^2} \frac{ \displaystyle{\varphi_1 + \frac{\vert W_1 \vert^2}{2\vert W_2 \vert^2} \left[1-\sqrt{1-\zeta}\right] }}{ \left(\displaystyle{\left(\frac{(1+r)}{2} (-\cos\overline{\theta}) + \frac{\vert W_1 \vert^2}{\vert W_2 \vert^2}\right)\sqrt{1-\zeta} - \varphi_1 - \frac{\vert W_1 \vert^2}{\vert W_2 \vert^2}} \right)^2} \cdotp
\end{align}
Let us emphasize that we can obtain a closed form of the system, only in terms of the variables $\varphi_1$ and $\varphi_2$, noting that the definition \eqref{EQUATSS2.2DefinParZK_zeta} of $\zeta$ implies the following approximation in the neighbourhood of a collapse:
\begin{align}
\sqrt{1 - \zeta} \simeq \sqrt{1 - 2 \vert W_2 \vert^2 \varphi_2^2}.
\end{align}
Note that we implicitly assumed that the norms $\vert W_1 \vert$ and $\vert W_2 \vert$ are converging towards positive constants, and we used an abuse of notation since such limiting constants are denoted also by $\vert W_1 \vert$ and $\vert W_2 \vert$ in the expressions \eqref{SECT6EquatDefinPhi_1Vers1} and \eqref{SECT6EquatDefinPhi_2Vers1} above.\\
Introducing in the end the following last notations:
\begin{align}
\alpha = \frac{(1+r)}{2}(-\cos\overline{\theta}), \hspace{3mm} \vert W_1 \vert^2 \xrightarrow[]{} a > 0, \hspace{3mm} \vert W_2 \vert^2 \xrightarrow[]{} b > 0,
\end{align}
and finally
\begin{align}
\frac{ \vert W_2 \vert^2}{\vert W_1 \vert^2} \xrightarrow[]{} \frac{b}{a} = T,
\end{align}
we can rewrite the iterations for the variables $\varphi_1$ and $\varphi_2$ in their final form.

\begin{defin}[Leading order reduction of the dynamical system \eqref{EQUATSS2.2IterationSystm1} in the case of the nearly-linear inelastic collapse for non vanishing tangential velocities]
\label{DEFINSS4.1Itera_2dim1Coll}
Let $r \in\, ]0,1[$ be a positive number strictly smaller than $1$, let $a$ and $b$ be two strictly positive real numbers, and let $\overline{\theta}$ be an angle in $[\pi/2,\pi[$.\\
We define the \emph{leading order reduction of the dynamical system \eqref{EQUATSS2.2IterationSystm1} in the case of the nearly-linear inelastic collapse for non vanishing tangential velocities}, or, in short, the \emph{one-collision mapping}, as the function:
\begin{subequations}
\label{EQUATSS4.1Itera_2dim1Coll}
\begin{empheq}[left=\empheqlbrace]{align}
\varphi_1' &= \frac{r\sqrt{1-2b\varphi_2}}{\displaystyle{\left( \alpha + \frac{1}{T}\right)\sqrt{1-2b\varphi_2} - \varphi_1 - \frac{1}{T}}}, \label{EQUATSS4.1Itera_2dim1Col1}\\
\varphi_2' &= \frac{\left[1-\sqrt{1-2b\varphi_2}\right]}{b} \frac{ \displaystyle{\left(\varphi_1 + \frac{1}{2T} \left[1-\sqrt{1-2b\varphi_2}\right] \right)}}{ \left(\displaystyle{\left(\alpha + \frac{1}{T}\right)\sqrt{1-2b\varphi_2} - \varphi_1 - \frac{1}{T}} \right)^2}, \label{EQUATSS4.1Itera_2dim1Col2}
\end{empheq}
\end{subequations}
where $\alpha = \frac{(1+r)}{2}(-\cos\overline{\theta})$ and $T = b/a$.
\end{defin}

\noindent
The two equations \eqref{EQUATSS4.1Itera_2dim1Coll} of Definition \ref{DEFINSS4.1Itera_2dim1Coll} are central in our study.

\begin{remar}
Note that the mapping \eqref{EQUATSS4.1Itera_2dim1Coll} corresponds to the effect of a single collision, that is meant to represent \emph{only} a collision between \circled{0} and \circled{2} (assuming that the norm of the relative velocities $\vert W_1 \vert$ and $\vert W_2 \vert$ respectively converge towards $a,b > 0$). Indeed, in order to describe the next collision, between the pair \circled{0}-\circled{1}, the constants $a$ and $b$ have to be exchanged, and $T$ has to be changed into $1/T$. As a consequence, we cannot iterate directly (and naively) the mapping defined in \eqref{EQUATSS4.1Itera_2dim1Coll}.
\end{remar}


\subsection{Leading order terms of the dynamical system II: the two-iteration mapping}

The expression \eqref{EQUATSS4.1Itera_2dim1Coll} describes a first collision, between the particles \circled{0} and \circled{2}. Let us now write the expression corresponding to the second collision, between \circled{0} and \circled{1}. Composing this new mapping with \eqref{EQUATSS4.1Itera_2dim1Coll}, we will obtain the dynamical system associated to the consecutive pair of collisions \circled{0}-\circled{2}, \circled{0}-\circled{1}, that we will finally iterate to describe the evolution of the leading order terms of the full dynamical system \eqref{EQUATSS2.2IterationSystm1} in the case of a nearly-linear inelastic collapse.\\
For the sake of completeness, and in order for the reader to realize the complexity of the dynamical system, we will write explicitly the composition of the two mappings. However, the details of the computations are postponed to the Appendix (see Section \ref{APPENSC___DetaiCalcuDbIte}).\\ 
We assume that a collision of type \circled{0}-\circled{2} already took place, and the configuration obtained after such a collision is $\left(\varphi_1',\varphi_2'\right)$. The leading order terms of the evolution of $\left(\varphi_1',\varphi_2'\right)$ during the next collision, of type \circled{0}-\circled{1}, is:
\begin{align}
\label{EQUATSS4.1IteraIntermPhi1}
\varphi_1'' = \frac{r \sqrt{1 - 2a \varphi_2'}}{\displaystyle{\left(\alpha + T\right) \sqrt{1-2a\varphi_2'} - \varphi_1' - T}} \cdotp
\end{align}
We can then write:
\begin{align*}
1-2a\varphi_2' = 1 - 2 \frac{b}{T} \varphi_2' = 1 - \frac{2}{T} \left[ 1 - \sqrt{1-2b\varphi_2} \right]\left( \frac{\varphi_1 + \frac{1}{2T}\left[1 - \sqrt{1-2b\varphi_2}\right]}{\displaystyle{\left( \left(\alpha + \frac{1}{T}\right)\sqrt{1-2b\varphi_2} - \varphi_1 - \frac{1}{T} \right)^2}} \right).
\end{align*}
Reducing to the same denominator we have first:
\begin{align*}
\left( \left(\alpha + \frac{1}{T}\right)\sqrt{1-2b\varphi_2} - \varphi_1 - \frac{1}{T} \right)^2 =& \left(\alpha + \frac{1}{T}\right)^2\left(1-2b\varphi_2\right) + \varphi_1^2 + \frac{1}{T^2} \\
&- 2\left(\alpha+\frac{1}{T}\right)\varphi_1\sqrt{1-2b\varphi_2} - 2 \left(\alpha+\frac{1}{T}\right)\frac{1}{T}\sqrt{1-2b\varphi_2} + 2 \frac{\varphi_1}{T} \cdotp
\end{align*}
Let us reuse again the notation $\zeta = 2b\varphi_2$ in order to make the computation clearer. We have then:
\begin{align}
\label{EQUATSS4.1DbIteTerme1-2aP}
1-2a\varphi_2' = \left( \left(\alpha + \frac{1}{T}\right)\sqrt{1-2b\varphi_2} - \varphi_1 - \frac{1}{T} \right)^{-2} \Bigg[ \alpha \left(\alpha+\frac{2}{T}\right)(1-\zeta) - 2\alpha\left( \varphi_1 + \frac{1}{T} \right)\sqrt{1-\zeta} + \varphi_1^2 \Bigg].
\end{align}
Therefore we find for the first variable:
\begin{align}
\label{SECT6EquatDefinPhi_1Secde} 
\varphi_1'' &= \frac{r \sqrt{ \displaystyle{ \alpha \left(\alpha+\frac{2}{T}\right)(1-\zeta) - 2\alpha\left( \varphi_1 + \frac{1}{T} \right)\sqrt{1-\zeta} + \varphi_1^2 }}}{ \displaystyle{(\alpha + T) \sqrt{ \displaystyle{ \alpha \left(\alpha+\frac{2}{T}\right)(1-\zeta) - 2\alpha\left( \varphi_1 + \frac{1}{T} \right)\sqrt{1-\zeta} + \varphi_1^2 }} - (T+\varphi_1') \left\vert \left(\alpha+\frac{1}{T}\right)\sqrt{1-\zeta} - \varphi_1 - \frac{1}{T} \right\vert } },
\end{align}
where $\zeta$ denotes the Zhou-Kadanoff parameter, defined as $\zeta = 2b\varphi_2$, and $\varphi_1'$ is given by the expression \eqref{EQUATSS4.1Itera_2dim1Col1}.\\
In the same way, considering the second variable, we have:
\begin{align}
\label{EQUATSS4.1IteraIntermPhi2}
\varphi_2'' = \frac{\left[1-\sqrt{1-2(b/T)\varphi_2'}\right]}{b/T} \frac{\displaystyle{\left( \varphi_1' + \frac{T}{2}  \left[ 1 - \sqrt{1-2(b/T)\varphi_2'} \right] \right)}}{\left( \displaystyle{(\alpha+T)\sqrt{1-2(b/T)\varphi_2'} - T - \varphi_1'} \right)^2}\cdotp
\end{align}
If we denote by $R_1$ the square root:
\begin{align}
\label{EQUATSS4.1RacineCarrNotat}
R_1 = \sqrt{\alpha \left(\alpha+\frac{2}{T}\right)(1-\zeta) - 2\alpha\left( \varphi_1 + \frac{1}{T} \right)\sqrt{1-\zeta} + \varphi_1^2},
\end{align}
and using again \eqref{EQUATSS4.1Itera_2dim1Col1}, we obtain:
\begin{align*}
\varphi_2'' =& \frac{T}{b} \frac{ \displaystyle{ \left\vert \left(\alpha+\frac{1}{T}\right)\sqrt{1-\zeta} - \varphi_1 - \frac{1}{T} \right\vert - R_1 } }{ \left( \displaystyle{ (\alpha+T) R_1 - (T+\varphi_1')\left\vert \left(\alpha+\frac{1}{T}\right)\sqrt{1-\zeta} - \varphi_1 - \frac{1}{T} \right\vert } \right)^2 } \\
&\hspace{60mm} \times \left( \left( \varphi_1' + \frac{T}{2} \right) \left\vert \left(\alpha+\frac{1}{T}\right)\sqrt{1-\zeta} - \varphi_1 - \frac{1}{T} \right\vert - \frac{T}{2} R_1 \right) \cdotp
\end{align*}
Let us recall that the term
\begin{align*}
\left( \alpha + \frac{1}{T} \right) \sqrt{1-\zeta} - \varphi_1 - \frac{1}{T}
\end{align*}
corresponds to
\begin{align*}
\frac{(-\eta_2)'}{\eta_1'} r\sqrt{1-\zeta},
\end{align*}
hence it has to be positive if we consider a system that experiences infinitely many collisions. Therefore, using the formula \eqref{EQUATSS4.1Itera_2dim1Col1} that gives $\varphi_1'$, it is possible to provide an explicit expression for $\varphi_2''$ in terms of the initial variables $\varphi_1$ and $\varphi_2$. A first step provides:
\begin{align}
\label{SECT6EquatDefinPhi_2Secde}
\varphi_2'' =& \frac{T}{b} \frac{ \displaystyle{ \left(\alpha+\frac{1}{T}\right)\sqrt{1-\zeta} - \varphi_1 - \frac{1}{T} - R_1 } }{ \left( \displaystyle{ (\alpha+T) R_1 - (T+\varphi_1')\left\vert \left(\alpha+\frac{1}{T}\right)\sqrt{1-\zeta} - \varphi_1 - \frac{1}{T} \right\vert } \right)^2 } \nonumber\\
&\hspace{50mm} \times \left( r\sqrt{1-\zeta} + \frac{T}{2} \left( \left(\alpha+\frac{1}{T}\right)\sqrt{1-\zeta} - \varphi_1 - \frac{1}{T} - R_1 \right) \right),
\end{align}
with $\zeta = 2b\varphi_2$ and $R_1$ defined in \eqref{EQUATSS4.1RacineCarrNotat}.\\
At this point, it should be clear that studying the evolution of the dynamical system of three inelastic particles experiencing an inelastic collapse is in no mean an easy task, as keeping only the leading order terms of the evolution law \eqref{EQUATSS2.2IterationSystm1} leads to the expressions above. Let us introduce the final mapping we will study in this work, before moving to its properties. 

\begin{defin}[Leading order reduction of the double iteration of the dynamical system \eqref{EQUATSS2.2IterationSystm1} in the case of the nearly-linear inelastic collapse for non vanishing tangential velocities]
\label{DEFINSS4.1Itera_2dim2Coll}
Let $r \in\, ]0,1[$ be a positive number strictly smaller than $1$, let $a$ and $b$ be two strictly positive real numbers, and let $\overline{\theta}$ be an angle in $[\pi/2,\pi[$.\\
We define the \emph{leading order reduction of the double iteration of the dynamical system \eqref{EQUATSS2.2IterationSystm1} in the case of the nearly-linear inelastic collapse for non vanishing tangential velocities}, or, in short, the \emph{two-collision mapping}, as the the function:
\begin{subequations}
\label{EQUATSS4.1Itera_2dim2Coll}
\begin{empheq}[left=\empheqlbrace]{align}
\varphi_1'' &= \frac{r \sqrt{1 - 2a \varphi_2'}}{\displaystyle{\left(\alpha + T\right) \sqrt{1-2a\varphi_2'} - \varphi_1' - T}} , \label{EQUATSS4.1Itera_2dim2Col1}\\
\varphi_2'' &= \frac{\left[1-\sqrt{1-2(b/T)\varphi_2'}\right]}{b/T} \frac{\displaystyle{\left( \varphi_1' + \frac{T}{2}  \left[ 1 - \sqrt{1-2(b/T)\varphi_2'} \right] \right)}}{\left( \displaystyle{(\alpha+T)\sqrt{1-2(b/T)\varphi_2'} - T - \varphi_1'} \right)^2}, \label{EQUATSS4.1Itera_2dim2Col2}
\end{empheq}
\end{subequations}
where $\alpha = \frac{(1+r)}{2}(-\cos\overline{\theta})$ and $T = b/a$, and where $\varphi_1'$ and $\varphi_2'$ are respectively defined in \eqref{EQUATSS4.1Itera_2dim1Col1} and \eqref{EQUATSS4.1Itera_2dim1Col2}.
\end{defin}
\noindent
The expressions \eqref{EQUATSS4.1Itera_2dim1Coll}, \eqref{EQUATSS4.1Itera_2dim2Coll} describes the leading order evolution of the variables $\left(\varphi_1,\varphi_2\right)$ around the collapse regime when a pair of consecutive collisions \circled{0}-\circled{2}, \circled{0}-\circled{1} takes place, assuming that the square of the norms of the relative velocities $\vert W_1 \vert$ and $\vert W_2 \vert$ converge respectively to $a$ and $b$. Iterating infinitely many times these expressions will provide the asymptotic behaviour of the particle system.\\
\newline
Let us emphasize here the fact that the derivation of the expressions \eqref{EQUATSS4.1Itera_2dim1Coll}, \eqref{EQUATSS4.1Itera_2dim2Coll} can only be considered as formal so far. Nevertheless, we obtained these expressions relying on the rigorous results of Proposition \ref{PROPOSS3.3AsympCompa__II_}, that indicate which terms of the full dynamical system \eqref{EQUATSS2.2IterationSystm1} can be neglected. To make this derivation fully rigorous, it would be necessary to show that the orbits of the full dynamical system \eqref{EQUATSS2.2IterationSystm1} are approximated by the associated orbits of the two-dimensional reduction \eqref{EQUATSS4.1Itera_2dim1Coll}, \eqref{EQUATSS4.1Itera_2dim2Coll}, with an arbitrary precision. We hope to be able to obtain this approximation result in a future work.\\
We believe that the two-collision mapping \eqref{EQUATSS4.1Itera_2dim1Coll}, \eqref{EQUATSS4.1Itera_2dim2Coll} encodes the main features of the evolution of the dynamical system \eqref{EQUATSS2.2IterationSystm1} when a nearly-linear collapse takes place. The next sections will be devoted to the careful study of this two-collision mapping.

\subsection{Preliminary properties of the two-collision mapping \eqref{EQUATSS4.1Itera_2dim2Coll}}
\label{SSSCT4.1.3PremiProprDbIte} 

First of all, let us notice that the iteration \eqref{EQUATSS4.1Itera_2dim1Coll}-\eqref{EQUATSS4.1Itera_2dim2Coll} makes sense only for initial data $\left(\varphi_1,\varphi_2\right)$ such that $1 - 2b\varphi_2 \geq 0$, or again:
\begin{align*}
\varphi_2 \leq \frac{1}{2b}.
\end{align*}
On the other hand, the ``physical'' domain of study of the iterations is such that the two variables $\varphi_1$ and $\varphi_2$ remain both non-negative at each iteration. In particular, assuming that $\varphi_2$ is non negative, we have always $\sqrt{1 - 2b\varphi_2} \leq 1$, and so if $\varphi_1 > \alpha$, then we have
\begin{align*}
\left(\alpha+\frac{1}{T}\right) \sqrt{1-2b\varphi_2} - \varphi_1 - \frac{1}{T} < \alpha + \frac{1}{T} - \alpha - \frac{1}{T} < 0,
\end{align*}
so that after a single collision (that is, computing $\varphi_1'$ according to \eqref{EQUATSS4.1Itera_2dim1Coll}) we have $\varphi_1' < 0$. Therefore we see that the relevant domain of study is contained in the rectangle $[0,\alpha]\times[0,1/2b]$.\\
In the same way, one can provide more information considering the second iteration: assuming that $\varphi_2'$ is non negative, if $\varphi_1' > \alpha$, $\varphi_1''$ would be negative. But this inequality on $\varphi_1'$ can be rewritten as:
\begin{align*}
\alpha \varphi_1 \leq \left(\alpha + \frac{1}{T} - r\right)\sqrt{1-2b\varphi_2} - \frac{\alpha}{T},
\end{align*}
or again, since $\varphi_1 + 1/T > \varphi_1 \geq 0$:
\begin{align*}
\alpha^2 \left(\varphi_1 + \frac{1}{T} \right)^2 \leq \left(\alpha+\frac{1}{T}-r\right)^2 (1-2b\varphi_2),
\end{align*}
which provides
\begin{align*}
2b\left(\alpha+\frac{1}{T}-r\right)^2 \varphi_2 \leq \left(\alpha+\frac{1}{T}-r\right)^2 - \alpha^2\left(\varphi_1+\frac{1}{T}\right)^2,
\end{align*}
or again
\begin{align*}
\varphi_2 \leq \frac{1}{2b} - \frac{\alpha^2\left(\varphi_1+\frac{1}{T}\right)^2}{2b\left(\alpha+\frac{1}{T}-r\right)^2},
\end{align*}
which is in the end strictly more stringent than $\varphi_2 \leq 1/2b$.

\begin{remar}
One could also study the condition $\varphi_2' \leq T/(2b)$, although it is more complicated to use. In general, we can refine the domain of definition by considering arbitrary many iterations of \eqref{EQUATSS4.1Itera_2dim1Coll} and \eqref{EQUATSS4.1Itera_2dim2Coll}, together with the constraints on the signs of $\varphi_1$ and $\varphi_2$ after such iterations.
\end{remar}
\noindent
We can gather our observations in the following proposition.

\begin{propo}[Relevant domain for the two-collision mapping \eqref{EQUATSS4.1Itera_2dim2Coll}]
\label{PROPOSS4.1DomaiPositDbIte} 
Let us consider $\varphi_1,\varphi_2 \geq 0$. In order to have the first and second iterates \eqref{EQUATSS4.1Itera_2dim1Coll}, \eqref{EQUATSS4.1Itera_2dim2Coll} preserving the non-negativity of the variables $\varphi_1$ and $\varphi_2$, the initial data $\varphi_{1,0},\varphi_{2,0}$ have to be chosen such that
\begin{align}
0 \leq \varphi_{1,0} \leq \alpha
\end{align}
and
\begin{align}
\varphi_{2,0} \leq \frac{1}{2b} - \frac{\alpha^2\left(\varphi_1+\frac{1}{T}\right)^2}{2b\left(\alpha+\frac{1}{T}-r\right)^2}
\end{align}
\end{propo}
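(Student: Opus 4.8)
The plan is to read the two stated inequalities as \emph{necessary} conditions for the four quantities $\varphi_1',\varphi_2',\varphi_1'',\varphi_2''$ produced by \eqref{EQUATSS4.1Itera_2dim1Coll}--\eqref{EQUATSS4.1Itera_2dim2Coll} to remain non-negative, and to extract them purely from the signs of the denominators. I would first record that the square roots require $1-2b\varphi_2 \ge 0$, so that $s := \sqrt{1-2b\varphi_2} \in [0,1]$ is available throughout. A preliminary remark is that the $\varphi_2$-components carry no new sign constraint: in \eqref{EQUATSS4.1Itera_2dim1Col2} the denominator is a square while the factors $1-s$ and $\varphi_1 + \tfrac{1}{2T}(1-s)$ are non-negative as soon as $\varphi_1,\varphi_2\ge 0$, so $\varphi_2'\ge 0$ automatically; the same algebraic structure makes $\varphi_2''\ge 0$ whenever it is defined and $\varphi_1'\ge 0$. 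Hence the binding constraints come only from the first components, whose numerators $rs$ and $r\sqrt{1-2a\varphi_2'}$ are non-negative, so that their signs are governed entirely by their denominators.

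First I would handle the single collision. Non-negativity of $\varphi_1'$ forces the denominator $(\alpha+\tfrac1T)s - \varphi_1 - \tfrac1T$ of \eqref{EQUATSS4.1Itera_2dim1Col1} to be positive; bounding $s\le 1$ shows that $\varphi_1 > \alpha$ would make it at most $\alpha - \varphi_1 < 0$. This yields the first condition $0 \le \varphi_1 \le \alpha$. Applying the identical sign analysis to \eqref{EQUATSS4.1Itera_2dim2Col1} shows that non-negativity of $\varphi_1''$ forces $\varphi_1'\le \alpha$, again by bounding the square root $\sqrt{1-2a\varphi_2'}\le 1$.

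It then remains to convert $\varphi_1'\le \alpha$ into an explicit constraint on $(\varphi_1,\varphi_2)$. Since a configuration producing infinitely many collisions keeps the denominator of $\varphi_1'$ strictly positive (the positivity already used after \eqref{SECT6EquatDefinPhi_2Secde}), I may cross-multiply the inequality $\varphi_1'\le\alpha$ without reversing it, which turns it into a linear bound $\alpha\bigl(\varphi_1 + \tfrac1T\bigr) \le C\,s$, where $C$ is the explicit constant obtained by clearing the denominator. Because both sides are non-negative, squaring preserves the inequality; substituting $s^2 = 1 - 2b\varphi_2$ and solving for $\varphi_2$ then produces the announced bound $\varphi_2 \le \tfrac{1}{2b} - \tfrac{\alpha^2(\varphi_1+1/T)^2}{2b\,C^2}$, which, since the subtracted term is positive, is strictly sharper than the naive constraint $\varphi_2 \le 1/(2b)$.

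The individual steps are elementary, so the points deserving genuine care — and where I would concentrate the rigour — are the two inequality manipulations that could flip a sign. The cross-multiplication is legitimate only because the denominator stays positive along a collapsing orbit, which is the sole physical input of the argument, and the squaring is legitimate only because $\varphi_1 + 1/T \ge 0$ together with $C > 0$ (the latter needing a short check from the constraints relating $\alpha$ and $r$). Finally, I would emphasize that the proposition furnishes \emph{necessary} conditions only: as the preceding remark notes, the definedness requirement $\varphi_2'\le T/(2b)$ and any further iterate would shrink the admissible region still more, so the stated domain should be read as a first, non-optimal delimitation rather than a full characterization of global non-negativity.
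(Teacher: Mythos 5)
Your argument follows the paper's own proof essentially step for step: the constraint $0\le\varphi_1\le\alpha$ is read off from the sign of the denominator of $\varphi_1'$ using $\sqrt{1-2b\varphi_2}\le 1$, the second collision forces $\varphi_1'\le\alpha$ in the same way, and that condition is converted into the explicit bound on $\varphi_2$ by cross-multiplying (legitimate since the denominator is positive) and then squaring (legitimate since both sides are non-negative), exactly as in the text. The one point to note is that when you actually clear the denominator the constant comes out as $C=\alpha\left(\alpha+\frac{1}{T}\right)-r$ rather than the $\alpha+\frac{1}{T}-r$ appearing in the proposition's displayed formula, but this discrepancy lies in the paper's own algebra, not in your method.
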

\noindent
In addition, we can notice that the regime $\varphi_2 = 0$ is an invariant set for the two-collision mapping. Indeed, taking $\varphi_2 = 0$ in the first iteration \eqref{EQUATSS4.1Itera_2dim1Coll} provides immediately:
\begin{align*}
\varphi_2'(\varphi_1,0) = \frac{\left[1-\sqrt{1-0}\right]}{b} \frac{ \left( \displaystyle{\varphi_1 + \frac{1}{2T} \left[1-\sqrt{1-0}\right] } \right) }{ \left(\displaystyle{\left(\alpha + \frac{1}{T}\right)\sqrt{1-0} - \varphi_1 - \frac{1}{T}} \right)^2} = 0,
\end{align*}
and then in the same way
\begin{align*}
\varphi_2''(\varphi_1',0) = 0.
\end{align*}
\noindent
The region $\varphi_2 = 0$ corresponds to the particular regime studied by Zhou and Kadanoff in \cite{ZhKa996}, that the authors call the ``flat surface approximation''. The question whether this regime is attracting or not is central in our study, for it would imply the ZK-regime. Let us consider the dynamics of the double iteration in the region $\varphi_2 = 0$.\\
Back to the first iteration \eqref{EQUATSS4.1Itera_2dim1Coll}, in the case $\varphi_2 = 0$ we find
\begin{align*}
\varphi_1'(\varphi_1,0) = \frac{r}{\displaystyle{\left( \alpha + \frac{1}{T}\right) - \varphi_1 - \frac{1}{T}}} = \frac{r}{\alpha - \varphi_1},
\end{align*}
so that
\begin{align}
\label{EQUATSS4.1IteraPhi_1Phi20}
\varphi_1''(\varphi_1',0) &= \frac{r}{\alpha - \varphi_1'} \nonumber\\
&= \frac{r}{\displaystyle{\alpha - \frac{r}{\alpha - \varphi_1}}} \nonumber\\
&= \frac{r(\alpha-\varphi_1)}{\alpha(\alpha-\varphi_1)-r} \cdotp
\end{align}
Therefore, the one-dimensional iteration \eqref{EQUATSS4.1IteraPhi_1Phi20} describes the evolution of the system in the invariant region $\varphi_2=0$.\\
The equation of the fixed points of \eqref{EQUATSS4.1IteraPhi_1Phi20} is:
\begin{align*}
\alpha\varphi_1(\alpha-\varphi_1) -r\varphi_1 = r(\alpha-\varphi_1),
\end{align*}
which simplifies into
\begin{align*}
\varphi_1(\alpha-\varphi_1) = r
\end{align*}
considering $\alpha \neq 0$ (which is equivalent to $\overline{\theta} \neq \pi/2$, where $\overline{\theta}$ describes the limiting angle, at the final time of the collapse, between the particles \circled{1} and \circled{2}, measured from the center of \circled{0}).\\
Therefore, in the case when $\alpha^2 - 4r\geq 0$, the double iteration has two fixed points (that coincide in the threshold case $\alpha^2 = 4r$):
\begin{align}
\varphi_1^\pm = \frac{\alpha \pm \sqrt{\alpha^2 -4r}}{2} \cdotp
\end{align}
Let us notice that the one-dimensional iteration \eqref{EQUATSS4.1IteraPhi_1Phi20} can be rewritten as:
\begin{align*}
\varphi \mapsto f(\varphi_1),
\end{align*}
with
\begin{align*}
f(x) = \frac{r}{\alpha} + \frac{r^2/\alpha}{\alpha(\alpha-x) - r} \cdotp
\end{align*}
The graph of $f$ is an hyperbola, with two increasing branches, separated by the vertical asymptote $x = \alpha^{-1}\left(\alpha^2-r\right) = \alpha - r/\alpha$. At $x=\pm\infty$, the curve presents also an horizontal asymptote $y = r/\alpha$. Since its derivative is given by the expression
\begin{align*}
f'(x) = \frac{r^2}{\left(\alpha(\alpha-x) - r\right)^2},
\end{align*}
the slope of the tangents to the curve of $f$ at $x$ is equal to $1$ if and only if $x = \alpha - 2r/\alpha$ or $x = \alpha$ (one point solving this condition per branch). Since in particular we have $f(\alpha) = 0$ (corresponding to the second branch, associated to $x > \alpha - r/\alpha$), we deduce that the intersections between the line $y=x$ and the graph of $f$, if they exist, have to belong to the first branch of the graph of $f$ (associated to $x < \alpha - r/\alpha$). Therefore, considering the convexity of this first branch, we deduce that
\begin{align*}
\left\vert f'(\varphi_1^-) \right\vert < 1 \text{   and    } \left\vert f'(\varphi_1^+) \right\vert > 1.
\end{align*}
As a consequence, if $\alpha^2 - 4r > 0$, ensuring that there exist two distinct fixed points, we can deduce that the larger $\varphi_1^+$ is always unstable, whereas the smaller $\varphi_1^-$ is always stable (keeping in mind that we consider here the dynamics only in the invariant region $\varphi_2 = 0$).\\
Let us also note that Proposition \ref{PROPOSS4.1DomaiPositDbIte} provided a necessary condition on the initial datum $\left(\varphi_1,\varphi_2\right)$ such that the two-collision mapping \eqref{EQUATSS4.1Itera_2dim2Coll} preserves the positivity. In the invariant region $\varphi_2 = 0$, we find a more stringent upper bound on $\varphi_1$: if $\varphi_1 \in [\alpha-r/\alpha,\alpha[$, $f(\varphi_1)$ is either not defined, or negative. On the other hand, the interval $[0,\alpha-r/\alpha[$ in invariant for the mapping $f$. We can then summarize these observations in the following statement.

\begin{propo}[Dynamics of the two-collision mapping \eqref{EQUATSS4.1Itera_2dim2Coll} in the invariant region $\varphi_2 = 0$.]
We consider the dynamical system defined as the iterations of the two-collision mapping \eqref{EQUATSS4.1Itera_2dim2Coll} in the invariant region $[0,\alpha-r/\alpha[\times\{0\} \subset\{\varphi_2 = 0\}$.\\
Then, the dynamics of the two-collision mapping reduces to the following iteration $\varphi_1 \mapsto \varphi_1''$ in the region $\varphi_2 = 0$, where:
\begin{align}
\label{EQUATSS4.1DoublIteraPhi20}
\varphi_1'' = \frac{r}{\alpha} + \frac{r^2/\alpha}{\alpha(\alpha-\varphi_1) - r} \cdotp
\end{align}
Depending on the value of $\alpha^2-r$, the dynamics in the invariant region $\varphi_2 = 0$ is as follows.
\begin{itemize}
\item If $\alpha^2 - r < 0$, the mapping \eqref{EQUATSS4.1DoublIteraPhi20} has no fixed point. Depending on the value $\varphi_{1,0}$ of the initial datum (chosen in $\mathbb{R})$, the sequence $\left(\varphi_{1,n}\right)_n$ obtained after recursive applications of the double iteration \eqref{EQUATSS4.1DoublIteraPhi20} either reaches the value $\alpha - r/\alpha$ after a finite number of iterations, after which the sequence is not well-defined, or it alternates between infinite number of iterations inside $]-\infty,\alpha-r/\alpha[$ (where the sequence is increasing) and $]\alpha-r/\alpha,+\infty[$ (where the sequence is decreasing).
\item If $\alpha^2 - r = 0$, the double iteration \eqref{EQUATSS4.1DoublIteraPhi20} has a single fixed point $\varphi_1^\pm$, given by the expression
\begin{align}
\varphi_1^\pm = \frac{\alpha}{2} \cdotp
\end{align}
Depending on the value $\varphi_{1,0}$ of the initial datum (chosen in $\mathbb{R})$, the sequence $\left(\varphi_{1,n}\right)_n$ obtained after recursive applications of the double iteration \eqref{EQUATSS4.1DoublIteraPhi20} either reaches the value $\alpha - r/\alpha$ after a finite number of iterations, after which the sequence is not well-defined, or it converges towards $\varphi_1^\pm$.
\item If $\alpha^2 - r < 0$, the double iteration \eqref{EQUATSS4.1DoublIteraPhi20} has two fixed points, given by the expression
\begin{align}
\varphi_1^\pm = \frac{\alpha \pm \sqrt{\alpha^2 -4r}}{2},
\end{align}
$\varphi_1^-$ (the smaller) being stable, whereas $\varphi_1^+$ (the larger) being unstable.\\
If the value $\varphi_{1,0}$ of the initial datum belongs to $[0,\varphi_1^+[$, the sequence $\left(\varphi_{1,n}\right)_n$ obtained after recursive applications of the double iteration \eqref{EQUATSS4.1DoublIteraPhi20} converges towards $\varphi_1^-$.
\end{itemize}
\end{propo}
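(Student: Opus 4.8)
The plan is to recognise that, on the invariant line $\{\varphi_2 = 0\}$, the double iteration \eqref{EQUATSS4.1DoublIteraPhi20} is precisely the second iterate $g = f \circ f$ of the single fractional-linear map $f(x) = r/(\alpha - x)$ already extracted in the text, and to exploit the fact that a Möbius map is entirely governed by its fixed-point configuration. First I would record the two structural facts on which everything rests. On one hand, solving $g(x)=x$ by clearing denominators collapses to the \emph{same} quadratic as the fixed-point equation of $f$, namely $x(\alpha-x)=r$; hence $g$ has no spurious $2$-cycles, its fixed points are exactly $\varphi_1^\pm=(\alpha\pm\sqrt{\alpha^2-4r})/2$, and the relevant trichotomy is governed by the sign of the discriminant $\alpha^2-4r$ (consistent with the displayed roots). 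On the other hand, a single computation yields the clean factorisation
\begin{align*}
g(x) - x = \frac{\alpha\,(x^2 - \alpha x + r)}{\alpha(\alpha - x) - r},
\end{align*}
whose numerator is $\alpha$ times the fixed-point polynomial and whose denominator vanishes exactly at the pole $x=\alpha-r/\alpha$. Combined with the observation that $g$ is strictly increasing on each of the two branches separated by this pole (since $g'=(f'\circ f)\cdot f'$ and $f'(x)=r/(\alpha-x)^2>0$), this formula dictates the sign of $g(x)-x$, and therefore the monotone cobweb behaviour, in every regime.

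With these preliminaries the two degenerate regimes follow by reading off the sign of $g(x)-x$. If $\alpha^2-4r<0$ the numerator is positive definite, so $g(x)>x$ to the left of the pole and $g(x)<x$ to its right; as there is no fixed point, an increasing orbit on the left branch cannot converge and must cross the pole in finitely many steps (or land exactly on it, where the iteration ceases to be defined), and symmetrically on the right branch, which is precisely the alternating/ill-defined dichotomy claimed; the Möbius ellipticity (conjugacy to a rotation, multiplier of modulus one) reinforces that no orbit converges. If $\alpha^2-4r=0$ then $r=\alpha^2/4$, the pole sits at $3\alpha/4$, and the numerator becomes $\alpha(x-\alpha/2)^2\ge 0$, so $g(x)>x$ for every $x\ne\alpha/2$ left of the pole; this gives a bounded increasing orbit converging to the unique parabolic fixed point $\alpha/2$ when started below it, and an escaping orbit when started in $(\alpha/2,\,3\alpha/4)$.

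For the main case $\alpha^2-4r>0$ I would first verify the geometric placement that makes the cobweb work, namely that both fixed points lie to the left of the pole, i.e. $\varphi_1^+<\alpha-r/\alpha$; substituting $\alpha\varphi_1^+=(\varphi_1^+)^2+r$ reduces this to the manifestly true inequality $4r^2>0$. On $[0,\varphi_1^+)$ the map $g$ is then continuous and increasing with fixed points $\varphi_1^-<\varphi_1^+$, and the factored numerator $\alpha(x-\varphi_1^-)(x-\varphi_1^+)$ (with positive denominator throughout the interval) fixes the sign of $g(x)-x$ to be positive on $[0,\varphi_1^-)$ and negative on $(\varphi_1^-,\varphi_1^+)$. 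Monotonicity then forces every orbit starting in $[0,\varphi_1^-)$ to increase to $\varphi_1^-$ and every orbit starting in $(\varphi_1^-,\varphi_1^+)$ to decrease to $\varphi_1^-$, establishing global convergence to the stable fixed point on $[0,\varphi_1^+)$; the multiplier computation $g'(\varphi_1^\pm)=(\varphi_1^\pm/\varphi_1^\mp)^2$, using $\varphi_1^+\varphi_1^-=r$, recovers the stability of $\varphi_1^-$ and instability of $\varphi_1^+$ already obtained by the convexity argument preceding the statement.

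The step I expect to be most delicate is not the local stability---that is essentially the convexity remark already in the text---but the \emph{global} control near the pole in the degenerate and elliptic regimes: one must argue that a monotone orbit trapped between the fixed point (or nothing, when $\alpha^2<4r$) and the singularity genuinely reaches or overshoots the pole in finitely many steps rather than accumulating short of it. This is where the absence of any further fixed point of $g$ is used decisively: a bounded increasing orbit would converge to a limit $L$ with $g(L)=L$, forcing $L$ to be a fixed point, contradicting either its absence or the blow-up $g\to+\infty$ at the pole. A secondary bookkeeping point is to confirm at each stage that the starting segment (for instance $g(0)=r\alpha/(\alpha^2-r)$) actually lands inside the advertised invariant interval, so that the cobweb is never derailed by an early excursion past the pole.
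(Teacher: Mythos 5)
Your overall strategy is the same as the paper's: on $\{\varphi_2=0\}$ the double iteration is a real M\"obius map with a pole at $\alpha-r/\alpha$, and everything is read off from its fixed points and the monotonicity of its two branches. Where you differ is in the technical devices, and your choices are arguably cleaner: the paper locates the fixed points and their stability by viewing the map as a hyperbola and invoking the convexity of its first branch together with $f(\alpha)=0$, whereas you get the same conclusions from the explicit factorisation $g(x)-x=\alpha\,(x^2-\alpha x+r)/\bigl(\alpha(\alpha-x)-r\bigr)$ and the exact multiplier identity $g'(\varphi_1^\pm)=(\varphi_1^\pm/\varphi_1^\mp)^2$ via $\varphi_1^+\varphi_1^-=r$ (both of which I checked). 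Your verification that $\varphi_1^+<\alpha-r/\alpha$ reduces to $4r^2>0$ is also correct and is a point the paper leaves implicit. You also correctly read the trichotomy as governed by $\alpha^2-4r$ (the labels $\alpha^2-r$ in the statement are typos), and you supply genuine arguments for the elliptic and parabolic regimes, which the paper essentially only asserts. The hyperbolic case and the elliptic case are complete as you present them.

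The one place where your write-up falls short of the statement is the parabolic case $\alpha^2=4r$. You conclude that an orbit started in $(\alpha/2,\,3\alpha/4)$ is ``escaping'', but the proposition claims that every orbit not hitting the pole converges to $\alpha/2$, and that claim is in fact true: after the finitely many increasing steps that carry the orbit past the pole, it lands on the right branch, where $g$ takes values strictly below the horizontal asymptote $r/\alpha=\alpha/4<\alpha/2$; the next iterate therefore re-enters the invariant half-line $(-\infty,\alpha/2)$, on which orbits increase to the parabolic fixed point. So the excursion past the pole happens exactly once and is followed by convergence, not escape. The same one-extra-step observation is what upgrades your ``must cross the pole in finitely many steps'' argument into the full description demanded by the statement (and it also disposes of initial data taken anywhere in $\mathbb{R}$, not just in $[0,3\alpha/4)$). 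With that paragraph added, your proof establishes the proposition in full.
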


\noindent
The region $\varphi_2 = 0$ is remarkable, since the dynamics can be studied completely in this region. Therefore, a natural idea is to determine what information can be obtained in the neighbourhood of such a region.\\
\noindent
In particular, we would like to determine the values of $\varphi_1$ for which the orbits of the two-collision mapping starting from $\left(\varphi_1,\varphi_2\right)$, with $\varphi_2$ small enough, are eventually converging towards the invariant region $\varphi_2$. We know that in the region $\varphi_2=0$ the two-collision mapping the fixed points $\varphi_1^\pm$ of the dynamics restricted to the invariant region $\varphi_2 = 0$. Indeed, we know that, when these two fixed points exist, only $\varphi_-$ is stable, and that its basin of attraction corresponds to $[0,\varphi_1^+[$. Therefore, we aim in particular to compare the region for which \eqref{SECT6EquatCondiStabiPh2=0} holds true with this basin of attraction.\\
\noindent
We obtain the following result.

\begin{theor}[Dynamics of the two-collision mapping \eqref{EQUATSS4.1Itera_2dim2Coll} close to the invariant region $\varphi_2 = 0$.]
\label{THEORSS4.1DynamDblIteP2<1}
Let $r$ be a strictly positive number, smaller than a certain positive constant $r_0 < 1$. Then, the dynamics of the double iteration \eqref{EQUATSS4.1Itera_2dim1Coll}, \eqref{EQUATSS4.1Itera_2dim2Coll} has two fixed points $0 < \varphi_1^- < \varphi_1^+$ in the invariant region $\varphi_2=0$. In addition, there exists a positive number $\overline{\varphi}_1 = \overline{\varphi}_1(r)$, and a curve $\varphi_2 = \psi (\varphi_1)$ such that if the initial datum $(\varphi_{1,0},\varphi_{2,0})$ of the double iteration satisfies $0 \leq \varphi_{1,0} < \overline{\varphi}_1$ and $0 \leq \varphi_{2,0} < \psi(\varphi_1)$, then the orbit of the dynamics produced by the double iteration starting from $(\varphi_{1,0},\varphi_{2,0})$ is globally well-posed, and converges exponentially fast towards $(\varphi_1^-,0)$.
\end{theor}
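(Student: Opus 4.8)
The plan is to treat $(\varphi_1^-,0)$ as a hyperbolic attracting fixed point of the two-dimensional double map \eqref{EQUATSS4.1Itera_2dim2Coll} and then to promote the resulting local attraction into a quantitative, globally well-posed statement by exhibiting an explicit forward-invariant region pinched around the segment $[0,\varphi_1^+)\times\{0\}$. Since the restricted dynamics on $\{\varphi_2=0\}$ is already fully understood (two fixed points $\varphi_1^\pm$, with $\varphi_1^-$ stable of basin $[0,\varphi_1^+)$), the entire difficulty lies in controlling the transverse variable $\varphi_2$ and its coupling to $\varphi_1$. Throughout, the hypothesis ``$r<r_0$'' is to be read together with $\alpha$ lying in the admissible angular window where the nearly-linear collapse is meaningful, which is exactly the range in which the two fixed points exist.

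First I would linearise at $(\varphi_1^-,0)$. Because $\{\varphi_2=0\}$ is invariant, $\varphi_2''(\varphi_1,0)\equiv 0$, so $\partial_{\varphi_1}\varphi_2''$ vanishes along the axis and the Jacobian is triangular; its eigenvalues are the diagonal entries $\mu_1=\partial_{\varphi_1}\varphi_1''(\varphi_1^-,0)=f'(\varphi_1^-)$ and $\mu_2=\partial_{\varphi_2}\varphi_2''(\varphi_1^-,0)$. From the restricted map \eqref{EQUATSS4.1DoublIteraPhi20} one obtains $f'(\varphi_1^-)=(\varphi_1^-)^4/r^2$, which is $<1$ since $(\varphi_1^-)^2<\varphi_1^-\varphi_1^+=r$. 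For $\mu_2$ I would use the chain rule across the two elementary collisions: differentiating the one-collision expression \eqref{EQUATSS4.1Itera_2dim1Col2} at $\varphi_2=0$ gives $\partial_{\varphi_2}\varphi_2'\big|_{\varphi_2=0}=\varphi_1/(\alpha-\varphi_1)^2$, and because $\varphi_2''$ vanishes identically on $\{\varphi_2'=0\}$ the mixed term of the chain rule drops out, leaving $\mu_2=\big(\varphi_1^-/(\alpha-\varphi_1^-)^2\big)^2$. Using the fixed-point relation $\alpha-\varphi_1^-=r/\varphi_1^-$ this collapses to $\mu_2=(\varphi_1^-)^6/r^4$.

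Next I would verify $\mu_2<1$. One has $\mu_2<1\iff(\varphi_1^-)^3<r^2\iff\varphi_1^-<r^{2/3}$, and since $\varphi_1^-\varphi_1^+=r$ this is equivalent to $\varphi_1^+>r^{1/3}$, i.e.\ to $\alpha>r^{1/3}+r^{2/3}$, which is precisely the Zhou--Kadanoff stability threshold \eqref{EQUATSS3.4ZKCl2CondiStabi}. I would fix $r_0$ small enough that the window $r^{1/3}+r^{2/3}<\alpha<(1+r)/2$ is nonempty (this also forces $\alpha^2>4r$, so the two fixed points genuinely exist); on this window both eigenvalues lie strictly inside the unit disc, $(\varphi_1^-,0)$ is a linearly attracting node, and a local basin with exponential convergence follows from the standard contraction/stable-manifold argument.

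The substantial work is the global statement. I would build an explicit forward-invariant region $D=\{0\le\varphi_1<\overline{\varphi}_1,\ 0\le\varphi_2<\psi(\varphi_1)\}$, with $\overline{\varphi}_1=\varphi_1^+$ and $\psi>0$ on $[0,\varphi_1^+)$ vanishing at $\varphi_1^+$, on which: (i) the map is well-defined, the radicands $1-2b\varphi_2$ and $1-2(b/T)\varphi_2'$ staying positive and the denominators in \eqref{EQUATSS4.1Itera_2dim2Coll} bounded away from $0$, enforced via the positivity constraints of Proposition \ref{PROPOSS4.1DomaiPositDbIte}; (ii) $\varphi_1''$ is a $C^1$-small (in $\varphi_2$) perturbation of the contraction $f$, so the $\varphi_1$-component is driven into any prescribed neighbourhood of $\varphi_1^-$; and (iii) $\varphi_2''\le\theta\,\varphi_2$ with $\theta<1$, obtained by combining $\mu_2<1$ with the smallness of $\varphi_2$ to absorb the nonlinear remainder. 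The curve $\psi$ is determined by a cone/graph-transform condition, namely that $\varphi_2<\psi(\varphi_1)$ forces $\varphi_2''<\psi(\varphi_1'')$; iterating (ii)--(iii) then yields $\varphi_{2,n}\to 0$ geometrically and $\varphi_{1,n}\to\varphi_1^-$. The main obstacle is exactly this construction near the degenerate end $\varphi_1\to\varphi_1^+$: there the transverse multiplier $\big(\varphi_1/(\alpha-\varphi_1)^2\big)^2$ exceeds $1$ (recall it equals $\mu_2>1$ at $\varphi_1^+$) while the $\varphi_1$-contraction simultaneously degenerates, so $\varphi_2$ may transiently grow during the finitely many steps the orbit lingers near the repelling point $\varphi_1^+$. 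Choosing $\psi$ with $\psi(\varphi_1)\to 0$ as $\varphi_1\to\varphi_1^+$, so that this bounded transient cannot inflate $\varphi_2$ past $\psi$, and securing uniform control of the nonlinear remainders up to that end, is the delicate part; the remainder of the proof is explicit, if lengthy, estimation on \eqref{EQUATSS4.1Itera_2dim2Coll}.
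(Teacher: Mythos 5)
Your core strategy is the paper's: exploit the invariance of $\{\varphi_2=0\}$ to make the Jacobian of the double iteration upper triangular along the axis, read off the tangential and transverse multipliers on the diagonal, and deduce attraction of the axis from the transverse one. Your computations at the fixed point are correct and agree with the paper's: the paper obtains $\lambda_1=r^2/(\alpha^2-r-\alpha\varphi_1)^2$ and $\lambda_2=r\varphi_1/\bigl((\alpha-\varphi_1)(\alpha^2-r-\alpha\varphi_1)^2\bigr)$ at a general point of the axis, and substituting $\varphi_1^-(\alpha-\varphi_1^-)=r$ reproduces your $(\varphi_1^-)^4/r^2$ and $(\varphi_1^-)^6/r^4$; your identification of $\lambda_2(\varphi_1^-)<1$ with $\alpha>r^{1/3}+r^{2/3}$ is likewise consistent with Theorem \ref{THEORSS4.2PointFixe_ZhKad}, which asserts that the Zhou--Kadanoff equilibrium exists exactly when $p_-$ is stable in the plane.

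The gap is in the global step. The paper does not take $\overline{\varphi}_1=\varphi_1^+$: it requires the \emph{pointwise} condition $\lambda_2(\varphi_1)<1$ along the axis, which reduces to the sign of the cubic $P_r$ of \eqref{SECT6EquatPolynStabiPh2=0}, and for $r$ small it shows $P_r$ is increasing with a unique root $\overline{\varphi}_1\in\,]\varphi_1^-,\varphi_1^+[$ (using $P_r(0)<0$ and $P_r(\varphi_1^+)=\alpha r+o(r)>0$). On $[0,\overline{\varphi}_1)$ the transverse direction then contracts at every step and no transient growth needs to be absorbed. Your choice $\overline{\varphi}_1=\varphi_1^+$ is strictly stronger than the statement requires, and the price is exactly the step you defer: on the entire interval $]\overline{\varphi}_1,\varphi_1^+[$ — not merely in a shrinking neighbourhood of $\varphi_1^+$ — one has $\lambda_2(\varphi_1)>1$, so $\varphi_2$ grows for every iteration the orbit spends there, the number of such iterations diverges as $\varphi_{1,0}\to\varphi_1^+$, and the required decay of $\psi$ must be calibrated against the escape rate $f'(\varphi_1^+)>1$ while uniformly controlling the nonlinear remainders of a very unwieldy map. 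None of this is carried out, and note that a purely local argument at $(\varphi_1^-,0)$ does not by itself cover initial data with $\varphi_{1,0}$ near $0$ as the theorem demands. The fix is simply to replace your $\overline{\varphi}_1$ by the root of $P_r$; with that choice your steps (i)--(iii) close the proof along essentially the same lines as the paper.
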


\begin{proof}
The two-collision mapping describing the dynamics of the particles is obtained by composing two simpler mappings, the first one being described by the expressions \eqref{EQUATSS4.1Itera_2dim1Coll}, while the second mapping is obtained in \eqref{EQUATSS4.1Itera_2dim2Coll} from the expression of the first, except that $b$ is replaced by $a=b/T$, and $T$ by $1/T$. In particular, the first mapping can be written as
\begin{align*}
\left( \varphi_1,\varphi_2 \right) \mapsto F^1 \left(\varphi_1,\varphi_2\right) = \left( f^1_1(\varphi_1,\varphi_2), f^1_2(\varphi_1,\varphi_2) \right),
\end{align*}
(the superscript indicates which collision we are considering, while the subscript describes the coordinates) with
\begin{align}
\label{EQUATSS4.1Itera1CollCoord}
\left\{
\begin{array}{rl}
f^1_1(x,y) &= \frac{r \sqrt{1-2by}}{ \left( \alpha+\frac{1}{T}\right)\sqrt{1-2by} - x - \frac{1}{T}} \\
f^1_2(x,y) &= \frac{\left(1 - \sqrt{1-2by}\right) \left( x + \frac{1}{2T}\left(1-\sqrt{1-2by}\right) \right)}{\left( \left( \alpha + \frac{1}{T} \right) \sqrt{1-2by} - x - \frac{1}{T} \right)^2} \cdotp
\end{array}
\right.
\end{align}
The mapping \eqref{EQUATSS4.1Itera1CollCoord} is acting on $\mathbb{R}^2$. It turns out that the Jacobian matrix of the mapping $\left(\varphi_1,\varphi_2\right) \mapsto \left(f^1_1(\varphi_1,\varphi_2),f^1_2(\varphi_1,\varphi_2)\right)$ of the first collision can be computed explicitly, providing a result of a reasonable complexity. The computations are presented in the Appendix (see Section \ref{APPENSS___CalcuDbIteJacob}). As for the first coordinate $f^1_1$, we obtain:
\begin{align*}
\partial_x f_1(x,y) = \frac{r\sqrt{1-2by}}{\left(\left( \alpha + \frac{1}{T} \right) \sqrt{1-2by} - x - \frac{1}{T} \right)^2},
\end{align*}
as well as
\begin{align}
\partial_y f_1(x,y) &= \frac{rb \left( x + \frac{1}{T} \right)}{ \sqrt{1-2by} \left( \left( \alpha + \frac{1}{T} \right) \sqrt{1-2by} - x - \frac{1}{T} \right)^2 } \cdotp
\end{align}
Concerning the second coordinate $f^1_2$, we find first
\begin{align}
\partial_x f^1_2(x,y) &= \frac{\left( 1 - \sqrt{1-2by} \right) \left[ \alpha \sqrt{1-2by} + x \right]}{ \left( \left( \alpha + \frac{1}{T} \right) \sqrt{1-2by} - x - \frac{1}{T} \right)^3 } \cdotp
\end{align}
For the second derivative of the last coordinate, we obtain finally:
\begin{align}
\partial_y f^1_2(x,y) &= \frac{ x }{\sqrt{1-2by} \left( \left(\alpha + \frac{1}{T}\right)\sqrt{1-2by} - x - \frac{1}{T} \right)^2} \nonumber\\
&\hspace{10mm} +\frac{(1-\sqrt{1-2by})\left( (\alpha+1/T) \left( \sqrt{1-2by} + 2x + (1-\sqrt{1-2by})/T \right) - x - 1/T \right) }{\sqrt{1-2by} \left( \left(\alpha + \frac{1}{T}\right)\sqrt{1-2by} - x - \frac{1}{T} \right)^3} \cdotp
\end{align}
\noindent
With the help of the partial derivatives of the coordinates, we can now easily compute the Jacobian matrix of the two-collision mapping in the region $\varphi_2 = 0$. Indeed, first, we observe that the expressions of the partial derivatives take a particular simple form when we choose an initial datum in the region $\varphi_2 = 0$. We get:
\begin{align*}
\partial_x f^1_1(x,0) = \frac{r}{\left( \left(\alpha+\frac{1}{T}\right) - x - \frac{1}{T} \right)^2} = \frac{r}{(\alpha-x)^2},\\
\partial_y f^1_1(x,0) = \frac{rb\left( x + \frac{1}{T} \right)}{\left( \left(\alpha+\frac{1}{T}\right) - x - \frac{1}{T} \right)^2} = \frac{rb\left( x + \frac{1}{T} \right)}{(\alpha-x)^2} \cdotp
\end{align*}
In the same way, we obtain
\begin{align*}
\partial_x f^1_2(x,0) &= 0,\\
\partial_y f^1_2(x,0) &= \frac{\left[ x\left( \left(\alpha + \frac{1}{T}\right) - x - \frac{1}{T} \right) \right]}{ \left( \left(\alpha + \frac{1}{T}\right) -x - \frac{1}{T} \right)^3} = \frac{x(\alpha-x)}{(\alpha-x)^3} = \frac{x}{(\alpha-x)^2} \cdotp
\end{align*}
Hence the following expression for the Jacobian matrix of the mapping of the first collision, in the region $\varphi_2 = 0$:
\begin{align*}
\text{Jac} F^1(\varphi_1,0) =
\begin{pmatrix}
\frac{r}{\left( \alpha - \varphi_1 \right)^2} & \frac{rb \left(\varphi_1 + \frac{1}{T}\right)}{ \left( \alpha - \varphi_1 \right)^2} \\
0 & \frac{\varphi_1}{\left( \alpha - \varphi_1 \right)^2}
\end{pmatrix} \cdotp
\end{align*}
Now, writing the two-collision mapping, describing the two collisions, as $F^2 \circ F^1$, where $F^2$ is the mapping of the second collision, obtained from the expression of $F^1$ by replacing $b$ by $b/T$ and $T$ by $1/T$, we find by the chain rule:
\begin{align*}
\text{Jac} \left( F^2 \circ F^1 \right) (\varphi_1,0) = \text{Jac} F^2 (\varphi_1',0) \cdot \text{Jac} F^1(\varphi_1,0),
\end{align*}
where $(\varphi_1',0)$ is the image of $(\varphi_1,0)$ by the first collision mapping $F^1$. In particular, we have
\begin{align*}
\varphi_1' = \frac{r}{\left(\alpha + \frac{1}{T}\right) - \varphi_1 - \frac{1}{T}} = \frac{r}{\alpha - \varphi_1},
\end{align*}
and
\begin{align*}
\text{Jac} F^2 (\varphi_1',0) =
\begin{pmatrix}
\frac{r}{(\alpha - \varphi_1')^2} & \frac{\frac{rb}{T}\left(\varphi_1' + T\right)}{(\alpha - \varphi_1')^2} \\
0 & \frac{\varphi_1'}{(\alpha - \varphi_1')^2}
\end{pmatrix} \cdotp
\end{align*}
Since we have
\begin{align*}
\alpha - \varphi_1' = \alpha - \frac{r}{\alpha - \varphi_1} = \frac{\alpha(\alpha-\varphi_1)-r}{\alpha - \varphi_1} = \frac{\alpha^2 - r - \alpha\varphi_1}{\alpha - \varphi_1},
\end{align*}
we obtain in the end
\begin{align*}
\text{Jac} \left( F^2 \circ F^1 \right) (\varphi_1,0) &=
\begin{pmatrix}
\frac{r}{(\alpha - \varphi_1')^2} & \frac{\frac{rb}{T}\left(\varphi_1' + T\right)}{(\alpha - \varphi_1')^2} \\
0 & \frac{\varphi_1'}{(\alpha - \varphi_1')^2}
\end{pmatrix}
\cdotp
\begin{pmatrix}
\frac{r}{\left( \alpha - \varphi_1 \right)^2} & \frac{rb \left(\varphi_1 + \frac{1}{T}\right)}{ \left( \alpha - \varphi_1 \right)^2} \\
0 & \frac{\varphi_1}{\left( \alpha - \varphi_1 \right)^2}
\end{pmatrix} \\
& = \begin{pmatrix}
\frac{r^2}{(\alpha-\varphi_1')^2(\alpha-\varphi_1)^2} & \frac{r^2b \left(\varphi_1+\frac{1}{T}\right)}{(\alpha-\varphi_1')^2(\alpha-\varphi_1)^2} + \frac{r b \varphi_1 (\varphi_1'+T)}{T(\alpha-\varphi_1')^2(\alpha-\varphi_1)^2} \\
0 & \frac{\varphi_1 \varphi_1'}{(\alpha - \varphi_1')^2(\alpha - \varphi_1)^2}
\end{pmatrix} \\
&= \begin{pmatrix}
\frac{r^2}{(\alpha^2 - r - \alpha \varphi_1)^2} & \frac{r b \left( rT(\varphi_1+1/T) + \varphi_1\left(\frac{r}{\alpha-\varphi_1} + T\right) \right)}{T(\alpha^2 - r - \alpha \varphi_1)^2} \\
0 & \frac{r \varphi_1}{(\alpha-\varphi_1)(\alpha^2 - r - \alpha\varphi_1)^2}
\end{pmatrix} \cdotp
\end{align*}
The Jacobian matrix of the two-collision mapping is upper triangular in the region $\varphi_2=0$. Therefore we can immediately read the eigenvalues of this matrix on its diagonal. The first eigenvalue, associated to the eigenspace spanned by $\left(1,0\right)$, is
\begin{align*}
\lambda_1 = \frac{r^2}{(\alpha^2 - r - \alpha \varphi_1)^2} \cdotp
\end{align*}
Only the second eigenvalue, which is
\begin{align*}
\lambda_2 = \frac{r \varphi_1}{(\alpha-\varphi_1)(\alpha^2 - r - \alpha\varphi_1)^2},
\end{align*}
encodes some two-dimensional phenomena. The expression of $\lambda_2$ has two singularities, respectively $\alpha - r/\alpha$ and $\alpha$, both of them being larger than the fixed points of the mapping of the double iteration restricted to $\varphi_2 = 0$. Therefore, on our domain of study ($0 \leq \varphi_1 \leq \alpha - r/\alpha$), this second eigenvalue is never singular.\\
Considering the image of a small perturbation of a point lying in the invariant region $\varphi_2 = 0$, that is, considering the image of $(\varphi_1,h)$, where $h$ is small, we find:
\begin{align*}
F(\varphi_1,h) = F\left( (\varphi_1,0) + (0,h) \right) \simeq \underbrace{F(\varphi_1,0)}_{= (\varphi_1'',0)} + h \left( \partial_y F_1(\varphi_1,0) , \partial_y F_2(\varphi_1,0) \right),
\end{align*}
so that
\begin{align*}
F_2(\varphi_1,h) \simeq \lambda_2 h.
\end{align*}
We see therefore that the invariant region $\varphi_2 = 0$ is attracting (at least in a small neighbourhood) if $\left\vert \lambda_2 \right\vert < 1$. Such a condition can be rewritten as
\begin{align*}
\frac{r \varphi_1}{(\alpha-\varphi_1)(\alpha^2-r-\alpha \varphi_1)^2} < 1,
\end{align*}
or again
\begin{align}
\label{SECT6EquatCondiStabiPh2=0}¨
(\varphi_1-\alpha)(\alpha^2-r-\alpha\varphi_1)^2 + r \varphi_1 < 0,
\end{align}
when $\varphi_1 < \alpha$. The left hand side of \eqref{SECT6EquatCondiStabiPh2=0} can be rewritten as the polynomial:
\begin{align}
\label{SECT6EquatPolynStabiPh2=0}
P_r(x) = \alpha^2 x^3 - \alpha (3\alpha^2 - 2r) x^2 + \left( (\alpha^2-r)(3\alpha^2-r) + r\right) x - \alpha(\alpha^2-r)^2.
\end{align}
First, let us notice that $P_r(0) = - \alpha(\alpha^2-r)^2 < 0$. In particular, the region $\varphi_2 = 0$ is always attracting in the plane $\left(\varphi_1,\varphi_2\right)$, at least for $\varphi_1$ and $\varphi_2$ small enough.\\
Now, let us notice that the explicit comparison between the fixed points and the condition \eqref{SECT6EquatCondiStabiPh2=0} is technically difficult, due to the third degree polynomial $P_r$. However, we can state the following results. The derivative of $P_r$ writes:
\begin{align*}
P'_r(x) &= \frac{\dd}{\dd x}\left[ (x-\alpha) (\alpha^2 - r - \alpha x)^2 + r x \right] \\
&= (\alpha^2 - r - \alpha x)^2 - 2 \alpha (x-\alpha) (\alpha^2 - r - \alpha x) + r,
\end{align*}
so the minimum of the derivative $P'_r$ lies at
\begin{align*}
x = \frac{2 \alpha(3\alpha^2 - 2r)}{6 \alpha^2} = \alpha - \frac{2r}{3 \alpha} \cdotp
\end{align*}
Let us note that when $r=0$, we find $P_r(\alpha) = \alpha^5 - 3 \alpha^5 + 3 \alpha^5 - \alpha^5 = 0$ as well as $P_r'(\alpha) = 0$. We can in addition develop the expression of $P_r'\left(\alpha-\frac{2r}{3\alpha}\right)$ when $r$ is small, and we find:
\begin{align*}
P'(\alpha - 2r/(3\alpha)) = \left( -r +2r/3 \right)^2 - 2 \alpha \left(-2r/(3\alpha)\right) ( -r +2r/3 ) + r = r + o(r) \hspace{3mm} \text{as } r\rightarrow 0.
\end{align*}
Therefore, we can deduce that for all $r$ small enough, the derivative $P'_r$ is always positive. In particular, the function $P_r$ has a unique zero, that is positive for all $r$ small enough. In the same way, since $\varphi_1^+ = \frac{\alpha + \sqrt{\alpha^2 - 4r}}{2} = \frac{\alpha}{2} \left( 1 + \sqrt{1 + \frac{4r}{\alpha^2}}\right) = \alpha(1 + r/\alpha^2) + o(r) = \alpha + r/\alpha + o(r)$ as $r \rightarrow 0$, we find
\begin{align*}
P_r(\varphi_1^+) &= \alpha^2 (\alpha + r/\alpha)^3 - \alpha(3\alpha^2 - 2r)(\alpha + r/\alpha)^2 + \left((\alpha^2 -r)(3\alpha^2-r) + r\right)(\alpha + r/\alpha) - \alpha(\alpha^4 - 2r\alpha^2) + o(r)\\
&= \alpha^2(\alpha^3 + 3r\alpha) - \alpha(3\alpha^2-2r)(\alpha^2 + 2r) + \left( (3\alpha^4 - 4\alpha^2 r) + r \right)(\alpha + r/\alpha) - \alpha^5 + 2 \alpha^3 r + o(r)\\
&= \left[ \alpha^3(3-6+2+3-4+2) + \alpha \right] r + o(r)\\
&= \alpha r + o(r)
\end{align*}
as $r \rightarrow 0$. So, for all $r$ small enough, $P_r$ evaluated at the second fixed point $\varphi_1^+$ is positive.\\
As a consequence, for all $r$ small enough, there exists $\overline{\varphi}_1 \in\, ]\varphi_1^-,\varphi_1^+[$ such that in the region $\varphi_1 < \overline{\varphi}_1$, the stable line $\varphi_2 = 0$ is attracting, up to choose $\varphi_2$ small enough. As a conclusion, the Zhou-Kadanoff regime is attracting, up to choose an initial datum close enough to the invariant region $\varphi_2 = 0$, and far enough from the unstable fixed point.\\
The previous computations provides in the end the stability of the region $\varphi_2=0$ in the plane $\left(\varphi_1,\varphi_2\right)$. In addition, we deduce directly the long-time behaviour of the orbits of the dynamics of the double iteration, starting close enough to the invariant region $\varphi_2 = 0$, together with a lower bound on the rate of convergence: the orbits converge exponentially fast towards the unique stable fixed point $(\varphi_1^-,0)$, since the second coordinate of any orbit behaves essentially like a geometric sequence, where the ratio is given by the second eigenvalue $\lambda_2$ of the Jacobian matrix of the double iteration. On the other hand, the dynamics of the first coordinate is essentially governed by the dynamics restricted to the invariant region $\varphi_2 = 0$, which converges also exponentially fast towards the stable fixed point $\varphi_1$. The proof of Theorem \ref{THEORSS4.1DynamDblIteP2<1} is complete.
\end{proof}

\begin{remar}
The result of Theorem \ref{THEORSS4.1DynamDblIteP2<1} is a perturbative result, in the sense that, from the dynamics on the line $\varphi_2=0$, we can deduce the dynamics of the two-collision mapping \eqref{EQUATSS4.1Itera_2dim2Coll} locally around this line.
\end{remar}

\begin{remar}
Along the proof of Theorem \ref{THEORSS4.1DynamDblIteP2<1}, we obtained that $\partial_x F_2$ is zero. Such a result corresponds to the fact that a small perturbation in the first coordinate has no first order effect on the second coordinate of the image, when one starts from an initial datum in the region $\varphi_2 = 0$. We recover that the region $\varphi_2 = 0$ is invariant under the action of the two-collision mapping.\\
In addition, we obtained that the first eigenvalue $\lambda_1$ of $\text{Jac}\left(F^2\circ F^1\right)$ is always positive (when it is well-defined), and has an increasing value as $\varphi_1$ increases from $0$, until its singularity, reached at $\varphi_1 = \alpha - r/\alpha$, this singularity being also the singular point of the dynamics when restricted to the invariant region $\varphi_2 = 0$. For $\varphi_1 = 0$, the first eigenvalue $\lambda_1$ is equal to
\begin{align*}
\frac{r^2}{\alpha^2-r}\cdotp
\end{align*}
When restricted to the invariant region $\varphi_2 = 0$, the first coordinate $F^1$ governs entirely the dynamics, in particular $\lambda_1$ plays a decisive role in the stability of the fixed points.
\end{remar}
\noindent
Determining the basin of attraction of the invariant region $\varphi_2 = 0$, that is, ultimately, of the stable fixed point $(\varphi_1^+,0)$, turns out to be a delicate question. However, this question is central, for studying completely this basin of attraction is equivalent to determine completely the domain of validity of the Zhou-Kadanoff regime in terms of initial data in the plane $(\varphi_1,\varphi_2)$. Let us conclude this section with results of numerical simulations of the two-collision mapping that suggest a geometrical description of the basin of attraction of $\varphi_2 = 0$.

\subsection{Qualitative behaviour of the two-collision mapping: observations and conjectures}
\label{SSSCT4.1.4NumerConjectuDCM} 

We collected in the previous section the rigorous results we obtained concerning the two-collision mapping \eqref{EQUATSS4.1Itera_2dim2Coll}. Nevertheless, such results hold only in a neighbourhood of the invariant region $\varphi_2=0$. Therefore, it would be valuable to investigate its behaviour, beyond the local results around the stable line $\varphi_2 = 0$. In this section, we will report numerical observations we made concerning the orbits of the two-collision mapping. We will consider the case when the two-collision mapping \eqref{EQUATSS4.1Itera_2dim2Coll} has two different fixed points (one of them being stable) exist on the stable manifold $\varphi_2 = 0$, as described in Theorem \ref{THEORSS4.1DynamDblIteP2<1}.

\paragraph{The domain of global well-posedness of the orbits.}
We discussed already that the possibility to compute an orbit of the two-collision mapping \eqref{EQUATSS4.1Itera_2dim2Coll} requires four conditions: in order to compute the first collision \eqref{EQUATSS4.1Itera_2dim1Coll}, we need (besides the physically motivated constraints $\varphi_1',\varphi_1'',\varphi_2',\varphi_2'' \geq 0$):
\begin{itemize}
\item $1-2b\varphi_2 \geq 0$, so that $\sqrt{1-2b\varphi_2}$ is well-defined,
\item and $\left(\alpha+1/T\right)\sqrt{1-2b\varphi_2} - \varphi_1 - 1/T \neq 0$ in order to be able to divide by this quantity.
\end{itemize}
Similarly, for the second collision, we need:
\begin{align*}
1-2a\varphi_2' \geq 0 \hspace{5mm} \text{and} \hspace{5mm} (\alpha+T)\sqrt{1-2a\varphi_2'} - \varphi_1' - T \neq 0.
\end{align*}
These four conditions can be rewritten only in terms of the variables $\varphi_1$, $\varphi_2$. In order to apply further iterates of the two-collision mapping, we would find four new conditions for each iteration. A natural question is to describe the set $(\varphi_1,\varphi_2)$ satisftying all these conditions together. To do so, let us denote by $\mathcal{D} \subset \left(\mathbb{R}_+\right)^2$ the largest invariant set in the quadrant $\varphi_1 \geq 0,\varphi_2 \geq 0$ under the action of the two-collision mapping, that we will call the \emph{domain of the two-collision mapping}.

\paragraph{Description of the numerical simulations.} Determining $\mathcal{D}$ seems a difficult problem, but we performed numerical simulations in order to obtain some insight about this set. The code was designed to store and print the initial data generating orbits well-defined for more than a certain number $N$ of collisions (which corresponds to fulfill the four conditions of all of the $N$ first iterates of \eqref{EQUATSS4.1Itera_2dim2Coll}). 
At this step, the boundary of the domain $\mathcal{D}$ could have been a very complicated set, as it is sometimes the case with two-dimensional dynamical systems (see for instance \cite{Draz005}). Though, repeated simulations, with various choices of parameters, exhibited all a regular structure for the domain $\mathcal{D}$. Considering finer and finer grids of initial data for the orbits, or longer and longer orbits (that is, with $N$ larger and larger), we observed on the numerical simulations the that the domain $\mathcal{D}$ is not empty (as expected), and it presented a regular boundary (contrary to the Mandelbrot set, for instance). More explicitly, it seems that there exists a point $p_\text{up}$ of coordinates $(0,y_{p_\text{up}})$ in the $\left(\varphi_1,\varphi_2\right)$-plane, and a curve $\mathcal{S}$ through that point, which is the boundary of the domain $\mathcal{D}$, and that is the graph of a function $\Phi$ of the first variable $\varphi_1$, decreasing until it crosses the first axis, through the larger fixed point $p^+ = \left(\varphi_+,0\right)$ of the two-collision mapping on the stable manifold $\varphi_2=0$. We can rephrase our numerical observations as follows: if we denote by $\left(x,\Phi(x)\right)$ the coordinates of the points of the curve $\mathcal{S}$ that we observed, it seems that $0 \leq \varphi_2 < \Phi(\varphi_1)$ if and only if the orbit issued from $\left(\varphi_1,\varphi_2\right)$ is globally well-defined. This curve $\mathcal{S}$, that we clearly observed, and that we pictured in red in Figure \ref{FIGURSS4.1PortrPhasePhi12} below, seems very regular, and can be interpreted as a \emph{separatrix}: below the curve, we have globally well-posed dynamics, and above, the orbits leave $\left(\mathbb{R}_+\right)^2$ or become ill-defined after a finite number of iterations. Below this curve, we represented the domain $\mathcal{D}$ as a dotted surface on Figure \ref{FIGURSS4.1PortrPhasePhi12}.

\paragraph{Behaviour of the orbits.} Once we had a clear idea of where the globally well-defined orbits are located in the $\left(\varphi_1,\varphi_2\right)$-plane, the next natural question is to study their behaviour. We plotted numerically several orbits, and observed the following facts.
\begin{itemize}
\item It seems that all the orbits starting below the separatrix $\mathcal{S}$ eventually converged towards the stable fixed point $p_- = \left(\varphi_-,0\right)$, that lies inside the domain $\mathcal{D}$.
\item In a vertical stripe in $\mathcal{D}$ above $p_-$, the orbits seem to concentrate along a vertical curve, approaching vertically $p_-$, as pictured on Figure \ref{FIGURSS4.1PortrPhasePhi12}.
\item For initial data in $\mathcal{D}$ of the form $\left(\varphi_{1,0},\varphi_{2,0}\right)$, with $\varphi_{1,0} > \varphi_-$ and $\varphi_{2,0}$ small (typically, for initial data in the right bottom corner of $\mathcal{D}$, close to $p_+$), the orbits starting from these points start to move upwards, following the separatrix from below, then bind to the left, until they reach the thin vertical stripe above the stable fixed point $p_-$. Finally, the orbits match with the ones we observed already in this stripe, and converge towards the fixed point $p_-$.
\end{itemize}
Typical orbits are represented in dashed lines on Figure \ref{FIGURSS4.1PortrPhasePhi12}.

\begin{figure}[h!]
    \centering
   \includegraphics[width=8cm, trim = 0cm 0cm 0cm 0cm]{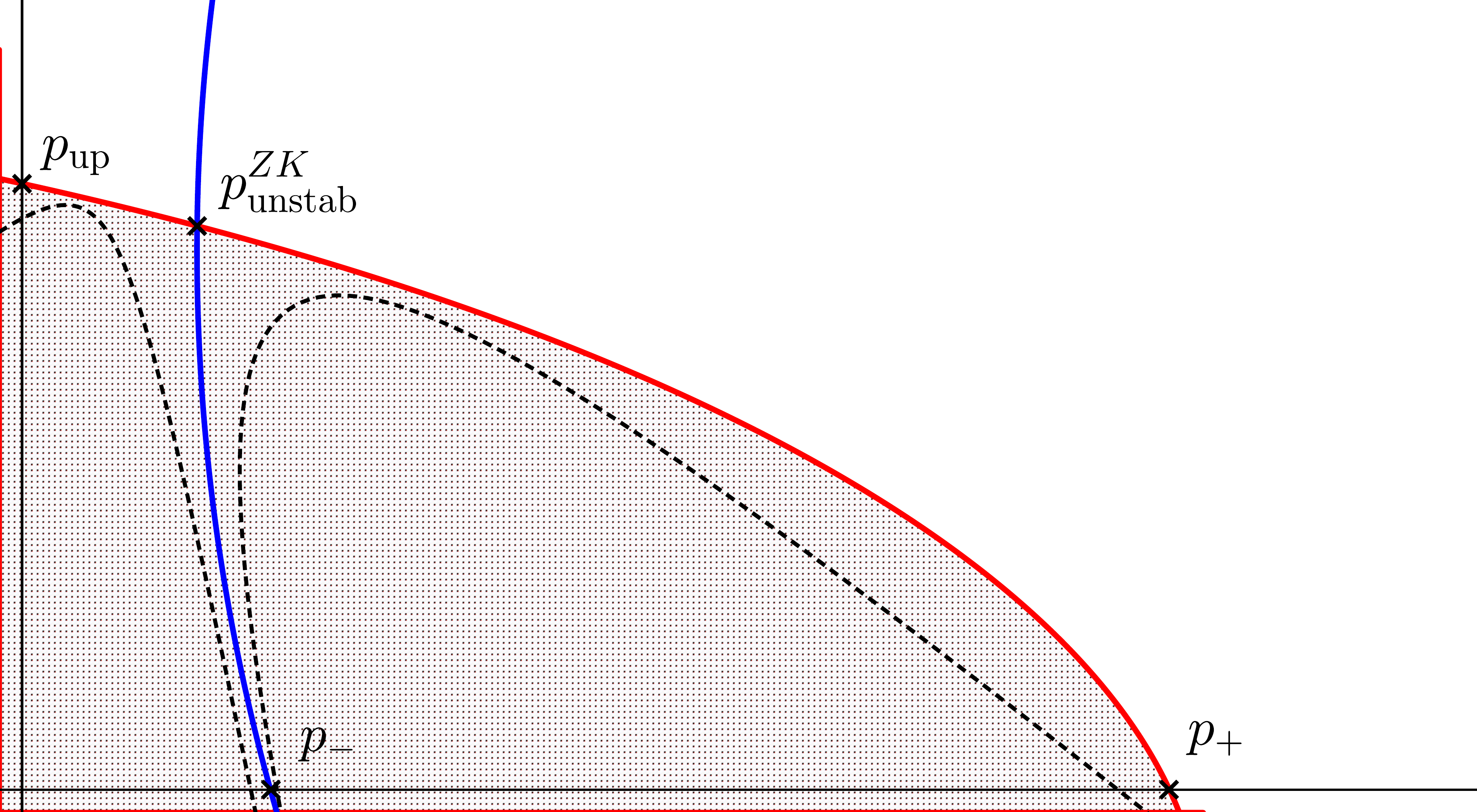}
   \caption{The phase portrait of the two-collision mapping \eqref{EQUATSS4.1Itera_2dim2Coll}, as observed in numerical simulations.}
   \label{FIGURSS4.1PortrPhasePhi12}
\end{figure}

\paragraph{Conjectures.} From the typical orbits we observed, we can infer the dynamics of orbits starting \emph{exactly} on the separatrix $\mathcal{S}$: it seems that such orbits follow the curve, moving upwards, to the left for orbits starting close to $p_+$, and to the right for orbits starting close to $p_\text{up}$, until a point that is approximately above the stable fixed point $p_-$. This seems to mean to the separatrix $\mathcal{S}$ is the stable manifold around the unstable fixed point $p_+$. The point of the separatrix $\mathcal{S}$ approached by the orbits in the first part of their evolution, that we represented and denoted by $p_\text{unstab}^{ZK}$ on Figure \ref{FIGURSS4.1PortrPhasePhi12}, seemingly is an unstable fixed point. Let us observe that an unstable fixed point outside the stable manifold $\varphi_2=0$ was already identified by Zhou and Kadanoff in a particular case, that we discuss in the next section. We believe that the limit of the orbits contained in the separatrix $\mathcal{S}$ is the equilibrium of Zhou and Kadanoff (hence the notation). Finally, it seems that a curve $\mathcal{V}$ links the two fixed points $p_\text{unstab}^{ZK}$ and $p_-$, such that the orbits starting on such a curve remain on it, travelling from the equilibrium of Zhou and Kadanoff towards the stable fixed point $p_-$ (that we proved to be stable, at least locally, in the plane $\left(\varphi_1,\varphi_2\right)$). Such a curve is also represented on Figure \ref{FIGURSS4.1PortrPhasePhi12}: this is the solid line through $p_-$. The orbits computed numerically in the thin stripe above $p_-$ suggest indeed that such a curve $\mathcal{V}$ exists.\\
\newline
In summary, concerning the dynamics of the two-collision mapping \eqref{EQUATSS4.1Itera_2dim2Coll} we postulate the existence of:
\begin{enumerate}
\item the invariant set $\mathcal{S}$ under the action of the two-collision mapping, which is a curve containing the unstable fixed point $p_+$ and a point of the second axis denoted by $p_\text{up}$, parametrized by a decreasing function $\Phi$, where $\left(\varphi_1,\varphi_2\right) \in \mathcal{S} \Leftrightarrow \varphi_2 = \Phi(\varphi_1)$,
\item another invariant set $\mathcal{V}$, which is a curve containing the stable fixed point $p_-$,
\item another unstable equilibrium $p_\text{unstab}^{ZK}$ with strictly positive coordinates, located at the intersection between the curves $\mathcal{S}$ and $\mathcal{V}$, this equilibrium corresponding to the one identified in \cite{ZhKa996} in the particular case $a=b$.
\end{enumerate}
In addition, we postulate the following behaviours:
\begin{enumerate}
\setcounter{enumi}{3}
\item the orbits are globally well-defined if and only if they intersect the space delimited by the first quadrant, and which is below $\mathcal{S}$, and such orbits all converge towards $p_-$,
\item the curve $\mathcal{S}$ is constituted with the unstable fixed point $p_+$, an heteroclinic orbit from $p_+$ to the Zhou-Kadanoff equilibrium $p_\text{unstab}^{ZK}$, this point $p_\text{unstab}^{ZK}$, and another orbit coming from the left, converging also towards the Zhou-Kadanoff equilibrium $p_\text{unstab}^{ZK}$,
\item in the quadrant $\left(\mathbb{R}_+\right)^2$, the curve $\mathcal{V}$ is constituted with the Zhou-Kadanoff equilibrium $p_\text{unstab}^{ZK}$, an heteroclinic orbit from this point to the stable point $p_-$, and this point itself.
\end{enumerate}
If our predictions are correct, the dynamics of the two-collision mapping would be completely understood. In particular, the ZK-regime would be almost surely certain, in the sense that, among the globally well-defined orbits, only the orbits intersecting the curve $\mathcal{S}$ would not satisfy asymptotically the ZK-regime. On the opposite, the orbits intersecting the curve $\mathcal{S}$ (hence, completely contained in that curve) would never satisfy the ZK-regime, although they would be globally well-defined. It seems that the domain $\mathcal{D}$ is extremely regular: the set of initial data leading to orbits asymptotically in the ZK-regime seems open, simply connected. Finally, it seems that the curve $\mathcal{S}$ is the transition between the ill-defined orbits of the two-iteration \eqref{EQUATSS4.1Itera_2dim2Coll} on the one hand, and asymptotically ZK-orbits on the other hand.

\section{Reduced system in the symmetric case ($a=b$)}
\label{SSECTIO4.2ReducSystmSymme}

The complete study of the general dynamical system \eqref{EQUATSS2.2IterationSystm1}, although reduced to the two-dimensional iteration \eqref{EQUATSS4.1Itera_2dim2Coll}, is still out of reach. Nevertheless, as it was already pointed out in \cite{ZhKa996} (see Section V, page 626), there exists a particularly symmetric case in which one can carry on completely the study of the fixed points of the mapping \eqref{EQUATSS4.1Itera_2dim2Coll}. This symmetric case corresponds to the assumption $a=b$, with $a$ and $b$ being the limiting norms of the tangential relative velocities are the same.\\
Let us note that such a case should not be observable in practice, since the constraint $a=b$ for the limiting state of the system is expected to correspond to a set of measure zero for the initial data of the complete dynamical system.\\
\newline
In the case $a=b$, the two different iterations of the reduced two-dimensional dynamical system coincide. Therefore, there is a single mapping to study, which writes as follows.

\begin{subequations}
\label{EQUATSS4.2ReducSystmSymme}
\begin{empheq}[left=\empheqlbrace]{align}
\varphi_1' &= \displaystyle{\frac{r\sqrt{1-2b\varphi_2}}{(\alpha+1)\sqrt{1-2b\varphi_2} - \varphi_1 - 1}}, \label{EQUATSS4.2ReducSystmSym_1}\\
\varphi_2' &= \displaystyle{\frac{\left( 1 - \sqrt{1-2b\varphi_2}\right)}{b}} \cdot \displaystyle{ \frac{ \left( \varphi_1 + \frac{1}{2} \left( 1 - \sqrt{1-2b\varphi_2}\right) \right)}{\left( (\alpha+1) \sqrt{1-2b\varphi_2} - \varphi_1 - 1\right)^2} } \cdotp \label{EQUATSS4.2ReducSystmSym_2}
\end{empheq}
\end{subequations}

\subsection{Unstable Zhou-Kadanoff equilibrium of the symmetric reduced system}
\label{SSSCT4.1.4EquilZhKadInsta}

Let us provide here a complete study of the fixed points of the iteration \eqref{EQUATSS4.2ReducSystmSymme}. The main result of this section, already observed in \cite{ZhKa996}, concerns the existence of a single fixed point outside the lines $\varphi_1=0$ and $\varphi_2 = 0$. Let us state and prove in full details such a result, which is the object of the following theorem.

\begin{theor}[Existence, uniqueness and unstability of the Zhou-Kadanoff equilibrium]
\label{THEORSS4.2PointFixe_ZhKad}
Let us consider the symmetric case $a=b$ of the two-collision mapping \eqref{EQUATSS4.1Itera_2dim2Coll}, which reduces to the one-collision mapping \eqref{EQUATSS4.2ReducSystmSymme}.\\
Then, in addition to the two fixed points $p_- = \left(\varphi_1^-,0\right)$ and $p_+ = \left(\varphi_1^+,0\right)$ on the line $\varphi_2 = 0$ already studied in Section \ref{SSSCT4.1.3PremiProprDbIte},
\begin{itemize}
\item if $r^{1/3} + r^{2/3} \geq \alpha$, there is no other fixed point in the domain $\varphi_1 > 0$, $\varphi_2 \geq 0$,
\item if $r^{1/3} + r^{2/3} < \alpha$, there exists a single fixed point in the domain $\varphi_1 > 0$, $\varphi_2 \geq 0$, denoted by $p_{\text{unstab.}}^{ZK}$. This fixed point is unstable.
\end{itemize}
The fixed point $p_{\text{unstab.}}^{ZK}$ will be called the Zhou-Kadanoff equilibrium. Such a fixed point exists if and only if the smallest fixed point $p_- = \left(\varphi_1^+,0\right)$ is stable in the plane $\left(\varphi_1,\varphi_2\right) \in \left(\mathbb{R}_+\right)^2$.
\end{theor}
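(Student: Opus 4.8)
The plan is to eliminate $\varphi_2$ in favour of $s:=\sqrt{1-2b\varphi_2}$, so that an off-axis fixed point corresponds to a pair $(\varphi_1,s)$ with $\varphi_1>0$ and $s\in(0,1)$ (the value $s=1$ being exactly the invariant line $\varphi_2=0$). Writing the common denominator as $D:=(\alpha+1)s-\varphi_1-1$, which must be positive for a collapsing orbit, the first fixed-point equation of \eqref{EQUATSS4.2ReducSystmSymme} reads $\varphi_1 D=rs$, equivalently
\[
s=\frac{\varphi_1(\varphi_1+1)}{N},\qquad N:=(\alpha+1)\varphi_1-r>0.
\]
Since $\varphi_2>0$ lets us cancel the factor $(1-s)/b$, the second fixed-point equation collapses to $D^2(1+s)=2\varphi_1+(1-s)$. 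Substituting the first relation, I would record the clean identities $D=r(\varphi_1+1)/N$ and $1-s=-(\varphi_1^2-\alpha\varphi_1+r)/N$. The last identity is the crux of the reduction: since $N>0$, one has $s<1$ (that is $\varphi_2>0$) if and only if $\varphi_1^2-\alpha\varphi_1+r<0$, i.e. $\varphi_1\in(\varphi_1^-,\varphi_1^+)$. In particular a genuine off-axis fixed point can exist only when the two on-axis fixed points are real and distinct ($\alpha^2>4r$), and its first coordinate is automatically trapped strictly between them.

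For existence and uniqueness I would reparametrize the whole locus of fixed points by $s$. Feeding $D=rs/\varphi_1$ into $D^2(1+s)=2\varphi_1+1-s$ produces a relation free of $\alpha$,
\[
2\varphi_1^3+(1-s)\varphi_1^2=r^2s^2(1+s).
\]
Because $\varphi_1\mapsto 2\varphi_1^3+(1-s)\varphi_1^2$ is strictly increasing on $(0,\infty)$ from $0$ to $+\infty$, it meets the positive constant $r^2s^2(1+s)$ exactly once, giving a unique positive root $\varphi_1=\varphi_1(s)$ for each $s\in(0,1)$. The remaining equation then fixes the parameter, $\alpha(s)=r/\varphi_1(s)+(\varphi_1(s)+1)/s-1$. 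A short computation yields the endpoint limits $\alpha(s)\to+\infty$ as $s\to0^+$ and $\alpha(s)\to r^{1/3}+r^{2/3}$ as $s\to1^-$ (where $\varphi_1(1)=r^{2/3}$). Thus, provided $s\mapsto\alpha(s)$ is strictly decreasing, it is a bijection from $(0,1)$ onto $(r^{1/3}+r^{2/3},+\infty)$: for $\alpha>r^{1/3}+r^{2/3}$ there is exactly one admissible $s$, hence exactly one off-axis fixed point, and for $\alpha\le r^{1/3}+r^{2/3}$ there is none. I expect the monotonicity of $\alpha(s)$ to be the main obstacle; it is a one-variable statement that I would settle by implicit differentiation of the defining cubic to obtain $\varphi_1'(s)$ and then reducing $\alpha'(s)<0$, after clearing denominators, to the positivity of an explicit polynomial in $\varphi_1$ and $s$.

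It remains to identify this threshold with the in-plane stability of $p_-$ and to prove instability of the interior point. For $p_-$, the eigenvalue of the linearized symmetric map transverse to $\varphi_2=0$ is $\partial_y f_2(\varphi_1^-,0)=\varphi_1^-/(\alpha-\varphi_1^-)^2$ (from the partial derivatives computed in the proof of Theorem~\ref{THEORSS4.1DynamDblIteP2<1}, specialized to $T=1$), while the tangential eigenvalue $r/(\alpha-\varphi_1^-)^2=(\varphi_1^-)^2/r$ is automatically $<1$ since $\varphi_1^-<\sqrt r$. Using $\alpha-\varphi_1^-=r/\varphi_1^-$, the transverse eigenvalue simplifies to $\mu=(\varphi_1^-)^3/r^2$ (its square being the double-iteration eigenvalue $\lambda_2$ of Theorem~\ref{THEORSS4.1DynamDblIteP2<1}). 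Hence $p_-$ is stable in the plane exactly when $\varphi_1^-<r^{2/3}$; evaluating $x\mapsto x^2-\alpha x+r$ at $r^{2/3}$ gives $r^{2/3}(r^{1/3}+r^{2/3}-\alpha)$, so $\varphi_1^-<r^{2/3}$ if and only if $\alpha>r^{1/3}+r^{2/3}$, precisely the threshold governing existence — which proves the stated equivalence. Finally, to establish that $p_{\text{unstab.}}^{ZK}$ is unstable I would evaluate the full Jacobian at this interior fixed point (again via the specialized partials) and exhibit an eigenvalue of modulus $>1$, so that the point is a saddle. This is forced structurally: at $\alpha=r^{1/3}+r^{2/3}$ the interior point collides with $p_-$ in a transcritical bifurcation, across which $p_-$ gains stability while the interior branch carries off the unstable transverse direction; the explicit Jacobian computation is what makes this rigorous for all $\alpha$ beyond the threshold.
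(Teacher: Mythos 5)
Your reduction is correct as far as it goes, and it is a genuinely different organization of the problem from the paper's. The identities $s=\varphi_1(\varphi_1+1)/N$, $D=r(\varphi_1+1)/N$, $1-s=-(\varphi_1^2-\alpha\varphi_1+r)/N$ all check out, and the $\alpha$-free relation $2\varphi_1^3+(1-s)\varphi_1^2=r^2s^2(1+s)$ is exactly the paper's cubic $\mathcal{C}$ of \eqref{EQUATSS4.2SystmFxdPtFina2} in the variables $(x,w)=(\varphi_1,1-s)$. Your observation that $1-s$ has the sign of $-(\varphi_1^2-\alpha\varphi_1+r)$, so that the first coordinate of any interior fixed point is automatically trapped in $(\varphi_1^-,\varphi_1^+)$, is a nice structural shortcut the paper does not exploit. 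Likewise your treatment of the ``exists iff $p_-$ is stable in the plane'' equivalence --- computing the transverse eigenvalue $(\varphi_1^-)^3/r^2$ and comparing $\varphi_1^-$ with $r^{2/3}$ via the sign of $(r^{2/3})^2-\alpha r^{2/3}+r$ --- is explicit and clean; the paper is actually terser on this point.

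However, the two hardest steps of the theorem are left as declarations of intent rather than proofs. First, uniqueness and non-existence below the threshold both rest entirely on the unproven claim that $s\mapsto\alpha(s)$ is strictly decreasing on $(0,1)$: the endpoint limits alone give existence for $\alpha>r^{1/3}+r^{2/3}$ by the intermediate value theorem, but without global monotonicity you cannot rule out several interior fixed points for some $\alpha$, nor a fixed point for some $\alpha\in(2\sqrt r,\,r^{1/3}+r^{2/3}]$. You flag this yourself as ``the main obstacle'' and propose to reduce $\alpha'(s)<0$ to the positivity of an explicit polynomial, but you neither exhibit that polynomial nor show it is positive; it is not evident that this computation closes. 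The paper avoids this by a different global argument: it parametrizes the relevant branches of the hyperbola $\mathcal{H}$ and the cubic $\mathcal{C}$ as decreasing graphs $w=g_1(x)$, $w=g_2(x)$ and shows $g_2-g_1$ is strictly monotone by comparing the extreme slopes (the asymptotic slope $\mu$ of the cubic satisfies $-1/r<\mu<-1$, while the hyperbola's slope exceeds $-1/(\alpha+1)$), which requires controlling the convexity of both branches and the location of the inflection point of $\mathcal{C}$ (Appendices \ref{APPENSS___BrnchCubicDecrs} and \ref{APPENSS___InflcPointCubic}). Second, the instability of $p_{\text{unstab.}}^{ZK}$ is not established: the transcritical-bifurcation picture is heuristic and would not by itself exclude an eigenvalue re-entering the unit disc away from the threshold, and the ``explicit Jacobian computation'' you defer is precisely the substantive part --- the paper proves $1-\mathrm{Tr}(\mathrm{Jac})+\det(\mathrm{Jac})<0$ by a sign analysis of all four partial derivatives in the $(x,w)$ variables, repeatedly substituting the fixed-point relations \eqref{EQUATSS4.2SystmFxdPtFina1} and \eqref{EQUATSS4.2SystmFxdPtFina2}, and this occupies the second half of its proof. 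Until these two steps are carried out, the argument is a promising skeleton rather than a proof.
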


\begin{proof}[Proof of Theorem \ref{THEORSS4.2PointFixe_ZhKad}]
We are looking for the points $(x,y) \in\ \left(\mathbb{R}_+\right)^2$ solving:
\begin{align}
x = \frac{r \sqrt{1-2by}}{(\alpha+1)\sqrt{1-2by}-x-1}, \hspace{1cm} y = \frac{1-\sqrt{1-2by}}{b} \cdot \frac{\left( x + \frac{1}{2}\left(1 - \sqrt{1-2by}\right) \right)}{\left( (\alpha+1)\sqrt{1-2by} - x - 1 \right)^2} \cdotp
\end{align}
As we shall see, it will be more convenient to work with an alternative variable instead of $y$. Let us set:
\begin{align}
\label{EQUATSS4.4DefinVaria__w__}
w = 1-\sqrt{1-2by}.
\end{align}
$w$ is small if and only if $y$ is, and we have even $w \underset{y \rightarrow 0}{\sim} by$, so that $w$ and $y$ behave essentially in the same way in the neighbourhood of the Zhou-Kadanoff regime.\\
In terms of the $(x,w)$ variables, the symmetrized iteration \eqref{EQUATSS4.2ReducSystmSymme} writes in a hybrid form:
\begin{subequations}
\label{EQUATSS4.2SystmSymme_x_w_}
\begin{empheq}[left=\empheqlbrace]{align}
x' &= \displaystyle{\frac{rw - r}{(\alpha+1)w + x - \alpha}}, \label{EQUATSS4.2SystmSymme_x_w1}\\
y' &= \frac{w}{b}\cdotp\frac{\left(x + \frac{1}{2}w\right)}{\left( (\alpha+1) w + x - \alpha \right)^2} \label{EQUATSS4.2SystmSymme_x_w2} \cdotp
\end{empheq}
\end{subequations}
The computations to establish this new system are reported in the Appendix (see \ref{APPENSS___SymmeSystm(x,w)}). Note that for our purpose, it is not necessary to find the expression of $w'$ in terms of $x$ and $w$. It will be enough to use \eqref{EQUATSS4.2SystmSymme_x_w_} in order to determine the fixed points.\\
The first equation \eqref{EQUATSS4.2SystmSymme_x_w1} can be rewritten as:
\begin{align}
\label{EQUATSS4.2SystmFxdPtEstb1} 
x \left[ (\alpha+1) w + x - \alpha \right] = rw - r,
\end{align}
which corresponds to the equation of a conic $\mathcal{H}$ in the $(x,w)$ variables (namely, an hyperbola
). The second equation we find for the fixed points comes from a combination of the two equations of \eqref{EQUATSS4.2SystmSymme_x_w_}. On the one hand, if we look for fixed points of the form $(0,w)$, we obtain the equation
\begin{align}
rw-r = 0,
\end{align}
so that the only fixed point of this form is $(0,1)$. On the other hand, for $x=x' \neq 0$, we find, if $(x,w)$ is a fixed point:
\begin{align}
\frac{y'}{(x')^2} = \frac{w}{b} \cdotp \frac{\left(x + \frac{1}{2}w\right)}{\left( (\alpha+1) w + x - \alpha \right)^2} \cdotp \frac{\left( (\alpha+1) w + x - \alpha \right)^2}{(rw - r)^2} = \frac{y}{x^2},
\end{align}
where we have:
\begin{align*}
w = 1 - \sqrt{1 - 2by} \Longleftrightarrow y = \frac{2w-w^2}{2b},
\end{align*}
so that the equation $\frac{y'}{(x')^2}= \frac{y}{x^2}$ can be rearranged into
\begin{align}
\label{EQUATSS4.2SystmFxdPtEstb2}
2x^2w(x+\frac{1}{2}w) = (2w-w^2)(rw-r)^2.
\end{align}
After simplification, \eqref{EQUATSS4.2SystmFxdPtEstb1} and \eqref{EQUATSS4.2SystmFxdPtEstb2} provide the system for the fixed points in $\mathbb{R}_+^*\times\mathbb{R}_+$:
\begin{subequations}
\label{EQUATSS4.2SystmFxdPtFinal}
\begin{empheq}[left=\empheqlbrace]{align}
x \left[ (\alpha+1) w + x - \alpha \right] &= r(w - 1),\label{EQUATSS4.2SystmFxdPtFina1}\\
2x^2(x+\frac{1}{2}w) &= r^2(2-w)(w-1)^2. \label{EQUATSS4.2SystmFxdPtFina2}
\end{empheq}
\end{subequations}
The second equation \eqref{EQUATSS4.2SystmFxdPtFina2} corresponds to a cubic, and such an equation does \emph{not} depend on the parameter $\alpha$ (considering the equation $\frac{y'}{x'}=\frac{y}{x}$ would have also provided the equation of a cubic curve, but depending on the two parameters $r$ and $\alpha$). It is possible to determine precisely the number of solutions of \eqref{EQUATSS4.2SystmFxdPtFinal} with geometrical arguments. Indeed, on the one hand hand, the hyperbola $\mathcal{H}$ described by \eqref{EQUATSS4.2SystmFxdPtFina1} has the lines
\begin{align}
x = \frac{r}{\alpha+1} \hspace{2cm} \text{and} \hspace{2cm} (\alpha+1)w + x + \frac{r}{\alpha+1} - \alpha = 0
\end{align}
for asymptotes, and so there is one branch contained in the half-plane $\mathbb{R}_+\hspace{-0.5mm}\times\mathbb{R}$. Let us notice that, although the system \eqref{EQUATSS4.2SystmFxdPtFinal} makes sense for any value of $w$, we introduced this new variable with \eqref{EQUATSS4.4DefinVaria__w__}, implying in particular that $0 \leq w \leq 1$. But the other branch of $\mathcal{H}$ is always above $w=1$ for $x \geq 0$, therefore only the branch contained in the half-plane $\mathbb{R}_+\hspace{-0.5mm}\times\mathbb{R}$ might contain relevant solutions to our problem. Let us notice in particular that this branch can be parametrized by a function of the form $w = g_1(x)$ on $]r/(\alpha+1),+\infty[$, and that the derivative of this parametrization is always decreasing.\\
Turning now to the second equation \eqref{EQUATSS4.2SystmFxdPtFina2}, let us observe that the curve obtained is a cubic $\mathcal{C}$, with a double point, i.e., a crunodal cubic (see for instance \cite{Gibs998}, Chapter 15), and this double point lies at $(0,1)$. In the quadrant $\mathbb{R}_+\hspace{-0.5mm}\times]-\infty,1]$ is contained a single branch of this cubic, starting from the double point, and such that this branch is infinite and can be parametrized by a function of the form $w = g_2(x)$, always decreasing (see Appendix \ref{APPENSS___BrnchCubicDecrs}). Since the cubic $\mathcal{C}$ is crunodal, it is well-known that it has a single inflection point (cf. \cite{Gibs998}), and we can show that this point lies in the half-plane $\mathbb{R}_-\hspace{-1mm}\times\mathbb{R}$ (see Appendix \ref{APPENSS___InflcPointCubic} for more details). Since a tangent vector to the branch of $\mathcal{C}$ at $(0,1)$ is $(1,-1/r)$, and since the line:
\begin{align}
w = \mu x + C_r,
\end{align}
is the asymptote to the branch, where $\mu$ is the only real root of the polynomial equation $r^2 \mu^3 + \mu + 2 = 0$ and $C_r$ is a real constant 
depending only on $r$, we deduce that the derivative of the parametrization $w=g_2(x)$ is decreasing as $x$ increases.\\
But for all $0 < r < 1$ we have:
\begin{align}
\label{EQUATSS4.2CntrlRoot__mu__}
-\frac{1}{r} < \mu < -1,
\end{align}
because we have $r^2(-1/r)^3 + (-1/r) + 2 = 2(1-1/r) < 0$ and $r^2(-1)^3 + (-1) + 2 = 1-r^2 > 0$ for $0 < r < 1$, so that in particular the slope of the branch of the cubic is maximal at infinity. Concerning the hyperbola $\mathcal{H}$, its slope is always larger than
\begin{align}
- \frac{1}{\alpha+1},
\end{align}
so that the slope of the difference of the parametrizations $g_2-g_1$ is smaller than:
\begin{align}
\mu + \frac{1}{\alpha+1} = \frac{1}{\alpha+1} - \vert \mu \vert.
\end{align}
But $\alpha$ is positive and so according to \eqref{EQUATSS4.2CntrlRoot__mu__}, we deduce that the difference of the parametrizations $g_2-g_1$ is always strictly decreasing on its domain of definition $]r/(\alpha+1),+\infty[$. By the intermediate value theorem, we deduce that there exists a single solution $x_0$ such that $g_2(x_0)-g_1(x_0) = 0$, that is, there exists a single intersection point $(x_0,w_0)$ with $x_0 >0$ and $w_0 \leq 1$ between the hyperbola $\mathcal{H}$ and the cubic $\mathcal{C}$
defined by the equations \eqref{EQUATSS4.2SystmFxdPtFina1} and \eqref{EQUATSS4.2SystmFxdPtFina2}, and this point $(x_0,w_0)$ is such that
\begin{align}
x_0 > \frac{r}{\alpha+1}\cdotp
\end{align}
Now, such an intersection would be an acceptable fixed point of our symmetric iteration \eqref{EQUATSS4.2ReducSystmSymme} only if we have $w_0 \geq 0$. Let us note that the fixed points of the form $w_0 = 0$ have already been obtained in Section \ref{SSSCT4.1.3PremiProprDbIte}. In the present case, it is clear that a necessary condition to have a fixed point $(x_0,w_0)$ of the form $w_0 > 0$ is that the second branch of the hyperbola $\mathcal{H}$ intersects the first quadrant $x_0 > 0,w_0 > 0$, that is, its maximum is a positive number. The critical points of the hyperbola $\mathcal{H}$, seen as the graph of a function of $x$, are reached for
\begin{align}
x_{\pm} = \frac{r \pm \sqrt{r^2 + r(\alpha+1)}}{\alpha+1},
\end{align}
with only $\pm=+$ corresponding to a positive number. The maximum of $\mathcal{H}$ associated to $x_+$ is positive if and only if the second branch of the hyperbola intersects twice the first axis, that is, when the discriminant of the equation
\begin{align}
\label{EQUATSS4.2InterHyprb1stAx}
-x^2 + \alpha x - r = 0 
\end{align}
is positive, or again, when:
\begin{align}
\alpha^2 > 4r,
\end{align}
we recover the first condition of Zhou and Kadanoff, corresponding also to the existence of two fixed points for the iteration \eqref{EQUATSS4.2ReducSystmSymme} on the line $w=0$, i.e. on the line $\varphi_2=0$. However, it is \emph{not} a sufficient condition: the two fixed points, that we denote again by $p_-$ and $p_+$, of respective coordinates
\begin{align*}
p_- = \left(\varphi_1^-,0\right) \hspace{1cm} \text{and} \hspace{1cm} p_+ = \left(\varphi_1^+,0\right)
\end{align*}
where $\varphi_1^\pm$ are the two real roots of \eqref{EQUATSS4.2InterHyprb1stAx}, can exist on the invariant region $y = 0$, while the intersection $(x_0,w_0)$ between the hyperbola $\mathcal{H}$ and the cubic $\mathcal{C}$ is such that $w_0 < 0$. This intermediate case corresponds to the situation where there exist two equilibria in the invariant region $\varphi_2=0$, but none of them is stable in the plane $(\varphi_1,\varphi_2)$.\\
Let us denote by $p_\mathcal{C}$ the intersection point between the cubic $\mathcal{C}$ and the first axis, and let
\begin{align}
p_\mathcal{C} = \left(x_\mathcal{C},0\right)
\end{align}
be its coordinates. By our previous study of the branch of the cubic, or by direct computation, such a point exists, is unique, and is such that $x_\mathcal{C} > 0$. By the intermediate value theorem, a sufficient condition to have a non-trivial fixed point $(x_0,w_0)$ is that the cubic $\mathcal{C}$ intersects the first axis between the two points $p_-$ and $p_+$, the intersection points between this axis and the second branch of the hyperbola $\mathcal{H}$. This condition is actually an equivalence. Indeed, since the intersection point $p_\mathcal{C}$ is actually $(r^{2/3},0)$, this condition writes:
\begin{align}
\label{EQUATSS4.2EntrelacmtInter} 
\frac{\alpha - \sqrt{\alpha^2 - 4r}}{2} < r^{2/3} < \frac{\alpha + \sqrt{\alpha^2 - 4r}}{2} \cdotp
\end{align}
Note that by the assumption $\alpha^2 > 4r$, we have $\alpha^2 > 4r^{4/3}$, so that we have always:
\begin{align*}
2r^{2/3} - \alpha < 0 < \sqrt{\alpha^2 - 4r},
\end{align*}
in other words, we have always $x_\mathcal{C} < \varphi_1^+$, that is the intersection between the cubic and the first axis has always a smaller abscissa than the second intersection between this axis and the second branch of the hyperbola $\mathcal{H}$, if it exists. Therefore, if the condition \eqref{EQUATSS4.2EntrelacmtInter} does not hold, then the point $p_\mathcal{C}$ lies below the first intersection point $p_-$. Since $p_\mathcal{C}$ lies on the branch of the cubic we studied, we know that this branch lies in $w > 0$ if and only if $x < x_\mathcal{C}$, while the branch of the hyperbola lies below the first axis for this range of abscissas. In that case, we wouldn't have any fixed point $(x_0,w_0)$ with $w_0 > 0$.\\
Therefore, the condition \eqref{EQUATSS4.2EntrelacmtInter} is exactly equivalent to
\begin{align*}
0 < \alpha - 2r^{2/3} < \sqrt{\alpha^2 - 4r},
\end{align*}
that we can square and rearrange into
\begin{align}
\label{EQUATSS4.2CaracExistEquZK} 
r^{1/3} + r^{2/3} < \alpha.
\end{align}
The inequality \eqref{EQUATSS4.2CaracExistEquZK} characterizes then the existence of a non trivial fixed point to the symmetric case of the symmetric iteration \eqref{EQUATSS4.2ReducSystmSymme}. Let us denote by $p_{\text{unstab.}}^{ZK}$ such a fixed point, that we will call the Zhou-Kadanoff equilibrium (such a point was indeed observed and described already in \cite{ZhKa996}, see page 657, where the point is denoted by $(\alpha_0,k_0)$).\\
\newline
Let us now study the stability of the Zhou-Kadanoff equilibrium $p_{\text{unstab.}}^{ZK}$, adapting the argument of \cite{ZhKa996} to the setting of our variables. The goal is to show that one of the two eigenvalues of the Jacobian matrix of the iteration \eqref{EQUATSS4.2ReducSystmSymme}, computed at $p_{\text{unstab.}}^{ZK}$, is strictly larger than $1$, disproving immediatly the stability of such an equilibrium. Let us note $\text{Jac}(p_{\text{unstab.}}^{ZK})$ such a Jacobian matrix. To do so, the idea (already exposed in \cite{ZhKa996}) is to show that:
\begin{align}
\label{EQUATSS4.2InequStabiFxPZK} 
1 - \text{Tr}\left( \text{Jac}(p_{\text{unstab.}}^{ZK}) \right) + \text{det} \left( \text{Jac}(p_{\text{unstab.}}^{ZK}) \right) < 0,
\end{align}
which would imply, by continuity, that the characteristic polynomial:
\begin{align*}
\lambda^2 - \text{Tr}\left( \text{Jac}(p_{\text{unstab.}}^{ZK}) \right) \lambda + \text{det} \left( \text{Jac}(p_{\text{unstab.}}^{ZK}) \right)
\end{align*}
has a zero strictly larger than $1$.\\
If we denote by $F(x,y) = \left(f_1(x,y),f_2(x,y)\right)$ the symmetric iteration \eqref{EQUATSS4.2ReducSystmSymme}, \eqref{EQUATSS4.2InequStabiFxPZK} can be rewritten as:
\begin{align}
\left( 1 - \partial_x f_1 \right) \left( 1 - \partial_y f_2 \right) - \partial_y f_1 \partial_x f_2 < 0,
\end{align}
where all the partial derivatives are evaluated at $p_{\text{unstab.}}^{ZK}$.\\
In the present case, the partial derivatives write:
\begin{align*}
\partial_x f_1 &= \frac{r(1-w)}{\left( (\alpha+1) w + x - \alpha \right)^2} ,\\
\partial_y f_1 &= \frac{rb (x+1)}{(1-w) \left( (\alpha+1) w + x - \alpha \right)^2} ,\\
\partial_x f_2 &= \frac{w \left( \alpha(w-1) - x \right)}{b \left( (\alpha+1) w + x - \alpha \right)^3} ,\\
\partial_y f_2 &= \frac{(x+w) \left( (\alpha+1) w + x - \alpha \right) - 2(\alpha+1) w \displaystyle{\left( x + \frac{w}{2} \right)}}{(1-w) \left( (\alpha+1) w + x - \alpha \right)^3} \cdotp
\end{align*}
Let us note on the one hand that $\partial_y f_1$ is unconditionally positive (because we have $0 \leq w \leq 1$), and on the other hand $\partial_x f_2 > 0$ while restricted on the hyperbola $\mathcal{H}$, since the equation \eqref{EQUATSS4.2SystmFxdPtFina1} provides:
\begin{align}
\label{EQUATSS4.2DenomNegatHyprb}
(\alpha+1)w + x - \alpha = \frac{r(w-1)}{x} < 0,
\end{align}
and the line $\alpha(w-1) - x = 0$ is always above the area $x > 0, 0 < w < 1$ of our study, so that $\alpha(w-1)-x < 0$ evaluated on $p_{\text{unstab.}}^{ZK}$.\\
Let us now turn to the term $1 - \partial_x f_1$. On the hyperbola $\mathcal{H}$, we find:
\begin{align}
1 - \partial_x f_1 = \frac{(\alpha+1)w + x - \alpha - x}{\left((\alpha+1) w + x - \alpha \right)} = (\alpha+1) \frac{w - \frac{\alpha}{\alpha+1}}{\left((\alpha+1) w + x - \alpha \right)} > 0,
\end{align}
using to conclude that, on the hyperbola $\mathcal{H}$, on the one hand we have once again that the denominator is negative, and on the other hand $w < \frac{\alpha}{\alpha+1}$ (because the maximum of the second branch of the hyperbola is always below such an ordinate).\\
Finally, turning to the term $1 - \partial_y f_2$, we can write:
\begin{align*}
1& - \partial_y f_2 = \frac{(1-w) \left( (\alpha+1) w + x - \alpha \right)^3 - (x+w) \left( (\alpha+1) w + x - \alpha \right) + 2 (\alpha+1) w \displaystyle{\left(x + \frac{w}{2}\right)}}{(1-w) \left( (\alpha+1) w + x - \alpha \right)^3} \\
&= \frac{(1-w) r^2 (w-1)^2 \left( (\alpha+1) w + x - \alpha \right) - x^2 (x+w) \left( (\alpha+1) w + x - \alpha \right) + (\alpha+1) w r^2 (2-w) (w-1)^2}{x^2 (1-w) \left( (\alpha+1) w + x - \alpha \right)^3},
\end{align*}
using, first that the fixed point $p_{\text{unstab.}}^{ZK}$ is on the hyperbola $\mathcal{H}$ so that we rewrite the first term of the numerator according to \eqref{EQUATSS4.2SystmFxdPtFina1}, and second that $p_{\text{unstab.}}^{ZK}$ is also on the cubic $\mathcal{C}$ so that we used \eqref{EQUATSS4.2SystmFxdPtFina2} to replace the third term. Now, the first term is clearly negative on the hyperbola $\mathcal{H}$, while the second and third terms provide together:
\begin{align*}
(\alpha+1) &r^2 w (2-w) (w-1)^2 - x^2 (x+w) \left( (\alpha+1) w + x - \alpha \right) \\
&= (\alpha + 1) r^2 w (2-w) (w-1)^2 + x^3 \left( (\alpha+1) w + x - \alpha \right) - x^2 (2x+w) \left( (\alpha+1) w + x - \alpha \right) \\
&= (\alpha + 1) r^2 w (2-w) (w-1)^2 + x^3 \left( (\alpha+1) w + x - \alpha \right) -r^2(2-w)(w-1)^2
\end{align*}
using that $p_{\text{unstab.}}^{ZK}$ is on the cubic $\mathcal{C}$, so that in the end:
\begin{align*}
(\alpha+1) &r^2 w (2-w) (w-1)^2 - x^2 (x+w) \left( (\alpha+1) w + x - \alpha \right) \\
&= r^2 (2-w)(w-1)^2 \left( (\alpha + 1)w - 1 \right) + x^3 \left( (\alpha+1) w + x - \alpha \right),
\end{align*}
and we see that both terms are negative on the hyperbola, because on the one hand $w < \frac{\alpha}{\alpha + 1} \leq \frac{1}{\alpha+1}$ concerning the first term, and as for the second, we conclude using \eqref{EQUATSS4.2DenomNegatHyprb}.\\
In summary, we have:
\begin{align*}
\underbrace{\left( 1 - \partial_x f_1 \right)}_{>0} \underbrace{\left( 1 - \partial_y f_2 \right)}_{<0} - \underbrace{\partial_y f_1 \partial_x f_2}_{>0} < 0,
\end{align*}
that is, \eqref{EQUATSS4.2InequStabiFxPZK} holds, so the Jacobian matrix of the symmetric iteration \eqref{EQUATSS4.2ReducSystmSymme} at the Zhou-Kadanoff equilibrium $p_{\text{unstab.}}^{ZK}$ has one real eigenvalue larger than $1$, so that this equilibrium is not stable. This completes the proof of Theorem \ref{THEORSS4.2PointFixe_ZhKad}.
\end{proof}

\subsection{Another unstable fixed point}

To conclude the study of the symmetric reduced system, let us notice that if we consider the symmetric iteration \eqref{EQUATSS4.2ReducSystmSymme} on the domain $\varphi_1 \geq 0$, $\varphi_2 \geq 0$ (that is, if we include also the line $\varphi_1 = 0$), there is another fixed point. Let us provide a brief study of this equilibrium.

\begin{propo}[Existence of a fixed point on the line $\varphi_1=0$]
Let us consider the symmetric case $a=b$ of the two-collision mapping \eqref{EQUATSS4.1Itera_2dim2Coll}, which reduces to the single mapping \eqref{EQUATSS4.2ReducSystmSymme}.\\
There exists a last fixed point $p_\text{up} = \left(0,1/2b\right)$ of the mapping \eqref{EQUATSS4.2ReducSystmSymme}, located on the line $\varphi_1=0$, and this fixed point is unstable in the domain $\varphi_1 \geq 0$, $\varphi_2 \leq 1/2b$.
\end{propo}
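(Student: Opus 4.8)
The plan is to first check by direct substitution that $p_\text{up}=(0,1/2b)$ is a fixed point of \eqref{EQUATSS4.2ReducSystmSymme}, and then to prove its instability by a direct trajectory analysis, since the usual linearisation is not available at this point. For the verification, note that at $p_\text{up}$ one has $1-2b\varphi_2=0$, hence $\sqrt{1-2b\varphi_2}=0$; plugging this into \eqref{EQUATSS4.2ReducSystmSym_1} gives $\varphi_1'=r\cdot 0/(-\varphi_1-1)=0$, and into \eqref{EQUATSS4.2ReducSystmSym_2} gives $\varphi_2'=\frac{1}{b}\cdot\frac{1/2}{(-1)^2}=\frac{1}{2b}$, so that $p_\text{up}$ is indeed fixed.

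The reason the stability cannot be decided from a Jacobian, as was done for the Zhou--Kadanoff equilibrium in the proof of Theorem \ref{THEORSS4.2PointFixe_ZhKad}, is that the mapping \eqref{EQUATSS4.2ReducSystmSymme} has a square-root branch point precisely at $\varphi_2=1/2b$: the term $\sqrt{1-2b\varphi_2}$ has infinite $\varphi_2$-derivative there, so the mapping is not differentiable at $p_\text{up}$. I would therefore pass to the regularising coordinate $s=\sqrt{1-2b\varphi_2}\ge 0$, in which $p_\text{up}$ becomes $(\varphi_1,s)=(0,0)$ and the admissible domain $\varphi_2\le 1/2b$ becomes $s\ge 0$, and argue at the level of orbits.

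The key observation is that the denominator $(\alpha+1)\sqrt{1-2b\varphi_2}-\varphi_1-1=(\alpha+1)s-\varphi_1-1$ appearing in \eqref{EQUATSS4.2ReducSystmSym_1} is strictly negative on a small neighbourhood of $p_\text{up}$, because there $s$ is small while $(\varphi_1+1)/(\alpha+1)$ stays close to $1/(\alpha+1)>0$. Hence, for any nearby point with $s>0$ the numerator $rs$ is positive and $\varphi_1'<0$, so the orbit leaves the physical quadrant $\{\varphi_1\ge 0\}$ after a single step. It then remains to treat the points on the top edge $s=0$ (i.e. $\varphi_2=1/2b$) with $\varphi_1>0$ small; for these a one-step computation yields $\varphi_1'=0$ together with $1-2b\varphi_2'=\varphi_1^2/(\varphi_1+1)^2$, that is $s'=\varphi_1/(\varphi_1+1)>0$, which is again small. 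Such a point therefore lands at $(0,s')$ with $s'>0$ small, so the previous case applies and the next iterate has negative first coordinate. I would conclude that every orbit issued from a point of the domain in a punctured neighbourhood of $p_\text{up}$ leaves the domain after at most two iterations, which is exactly the failure of Lyapunov stability.

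The step requiring the most care is the bookkeeping near the branch point: one has to fix the radius of the neighbourhood so that the relevant denominators stay bounded away from $0$ (they are close to $-1$) and so that $s'=\varphi_1/(\varphi_1+1)<1/(\alpha+1)$, i.e. $\varphi_1<1/\alpha$; and one must phrase the conclusion as instability caused by orbits exiting the domain, rather than as an eigenvalue larger than $1$, the latter being meaningless here because of the non-differentiability of the mapping at $p_\text{up}$.
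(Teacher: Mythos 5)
Your proof is correct, and it takes a genuinely different route from the paper's. The verification of the fixed point is the same direct substitution. For the instability, however, the paper stays on the axis $\varphi_1=0$ and computes the partial derivative $\partial_y f_2$ (in the $(x,w)$ coordinates introduced in the proof of Theorem \ref{THEORSS4.2PointFixe_ZhKad}) at a perturbed point $(0,1/2b-\varepsilon)$, obtaining a quantity of order $-\alpha/\varepsilon$ whose blow-up as $\varepsilon\to 0$ forces nearby points of that axis out of the domain $\varphi_2\le 1/2b$. You instead track orbits directly in a full two-dimensional neighbourhood: any admissible point with $s=\sqrt{1-2b\varphi_2}>0$ has $\varphi_1'=rs/\bigl((\alpha+1)s-\varphi_1-1\bigr)<0$ and exits the quadrant in one step, and a point on the edge $s=0$ with $0<\varphi_1<1/\alpha$ lands at $(0,s')$ with $s'=\varphi_1/(\varphi_1+1)$ (your identity $1-2b\varphi_2'=\varphi_1^2/(\varphi_1+1)^2$ checks out) and exits at the second step. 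Your approach buys three things: it treats perturbations in both coordinates rather than only along $\varphi_1=0$; it sidesteps the non-differentiability of the mapping at the branch point $\varphi_2=1/2b$, which the paper's derivative computation only works around by evaluating at the perturbed point; and it yields the sharper statement that orbits leave the domain in at most two iterations. The paper's computation, in exchange, gives a quantitative repulsion rate along the invariant line $\varphi_1=0$ and recycles formulas already derived. One further point in your favour: the paper asserts nearby points are ``repelled downwards'', whereas the direct computation $\varphi_2'=(1-s)^2/\bigl(2b(1-(\alpha+1)s)^2\bigr)>1/2b$ for small $s>0$ shows the second coordinate is pushed \emph{above} $1/2b$; the instability conclusion is unaffected, but your exit-of-domain formulation avoids this imprecision.
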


\begin{proof}
The fact that $p_\text{up} = \left(0,1/2b\right)$ is a fixed point is direct from plugging the coordinates into \eqref{EQUATSS4.2ReducSystmSymme}.\\
Concerning the stability, we simply consider a point of the form $\left(0,1/2b-\varepsilon\right)$, with $\varepsilon$ small enough. Using the $(x,w)$ coordinates of the proof of Theorem \ref{THEORSS4.2PointFixe_ZhKad}, we find:
\begin{align}
\partial_y f_2(0,1) &= \frac{(1-\varepsilon)\left((\alpha+1)(1-\varepsilon)-\alpha\right) - 2 (\alpha+1) (1-\varepsilon) \left(0 + \frac{(1-\varepsilon)}{2}\right)}{\varepsilon\left((\alpha+1)(1-\varepsilon) - \alpha\right)} \\
&= \frac{(1-\varepsilon)(-\alpha)}{\varepsilon\left( 1 -\alpha \varepsilon - \varepsilon \right)} < 0
\end{align}
(since the point $(0,1)$ in the $(x,w)$ coordinates is exactly the fixed point $\left(0,1/2b\right)$ in the usual variables $(\varphi_1,\varphi_2)$). This last computation shows that any point below the fixed point $p_\text{up}$, close enough, will be repelled downwards by this fixed point, hence the instability.
\end{proof}

\section{A formal limit of the two-collision mapping \eqref{EQUATSS4.1Itera_2dim2Coll}: the low energy limit}
\label{SSECTIO4.3FormlLimtLowNRG}

In this section we present a dynamical system formally obtained from the two-collision mapping \eqref{EQUATSS4.1Itera_2dim1Col1}, \eqref{EQUATSS4.1Itera_2dim1Col2} in the so-called regime of the \emph{low energy limit}. We define the low energy limit as follows: we fix the ratio $T=b/a$, where $a$ and $b$ are by definition the squares of the norms of the respective tangential velocities of the two external particles \circled{1} and \circled{2}, measured with respect to the central particle \circled{0}. Since the normal components of the relative velocities are completely dissipated at the time of collapse, the data of $a$ and $b$ encode the remaining energy in the system. Keeping the ratio $T = b/a$ fixed, and sending $b$ to zero, we obtain the regime of \emph{low energy}.\\
We will see that for such a system, we will be able to prove the conjectures stated in Section \ref{SSSCT4.1.4NumerConjectuDCM} for the two-collision mapping.

\subsection{Equations of the reduced system in the low energy limit}

In this section, let us establish the equations of the two-collision mapping \eqref{EQUATSS4.1Itera_2dim2Coll} in the \emph{low energy limit}. Up to exchange the label of the two external particles, we can always assume that $b \leq a$, that is, $T \leq 1$. The starting point is the first iteration \eqref{EQUATSS4.1Itera_2dim1Coll} defining the dynamical system. Assuming that the orbits remain bounded (that is, that the second component $\varphi_2$ remains bounded) and using the fact that $\alpha - \varphi_1$ is bounded from below, we send $b$ to infinity, so that we obtain for the first coordinate:
\begin{align*}
\varphi_1' \simeq \frac{r}{\alpha + \frac{1}{T}(-b\varphi_2) - \varphi_1} \simeq \frac{r}{\alpha - \varphi_1},
\end{align*}
since $T = b/a$ so that $b/T = a$, and $b \rightarrow 0$ with $T > 0$ fixed implies also $a \rightarrow 0$. In the present case, the expression of $\varphi_1'$ simplifies, and we do not obtain a finer approximation than when assuming $\varphi_2=0$. For the second coordinate we obtain a new expression, which is more complex than $\varphi_2'=0$, but simpler than \eqref{EQUATSS4.1Itera_2dim1Col2}:
\begin{align*}
\varphi_2' \simeq \frac{b \varphi_2}{b} \cdot \frac{\left( \varphi_1 + \frac{1}{2T}(b \varphi_2) \right)}{ \left( \alpha - \frac{b}{T} \varphi_2 - \varphi_1 \right)^2} \simeq \frac{\varphi_1 \varphi_2}{(\alpha - \varphi_1)^2} \cdotp
\end{align*}
We can now insert these two last expressions in the equations \eqref{EQUATSS4.1IteraIntermPhi1} and \eqref{EQUATSS4.1IteraIntermPhi2} describing the second iteration, and we obtain
\begin{align*}
\varphi_1'' &\simeq \frac{r \sqrt{ 1 - 2a \frac{\varphi_1\varphi_2}{(\alpha-\varphi_1)^2} }}{(\alpha+T) \sqrt{1 - 2a\frac{\varphi_1\varphi_2}{(\alpha-\varphi_1)^2}} - \frac{r}{\alpha-\varphi_1} - T} \\
&\simeq \frac{r}{\alpha - a T \frac{\varphi_1\varphi_2}{(\alpha-\varphi_1)^2} - \frac{r}{\alpha-\varphi_1}} \simeq \frac{r}{\alpha - \frac{r}{\alpha-\varphi_1}} = \frac{r(\alpha-\varphi_1)}{\alpha(\alpha-\varphi_1) - r} \cdotp
\end{align*}
Therefore, for the first coordinate of the two-collision mapping, the low energy limit corresponds to the reduced dynamics on the region $\varphi_2 = 0$. As a consequence, the variables $\varphi_1$ and $\varphi_2$ are decoupled, in the sense that the evolution of $\varphi_1$ is now independent from $\varphi_2$ (the converse will not be true, as we will see).\\
In the case of the second coordinate, we obtain:
\begin{align*}
\varphi_2'' \simeq \frac{\varphi_1\varphi_2}{(\alpha-\varphi_1)^2} \cdotp \frac{\left( \frac{r}{\alpha-\varphi_1} + \frac{T}{2}\frac{b}{T}\varphi_2' \right)}{\left(\alpha - \frac{r}{\alpha-\varphi_1}\right)^2} \simeq \frac{r \varphi_1\varphi_2}{(\alpha-\varphi_1)\left( \alpha(\alpha-\varphi_1)-r \right)^2} \cdotp
\end{align*}
In the end, the two-collision mapping $\left(\varphi_1,\varphi_2\right) \mapsto \left(\varphi_1'',\varphi_2''\right)$ writes in the low energy limit:
\noindent
\begin{align}
\label{EQUATSS4.3IteraBasseEnerg}
\left\{
\begin{array}{rl}
\varphi_1'' &\simeq \displaystyle{\frac{r(\alpha-\varphi_1)}{\alpha(\alpha-\varphi_1) - r}}, \\
\varphi_2'' &\simeq \displaystyle{\frac{r \varphi_1\varphi_2}{(\alpha-\varphi_1)\left( \alpha(\alpha-\varphi_1)-r \right)^2}} \cdotp
\end{array}
\right.
\end{align}
\noindent
Let us note that the equations \eqref{EQUATSS4.3IteraBasseEnerg} of this new dynamical system can be simplified, reformulating the system with a new set of variables. Indeed, defining $X = \varphi_1/\alpha$, we obtain:
\begin{align*}
\varphi_1'' = \frac{r(1-\varphi_1/\alpha)}{\alpha(1-\varphi_1/\alpha)-r/\alpha} = \frac{r(1-X)}{\alpha(1-X) - r/\alpha},
\end{align*}
or again, writing $X' = \varphi_1''/\alpha$:
\begin{align*}
X' = \varphi_1''/\alpha = \frac{1}{\alpha} \frac{r(1-X)}{\alpha((1-X)-r/\alpha^2)},
\end{align*}
so that, writing also $R = r/\alpha^2$, we find:
\begin{align}
X' = \frac{R(1-X)}{1-X-R} \cdotp
\end{align}
The same simplification can be carried on for the second coordinate. Restarting from the new variables we just introduced, we get:
\begin{align*}
\varphi_2'' &= \frac{\alpha^3(r/\alpha^2)(\varphi_1/\alpha)\varphi_2}{\alpha(1-\varphi_1/\alpha)(\alpha^2(1-\varphi_1/\alpha)-r)^2} \\
&= \frac{\alpha^2 R X \varphi_2}{(1-X)\alpha^4(1-X-R)^2},
\end{align*}
so that in the end, writing $Y = \varphi_2$, we obtain a synthetic expression for the two-collision mapping.\\
Let us summarize these formal computations in the following definition.

\begin{defin}[Non-dimensional, low energy limit of the two-collision mapping \eqref{EQUATSS4.1Itera_2dim2Coll}]
Let $R,\alpha \in\ ]0,1[$ be two strictly positive numbers, strictly smaller than $1$.\\
We define the \emph{non-dimensional, low energy limit of the two-collision mapping} as the mapping $\left(\mathbb{R}_+\right)^2 \rightarrow \mathbb{R}^2$, $\left(X,Y\right) \mapsto \left(X',Y'\right)$, where:
\begin{align}
\label{EQUATSS4.3IteraBEnrgAdimn}
\left\{
\begin{array}{rl}
X' &= \displaystyle{\frac{R(1-X)}{1-X-R}}, \\
Y' &= \displaystyle{\frac{(R/\alpha^2) XY}{(1-X)(1-X-R)^2}} \cdotp
\end{array}
\right.
\end{align}
\end{defin}

\subsection{Long time behaviour of the orbits of the reduced system in the low energy limit}

The dynamical system defined with the mapping \eqref{EQUATSS4.3IteraBEnrgAdimn} turns out to be easy to understand: as we noticed already the first variable is independent from the second, and presents two fixed points $X^\pm$, solutions of the polynomial equation $X(1-X-R) = R(1-X)$, that is $X^2-X+R=0$, where only the smallest, $X^-$, is stable. On the other hand, the dynamics of the second variable is deduced from the value of
\begin{align*}
\frac{R X^-}{\alpha^2 (1-X^-)(1-X^-R)^2} \cdotp
\end{align*}
We obtain the following result.

\begin{theor}[Dynamics of the non-dimensional, low energy limit of the two-collision mapping]
\label{THEORSS4.3DynamBEnrgAdimn}
Let $R,\alpha \in\ ]0,1[$ be two strictly positive numbers, strictly smaller than $1$. Let us also assume that $R < 1/4$, so that the non-dimensional, low energy limit of the two-collision mapping has two fixed points of the form $\left(X^\pm,0\right)$, with:
\begin{align}
X^\pm = \frac{1 \pm \sqrt{1-4R}}{2},
\end{align}
where $0 < R < X^- < 1/2 < X^+ < 1-R < 1$. $X^-$ is the unique stable fixed point of the stable manifold $Y=0$. Let us denote by $C(R,\alpha)$ the positive quantity:
\begin{align}
C(R,\alpha) = \frac{\left( R/\alpha^2 \right) X^-}{(1-X^-)(1-X^--R)^2}\cdotp
\end{align}
Let us finally consider the orbit $(X_n,Y_n)_n$ of the dynamical system defined by the non-dimensional, low energy limit of the two-collision mapping, starting from the initial datum $(X_0,Y_0) \in [0,1[\times [0,+\infty[$.
Then, if $C(R,\alpha) < 1$, the orbit $(X_n,Y_n)_n$ will converge to $(X^-,0)$, exponentially fast.\\
Finally, the two vertical lines $X=X^-$ and $X=X^+$ are stable manifolds.
\end{theor}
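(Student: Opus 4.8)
The plan is to exploit the triangular (skew-product) structure of the mapping \eqref{EQUATSS4.3IteraBEnrgAdimn}: the first coordinate evolves autonomously through $X \mapsto g(X) := R(1-X)/(1-X-R)$, while the second obeys a linear relation $Y' = c(X)\,Y$ with multiplier $c(X) := (R/\alpha^2)X/\big[(1-X)(1-X-R)^2\big]$ depending only on the current value of $X$. I would therefore first settle the one-dimensional dynamics of $g$, and then treat the $Y$-component as a linear cocycle driven by the orbit $(X_n)_n$. To begin, the fixed points solve $X(1-X-R)=R(1-X)$, i.e. $X^2-X+R=0$, whence $X^\pm=(1\pm\sqrt{1-4R})/2$ for $R<1/4$. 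From Vieta's relations $X^-+X^+=1$ and $X^-X^+=R$ one reads off $1-X^-=X^+$, $1-X^+=X^-$, and the claimed ordering $0<R<X^-<1/2<X^+<1-R<1$ (using $X^+<1$ to get $X^-=R/X^+>R$ and $X^+=1-X^-<1-R$).

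Next I would analyse $g$ on $[0,1-R)$. A direct computation gives $g'(X)=R^2/(1-X-R)^2>0$, so $g$ is smooth and strictly increasing up to the singularity at $X=1-R$. Substituting the fixed-point relation $1-X-R=R(1-X)/X$ yields $g'(X^\pm)=(X^\pm)^2/(1-X^\pm)^2=(X^\pm/X^\mp)^2$, so that $g'(X^-)<1<g'(X^+)$; this already identifies $X^-$ as the unique stable fixed point on $\{Y=0\}$. Since $g$ is increasing with exactly the two fixed points $X^-<X^+$ in $[0,1-R)$, the sign of $g(X)-X$ is constant on each of $(0,X^-)$ and $(X^-,X^+)$; evaluating at $X=0$ (where $g(0)=R/(1-R)>0$) together with the local multipliers shows $g(X)>X$ on $(0,X^-)$ and $g(X)<X$ on $(X^-,X^+)$. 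Hence every orbit started in $[0,X^+)$ is monotone and converges to $X^-$, and since $X^-$ is a hyperbolic attractor the convergence is eventually geometric with ratio arbitrarily close to $g'(X^-)<1$; orbits started in $(X^+,1)$ leave $[0,1)$ (they cross the singularity $1-R$ and produce a negative image), so the vertical line $X=X^+$ plays the role of the separatrix predicted in Section \ref{SSSCT4.1.4NumerConjectuDCM}.

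Finally I would handle the $Y$-component. Iterating gives $Y_n=Y_0\prod_{k=0}^{n-1}c(X_k)$, with $c\ge 0$ on $[0,X^+)\subset[0,1-R)$. Using $1-X^-=X^+$ and the identity $1-X^--R=X^+(1-X^-)=(X^+)^2$ one checks $c(X^-)=C(R,\alpha)$, so the multiplier converges to $C(R,\alpha)$ along the orbit. When $C(R,\alpha)<1$, continuity of $c$ at $X^-$ provides $\rho\in(C(R,\alpha),1)$ and an index $N$ with $c(X_k)\le\rho$ for all $k\ge N$; splitting the product as a fixed constant $\prod_{k<N}c(X_k)$ times a tail bounded by $\rho^{\,n-N}$ gives $Y_n\to 0$ at exponential rate, which together with the exponential convergence of $X_n$ yields $(X_n,Y_n)\to(X^-,0)$ exponentially. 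The invariance of the two vertical lines is immediate: if $X_0=X^\pm$ then $X_n\equiv X^\pm$ and $Y_n=c(X^\pm)^n Y_0$ stays on the line, so each line $\{X=X^\pm\}$ is invariant under \eqref{EQUATSS4.3IteraBEnrgAdimn}. The step demanding the most care is this last one: because the multiplier $c(X_k)$ only tends to $C(R,\alpha)$ rather than equalling it, one must absorb the (possibly amplifying) transient before $X_n$ enters the contracting region into a single multiplicative constant and control only the geometric tail; one should also record that the convergence statement is genuinely about the basin $X_0\in[0,X^+)$, the formal domain $[0,1)$ being restricted by global well-posedness.
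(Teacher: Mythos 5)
Your proposal is correct and follows essentially the same route as the paper, which treats the system as a skew product: the autonomous one-dimensional dynamics of $X$ with its two fixed points $X^{\pm}$ (only $X^-$ stable), and the $Y$-component as a linear iteration whose asymptotic multiplier is $C(R,\alpha)$ evaluated at the limit $X^-$. Your write-up is in fact more detailed than the paper's brief sketch, and your closing caveats (absorbing the transient of the cocycle into a constant, and restricting the convergence claim to the basin $X_0\in[0,X^+)$) are the right points to be careful about.
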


\begin{remar}
In the case when $C(R,\alpha) > 1$, the second component $Y_n$ of the orbit will eventually tend to $+\infty$, exponentially fast. Let us observe that in this case, the formal approximation we made to obtain the low energy limit breaks down (the boundedness of the orbits does not hold).\\
On the other hand, the condition $C(R,\alpha) < 1$ writes explicitly:
\begin{align*}
\frac{R}{\alpha^2} X^- < (1-X^-)(1-X^--R)^2,
\end{align*}
where $(X^-)^2 - X^- + R = 0$. Therefore, it is not completely clear a priori that this condition can be fulfilled for some $R$ and $\alpha$. Nevertheless, this condition holds true if, for instance:
\begin{align*}
R^2 < \frac{\alpha^2}{64}\cdotp
\end{align*}
Hence, for a restitution coefficient small enough, and a limiting angle between the particles obtuse enough, the Zhou-Kadanoff regime is certain for the non-dimensional, low energy limit of the two-collision mapping.
\end{remar}

\begin{remar}
Let us recall that the mapping \eqref{EQUATSS4.3IteraBEnrgAdimn} corresponds to the low energy limit $a,b \rightarrow 0$. Theorem \ref{THEORSS4.3DynamBEnrgAdimn} proves that the Zhou-Kadanoff regime happens eventually for \emph{all} the orbits of the mapping \eqref{EQUATSS4.3IteraBEnrgAdimn} starting from $[0,1[\times[0,+\infty[$ (if $C(R,\alpha)$ is strictly smaller than $1$). On the other hand, the vertical line $X=X^+$ is a separatrix, which partitions the first quadrant: for $X < X^+$, the Zhou-Kadanoff regime is certain, for $X > X^+$, the dynamics is not globally well-posed, in the sense that the orbit will leave the first quadrant.\\
Let us note finally that the results obtained for the low energy limit of the two-collison mapping agree with what we observed numerically and reported in Section \ref{SSSCT4.1.4NumerConjectuDCM} concerning the two-collision mapping \eqref{EQUATSS4.1Itera_2dim2Coll}. Indeed, if we consider that the low energy limit induces a non-linear dilatation of the second variable $\varphi_2$ and sends in particular the unstable Zhou-Kadanoff equilibrium to infinity, we can identify the stable manifolds $\mathcal{S}$ and $\mathcal{V}$ described in \ref{SSSCT4.1.4NumerConjectuDCM} with the stable lines $X=X^+$ and $X=X^-$. Indeed, in the case of the low energy limit, the line $X=X^+$ plays clearly the role of a separatrix, where the orbits are globally defined only when they start from $(X_0,Y_0)$ with $X_0 \leq X^+$. The counterpart of this approach is that a part of the behaviour of the orbits of the two-collision mapping is not captured in the low energy limit, since the unstable Zhou-Kadanoff equilibrium and its neighbourhood are sent to infinity. However, in the low energy limit we recover also the qualitative behaviour of the orbits of the original two-collision mapping that we observed numerically: in the case of the low energy limit, orbits starting close to the stable fixed point $(X^-,0)$ converge towards this point, vertically, while for $X_0$ large enough (but below $X^+$), the orbits start first to climb for a finite number of iterations, and end up finally to dive down to the fixed point, at the moment when $X_n$ becomes small enough (because it converges towards $X^-$) so that
\begin{align*}
\frac{(R/\alpha^2)X_n}{(1-X_n)(1-X_n-R)^2}
\end{align*}
becomes smaller than $1$, inducing the eventual exponential decay of the second component $Y_n$ of the orbits.\\
These common features exhibit the relevance of the low energy limit, for which we recover rigorously the structure of the phase space of the mapping \eqref{EQUATSS4.1Itera_2dim2Coll}, divided into two parts by the separatrix, as well as the behaviour of the different orbits.
\end{remar}

\section{Conclusion and perspective}

Together with the companion paper \cite{DoVeAr1}, we investigated the collapse phenomenon of three inelastic particles in dimension $d \geq 2$. In the present article, we studied in particular the only collapsing configuration that we expect to be generic, that is, associated to a set of initial configurations of positive measure: the case of the nearly-linear inelastic collapse, when the relative velocities of the particles are not vanishing at the time of the collapse.\\
To study such a system, we relied on the rigorous study of the asymptotics of the different variables (obtained in \cite{DoVeAr1}) to describe inelastic collapse. In particular, we used these asymptotics to identify the leading order terms of the full dynamical system describing a nearly-inelastic collapse. In turn, the evolution laws of the leading order terms provided a reduced two-dimensional system, the two-collision mapping that we introduced in this article. We studied this dynamical system, identifying formally the Zhou-Kadanoff regime, in which the two-collision mapping is completely understood. We attached then ourselves to understand the orbits of the two-collision mapping, and in particular to characterize completely the initial configurations that satisfy the Zhou-Kadanoff regime at the time of collapse.\\
We were able to prove that the Zhou-Kadanoff regime is stable and locally attracting in a non-trivial region of the phase space of the two-collision mapping. In order to overcome the limitation of the local nature of such a result, we performed numerical simulations of the two-collision mapping, which allowed to state a conjecture concerning the Zhou-Kadanoff regime. The numerical simulations suggest that there exists a separatrix that divides the phase space into two parts: one of them contains the initial configurations of orbits that are globally well-defined and eventually satisfy the Zhou-Kadanoff regime, and the other part contains the initial configurations of orbits that are not globally well-defined. We were able to prove this conjecture in the low energy limit, that is, we proved the conjecture for the formal limit of the two-collision mapping we obtain when we let the energy of the system going to zero.\\
We achieved a first step in understanding the inelastic collapse of three inelastic particles. We believe that, except for a set of initial configurations of zero measure, all the initial configurations leading to an inelastic collapse are such that, at the limiting time of the collapse, the relative velocities of the particles that are in contact are non zero. If such a property holds true, it would be possible to continue the dynamics of the systems of inelastic hard spheres, even in the case when an inelastic collapse takes place. To complete the proof of this continuation property, it would be necessary to prove the following results.
\begin{itemize}
\item We proved already in the companion paper \cite{DoVeAr1} that the order of collisions of a collapsing system of three inelastic particles eventually becomes (up to relabel the particles), either the infinite repetition of the collision sequence \circled{0}-\circled{1}, \circled{0}-\circled{2} (the nearly-linear inelastic collapse), or \circled{0}-\circled{1}, \circled{0}-\circled{2}, \circled{1}-\circled{2} (the triangular collapse). The study of the velocity matrix associated to the triangular collapse presented in \cite{DoVeAr1} strongly suggests that the triangular collapse is unstable, that is, can only be achieved (if it ever can be achieved) from a negligible set of initial configurations. In the future, we would like to prove that the triangular collapse is indeed unstable. Studying the triangular collapse can be hopefully achieved by extending the methods developed in the present paper.
\item The study of the two-collision mapping suggests that the Zhou-Kadanoff regime is the only stable outcome for a collapsing system of three particles (at least, under the assumption that none of the relative velocities vanishes). It would be interesting to prove such a result, already for the orbits of the two-collision mapping, and then for the full dynamical system describing an inelastic collapse. In particular, it seems reasonable to look for the separatrix we observed numerically, that seems to characterize the Zhou-Kadanoff regime. We encountered severe difficulties due to the complicated expression of the two-collision mapping. A possible approach to determine the separatrix may be to rely on computer assisted proofs. Another approach would be to apply a perturbation method, starting from the dynamical system obtained in the low energy limit, in order to recover the two-collision mapping as a perturbation of the low energy limit, at least for the particular case when the relative velocities are small at the time of collapse.
\item Finally, the two-collision mapping is the two-dimensional reduction of the full dynamical system provided that the norms of the relative velocities of the external particles do not vanish at the time of inelastic collapse. However, a priori the relative velocities can also vanish at the time of collapse. The reduced two-dimensional systems obtained in the case when one or the two relative velocities vanish have also to be studied.
\end{itemize}

\begin{appendices}

\section{Details of the computations concerning the two-collision mapping of the linear collapse \eqref{EQUATSS4.1Itera_2dim2Coll}}
\label{APPENSC___DetaiCalcuDbIte}

\subsection{Details of the computation \eqref{EQUATSS4.1DbIteTerme1-2aP}}
\label{APPENSS___CalcuDbIteExpre}

The details to establish the equation \eqref{EQUATSS4.1DbIteTerme1-2aP} page \pageref{EQUATSS4.1DbIteTerme1-2aP} are as follows:
\begin{align*}
1-2a\varphi_2' =& \left( \left(\alpha + \frac{1}{T}\right)\sqrt{1-2b\varphi_2} - \varphi_1 - \frac{1}{T} \right)^{-2} \Bigg[ \left(\alpha+\frac{1}{T}\right)^2(1-\zeta) + \varphi_1^2 + \frac{1}{T^2} - 2\left(\alpha+\frac{1}{T}\right) \varphi_1 \sqrt{1-\zeta} \nonumber\\
&\hspace{30mm} - 2 \left(\alpha+\frac{1}{T}\right) \frac{1}{T}\sqrt{1-\zeta} + 2 \frac{\varphi_1}{T} - 2 \frac{\varphi_1}{T}\left[1 - \sqrt{1 - \zeta}\right] - \frac{1}{T^2}\left[1 - \sqrt{1-\zeta}\right]^2 \Bigg] \nonumber\\
=& \left( \left(\alpha + \frac{1}{T}\right)\sqrt{1-2b\varphi_2} - \varphi_1 - \frac{1}{T} \right)^{-2} \Bigg[ \left(\alpha^2 + 2\frac{\alpha}{T}\right)(1-\zeta) + \frac{1}{T^2}(1-\zeta) + \varphi_1^2 + \frac{1}{T^2} \nonumber\\
&\hspace{25mm} - 2\left(\alpha+\frac{1}{T}\right) \varphi_1 \sqrt{1-\zeta} - 2 \left(\alpha+\frac{1}{T}\right) \frac{1}{T}\sqrt{1-\zeta} + 2 \frac{\varphi_1}{T} - 2 \frac{\varphi_1}{T}\left[1 - \sqrt{1 - \zeta}\right] \nonumber\\
&\hspace{95mm} - \frac{1}{T^2}\left(1 - 2\sqrt{1-\zeta} + (1-\zeta)\right) \Bigg] \nonumber\\
=& \left( \left(\alpha + \frac{1}{T}\right)\sqrt{1-2b\varphi_2} - \varphi_1 - \frac{1}{T} \right)^{-2} \Bigg[ \left(\alpha^2 + 2\frac{\alpha}{T}\right)(1-\zeta) + \varphi_1^2 - 2\left(\alpha+\frac{1}{T}\right) \varphi_1 \sqrt{1-\zeta} \nonumber\\
&\hspace{50mm} - 2 \left(\alpha+\frac{1}{T}\right) \frac{1}{T}\sqrt{1-\zeta} + 2 \frac{\varphi_1}{T} - 2 \frac{\varphi_1}{T}\left[1 - \sqrt{1 - \zeta}\right] + 2\frac{\sqrt{1-\zeta}}{T^2} \Bigg] \nonumber\\
=& \left( \left(\alpha + \frac{1}{T}\right)\sqrt{1-2b\varphi_2} - \varphi_1 - \frac{1}{T} \right)^{-2} \Bigg[ \left(\alpha^2 + 2\frac{\alpha}{T}\right)(1-\zeta) + \varphi_1^2 - 2\left(\alpha+\frac{1}{T}\right) \varphi_1 \sqrt{1-\zeta} \nonumber\\
&\hspace{60mm} - 2 \frac{\alpha}{T}\sqrt{1-\zeta} + 2 \frac{\varphi_1}{T} - 2 \frac{\varphi_1}{T}\left[1 - \sqrt{1 - \zeta}\right] \Bigg] \nonumber\\
=& \left( \left(\alpha + \frac{1}{T}\right)\sqrt{1-2b\varphi_2} - \varphi_1 - \frac{1}{T} \right)^{-2} \Bigg[ \alpha\left(\alpha + \frac{2}{T}\right)(1-\zeta) + \varphi_1^2 - 2\left(\alpha+\frac{1}{T}\right) \varphi_1 \sqrt{1-\zeta} \nonumber\\
&\hspace{60mm} - 2 \frac{\alpha}{T}\sqrt{1-\zeta} + 2 \frac{\varphi_1}{T}\sqrt{1 - \zeta} \Bigg] \nonumber\\
=& \left( \left(\alpha + \frac{1}{T}\right)\sqrt{1-2b\varphi_2} - \varphi_1 - \frac{1}{T} \right)^{-2} \Bigg[ \alpha \left(\alpha+\frac{2}{T}\right)(1-\zeta) - 2\alpha\left( \varphi_1 + \frac{1}{T} \right)\sqrt{1-\zeta} + \varphi_1^2 \Bigg].
\end{align*}

\subsection{Details of the computations of the Jacobian matrix of the mapping \eqref{EQUATSS4.1Itera1CollCoord}}
\label{APPENSS___CalcuDbIteJacob}

Only the computations of the partial derivatives $\partial_xf_1$, $\partial_xf_2$ and $\partial_yf_2$ of the mapping \eqref{EQUATSS4.1Itera1CollCoord} page \pageref{EQUATSS4.1Itera1CollCoord} are not direct.\\
Concerning the second derivative of the first coordinate $f_1$ we have:
\begin{align*}
\partial_y f_1(x,y) &= \frac{r \frac{(-2b)}{2\sqrt{1-2by}} \left[ \left( \alpha + \frac{1}{T} \right) \sqrt{1-2by} - x - \frac{1}{T} \right] - r \sqrt{1-2by} \left( \alpha + \frac{1}{T} \right) \frac{(-2b)}{2\sqrt{1-2by}}}{\left( \left( \alpha + \frac{1}{T} \right) \sqrt{1-2by} - x - \frac{1}{T} \right)^2} \\
&= \frac{- \frac{rb}{\sqrt{1-2by}}\left( \alpha + \frac{1}{T} \right) \sqrt{1-2by} + \left( x + \frac{1}{T} \right) \frac{rb}{\sqrt{1-2by}} + \frac{rb}{\sqrt{1-2by}} \sqrt{1-2by} \left( \alpha + \frac{1}{T} \right) }{\left( \left( \alpha + \frac{1}{T} \right) \sqrt{1-2by} - x - \frac{1}{T} \right)^2} \\
&= \frac{rb \left( x + \frac{1}{T} \right)}{ \sqrt{1-2by} \left( \left( \alpha + \frac{1}{T} \right) \sqrt{1-2by} - x - \frac{1}{T} \right)^2 } \cdotp
\end{align*}
Concerning the second coordinate $f^1_2$, we find first
\begin{align*}
\partial_x f^1_2(x,y) &= \frac{ \left(1-\sqrt{1-2by}\right) \left( \left( \alpha + \frac{1}{T} \right) \sqrt{1-2by} - x - \frac{1}{T} \right)^2}{ \left( \left(\alpha + \frac{1}{T}\right) \sqrt{1-2by} - x - \frac{1}{T} \right)^3 } \\
& \hspace{3mm} - \frac{ \left( 1 - \sqrt{1-2by} \right) \left( x + \frac{1}{2T} \left( 1 - \sqrt{1-2by} \right) \right) \cdot (-2) \left( \left( \alpha + \frac{1}{T} \right)\sqrt{1-2by} - x - \frac{1}{T} \right) }{ \left( \left( \alpha + \frac{1}{T} \right) \sqrt{1-2by} - x - \frac{1}{T} \right)^4} \\
&= \frac{  \left(1-\sqrt{1-2by}\right) \left[ \left( \alpha + \frac{1}{T} \right) \sqrt{1-2by} - x - \frac{1}{T} + 2 \left( x + \frac{1}{2T} \left( 1 - \sqrt{1-2by} \right) \right) \right] }{ \left( \left(\alpha + \frac{1}{T}\right) \sqrt{1-2by} - x - \frac{1}{T} \right)^3 } \\
&= \frac{\left( 1 - \sqrt{1-2by} \right) \left[ \alpha \sqrt{1-2by} + x \right]}{ \left( \left( \alpha + \frac{1}{T} \right) \sqrt{1-2by} - x - \frac{1}{T} \right)^3 } \cdotp
\end{align*}
For the second derivative of the last coordinate, we start with this intermediate computation:
\begin{align*}
&\partial_y \left[ \left(1-\sqrt{1-2by}\right) \left( x + \frac{1}{2T}\left(1-\sqrt{1-2by}\right) \right) \right] \\
&= - \frac{(-2b)}{2\sqrt{1-2by}} \left( x + \frac{1}{2T}\left(1-\sqrt{1-2by}\right)\right) + \left(1-\sqrt{1-2by}\right) \left( \frac{b}{2T\sqrt{1-2by}} \right) \\
&= \frac{bx}{\sqrt{1-2by}} + \frac{b\left(1-\sqrt{1-2by}\right)}{2T\sqrt{1-2by}} + \frac{b\left(1-\sqrt{1-2by}\right)}{2T\sqrt{1-2by}}\\
&= b\frac{\left(Tx + 1 - \sqrt{1-2by}\right)}{T\sqrt{1-2by}} \cdotp
\end{align*}
We find therefore
\begin{align*}
\partial_y f^1_2(x,y) &= \frac{1}{b^2 \left(\left(\alpha + \frac{1}{T}\right)\sqrt{1-2by} - x - \frac{1}{T} \right)^4} \Bigg[ b \frac{(Tx+1-\sqrt{1-2by})}{T\sqrt{1-2by}} b \left( \left(\alpha + \frac{1}{T}\right)\sqrt{1-2by} - x - \frac{1}{T} \right)^2 \\
&\hspace{0mm} - \left( 1 - \sqrt{1-2y} \right) \left( x + \frac{1}{2T}\left( 1 - \sqrt{1-2by} \right)\right) 2b \left( \left(\alpha + \frac{1}{T}\right)\sqrt{1-2by} - x - \frac{1}{T} \right) (\alpha+1/T)\frac{(-b)}{\sqrt{1-2by}} \Bigg] \\
&= \frac{1}{\left(\left(\alpha + \frac{1}{T}\right)\sqrt{1-2by} - x - \frac{1}{T} \right)^3} \Bigg[ \frac{(Tx+1-\sqrt{1-2by})}{T\sqrt{1-2by}} \left( \left(\alpha + \frac{1}{T}\right)\sqrt{1-2by} - x - \frac{1}{T} \right) \\
&\hspace{52mm}+ 2\left( 1 - \sqrt{1-2y} \right) \left( x + \frac{1}{2T}\left( 1 - \sqrt{1-2by} \right)\right) (\alpha+1/T)\frac{1}{\sqrt{1-2by}} \Bigg] \\
&\hspace{-18mm}= \frac{ x\left((\alpha+1/T)\sqrt{1-2by} -x - 1/T\right) + (1-\sqrt{1-2by})\left( (\alpha+1/T) \left( \sqrt{1-2by} + 2x + (1-\sqrt{1-2by})/T \right) - x - 1/T \right) }{\sqrt{1-2by} \left( \left(\alpha + \frac{1}{T}\right)\sqrt{1-2by} - x - \frac{1}{T} \right)^3} \cdotp
\end{align*}

\section{Study of the Zhou-Kadanoff equilibrium}

\subsection{Establishing the symmetrized system in the $(x,w)$ variables}
\label{APPENSS___SymmeSystm(x,w)}

Starting from the equations \eqref{EQUATSS4.2ReducSystmSymme} page \pageref{EQUATSS4.2ReducSystmSymme}, and using the definition of the new intermediate variable $w$:
\begin{align*}
w = 1 -  \sqrt{1-2by},
\end{align*}
we find, on the one hand for the first variable:
\begin{align*}
x' &= \frac{r\sqrt{1-2by}}{(\alpha+1)\sqrt{1-2by}-x-1}\\
&= \frac{r(1-w)}{\alpha+1)(1-w) - x - 1}\\
&= \frac{r(w-1)}{(\alpha+1)w - \alpha - 1 + x + 1} \\
&= \frac{r(w-1)}{(\alpha+1)w + x - \alpha},
\end{align*}
which is equation \eqref{EQUATSS4.2SystmSymme_x_w1}, and on the other hand, for the second variable:
\begin{align*}
y' &= \frac{(1-\sqrt{1-2y})}{b} \cdot \frac{(x + \frac{1}{2}(1 - \sqrt{1-2by}))}{\left( (\alpha+1) \sqrt{1-2by} - x - 1 \right)^2} \\
&= \frac{w}{b} \cdotp \frac{(x + \frac{w}{2})}{\left( (\alpha+1) (1-w) - x - 1 \right)^2} \\
&= \frac{w}{b} \cdotp \frac{(x + \frac{w}{2})}{\left( -(\alpha+1)w - x + \alpha \right)^2} \\
&= \frac{w}{b} \cdotp \frac{(x + \frac{w}{2})}{\left( (\alpha+1)w + x - \alpha \right)^2},
\end{align*}
which is equation \eqref{EQUATSS4.2SystmSymme_x_w2}. The hybrid form \eqref{EQUATSS4.2SystmSymme_x_w_} of the symmetrized iteration is obtained.

\subsection{The infinite branch of the cubic $\mathcal{C}$ is a decreasing function}
\label{APPENSS___BrnchCubicDecrs}

In this section, let us study the global properties of the cubic $\mathcal{C}$ defined by the equation \eqref{EQUATSS4.2SystmFxdPtFina2} page \pageref{EQUATSS4.2SystmFxdPtFina2}, that is:
\begin{align*}
2x^2\left(x + \frac{w}{2}\right) = r^2(2-w)(w-1)^2,
\end{align*}
in the quadrant $\mathbb{R}_+\times]-\infty,1]$, here in the $(x,w)$-plane. More precisely, let us show that there exists a single infinite branch of this cubic in the quadrant, that is the graph of a function in $x$ which is strictly decreasing.\\
In particular, let us note that for all $x \geq 0$, there exists at least one solution $w$ to \eqref{EQUATSS4.2SystmFxdPtFina2}: the cubic $\mathcal{C}$ has an infinite branch which is unbounded in the direction $x \rightarrow +\infty$. Let us note that the cubic $\mathcal{C}$ intersects the line $w=1$ only at the point $(1,0)$ in the quadrant $\mathbb{R}_+\times]-\infty,1]$, and in the same way considering the intersection between the cubic $\mathcal{C}$ and the line $x=0$ leads to the equation $0=r^2(2-w)(w-1)^2$, so that the only intersection between this line and the cubic $\mathcal{C}$ in the quadrant is $(0,1)$. Now, the cubic $\mathcal{C}$ is crunodal (that is, the curve has a self-intersection), and two branches of this curve pass through the point $(0,1)$, with respective tangent vectors $(1,\pm1/r)$. Therefore, the branch entering the quadrant has a slope $-1/r$ at $x=0$, and such a branch has to remain cannot escape the quadrant. We conclude that the infinite branch of the cubic $\mathcal{C}$ is indeed contained in the quadrant $\mathbb{R}_+\times]-\infty,1]$.\\
It remains to study the variations of this branch in the quadrant. Writing the equation of the cubic $\mathcal{C}$ as $g(x,w) = 0$ where $g:(x,w) \mapsto 2x^2\left(x + \frac{w}{2}\right) - r^2(2-w)(w-1)^2$, and computing the gradient of $g$ we find:
\begin{align*}
\nabla g(x,w) = \begin{pmatrix}
6x^2 + 2xw
\\
r^2(w-1)^2 + 2r^2(w-1)(w-2)+x^2
\end{pmatrix}
=
\begin{pmatrix}
2x(3x+w) \\
r^2(w-1)(3w-5) + x^2
\end{pmatrix}.
\end{align*}
On the one hand the quadrant is always below the line $y= 5/3$, so that $\partial_w g(x,w) > 0$ for all points $(x,w)$ in the interior of the quadrant, and on the other hand since the equation:
\begin{align*}
2 x^2 \left(x - \frac{3x}{2}\right) = r^2 (2+3x)(-3x-1)^2,
\end{align*}
that is
\begin{align*}
- x^3 = r^2 (2+3x)(3x+1)^2
\end{align*}
has no real solution, the cubic $\mathcal{C}$ never intersects the line $w = -3x$, and lies actually above this line, we have $\partial_x g(x,w) > 0$ for all the points $(x,w)$ of the cubic $\mathcal{C}$ with $x>0$. The gradient of $g$ lying always in the first quadrant $x > 0,w > 0$ when $(x,w)$ is in the interior of the quadrant $\mathbb{R}_+\times]-\infty,1]$, we deduce that the infinite branch in this quadrant is actually the graph of a function $f_2:x\mapsto g_2(x)$, which is strictly decreasing.\\
The assertion concerning the behaviour of the cubic $\mathcal{C}$ in the quadrant $\mathbb{R}_+\times]-\infty,1]$ is proved.

\subsection{Location of the inflection point of the cubic $\mathcal{C}$}
\label{APPENSS___InflcPointCubic}

In the present section we will determine the position of the single inflection point of the cubic $\mathcal{C}$, defined by the equation \eqref{EQUATSS4.2SystmFxdPtFina2} page \pageref{EQUATSS4.2SystmFxdPtFina2}. As in the previous section, writing the equation of the cubic $\mathcal{C}$ as $g(x,w) = 0$ where $g:(x,w) \mapsto 2x^2\left(x + \frac{w}{2}\right) - r^2(2-w)(w-1)^2$, let us first obtain an equation for the locus of the inflection points of the cubic. Since the cubic $\mathcal{C}$ is described as an implicit equation, we can use the formula giving the inflection points for general implicit curves (such a formula can be obtained, for instance, thanks to the Lagrange's multipliers theorem). This general result states that $(x,w)$ is an inflection point of the curve $g(x,w)=0$ if and only if:
\begin{align}
\label{EQUATAPD.3PointInfleGener}
\left[2\partial_x\partial_wg \partial_x g \partial_w g - \partial_x^2g\left(\partial_wg\right)^2 - \partial_w^2g\left(\partial_xg\right)^2\right](x,w) = 0.
\end{align}
Let us call the curve defined by \eqref{EQUATAPD.3PointInfleGener} the \emph{curve of the critical points of the curve $g(x,w)=0$}.\\
In the present case, we have:
\begin{align*}
\partial_xg(x,w) = 2x(3x+w) \hspace{3mm} \text{and} \hspace{3mm} \partial_wg(x,w) = x^2 + r^2(w-1)(3w-5),
\end{align*}
so that
\begin{align*}
\partial_x^2 g(x,w) &= 2(6x+w),\\
\partial_x\partial_w g(x,w) &= 2x,\\
\partial_w^2 g(x,w) &= 2r^2(3w-4).
\end{align*}
Therefore the equation \eqref{EQUATAPD.3PointInfleGener} in our case writes:
\begin{align}
\label{EQUATAPD.3PointInfleCubiq}
8x^2(3x+w)\left( x^2 + r^2(w-1)(3w-5) \right) - 2(6x+w)\left( x^2 + r^2(w-1)(3w-5) \right)^2 - 8r^2x^2(3w-4)(3x+w)^2 = 0.
\end{align}
Our result concerning the location of the inflection point of the cubic $\mathcal{C}$ relies on the following observation: the cubic $\mathcal{C}$ intersects its own curve of critical points on the line $w = 3(x+1)$. This fact, which seems to be a miracle at first glance, is most probably a consequence of the fact that the family of cubics $\left(\mathcal{C}\right)_r$ (seen as a family of curves depending on the parameter $r$) is obtained by a family of affine dilatations of a given oblique strophoid.\\
The intersection points $(x,w)$ between the line $w=3(x+1)$ and the curve of the critical points of $\mathcal{C}$ are such that:
\begin{align}
\label{EQUATAPD.3InterCsingDroit}
8x^2(6x+3)\left(x^2+r^2(3x+2)(9x+4)\right) - &\left( x^2 + r^2(3x+2)(9x+4) \right)^2 2(9x+3)\nonumber\\
&\hspace{25mm}-4x^2(6x+3)^2 2r^2(9x+5) = 0.
\end{align}
Let us denote by $P_r$ the polynomial such that the previous equation writes $P_r(x) = 0$. In the same way, considering the intersections $(x,w)$ between the line $w=3(x+1)$ and the cubic curve $\mathcal{C}$, we find the equation $Q_r(x) = 0$, where $Q_r$ is the polynomial defined as:
\begin{align}
\label{EQUATAPD.3InterCubiqDroit}
x^2(5x+3) - r^2(-3x-1)(3x+2)^2.
\end{align}
As a matter of fact, $Q_r$ is a factor of $P_r$, and we have:
\begin{align*}
P_r(x) = Q_r(x) \left( (6-486r^2)x^2 - 436r^2x - 96r^2 \right).
\end{align*}
Therefore, any intersection point $(x,w)$ between the cubic $\mathcal{C}$ and the line $w = 3(x+1)$ is also a point of intersection between the curve of the critical points of $\mathcal{C}$, and this line. By uniqueness of the inflection point of the cubic $\mathcal{C}$, we deduce that the inflection point of $\mathcal{C}$ is on the line $w=3(x+1)$.\\
It remains then to study the equation $Q_r(x)=0$ to determine the abscissa of the inflection point of the cubic. $Q_r$ can be rewritten as:
\begin{align*}
Q_r(x) = (27r^2 + 5)x^3 + (45r^2+3)x^2 + 24r^2x+4r^2.
\end{align*}
The derivative of the cubic polynomial $Q_r$ is
\begin{align*}
Q_r'(x) = 3(27r^2+5)x^2 + 2(45r^2+3)x+24r^2,
\end{align*}
and the discriminant of the quadratic polynomial $Q_r'$ is:
\begin{align*}
\Delta_{Q'_r} = 36\left(9r^4-10r^2+1\right) = 36(r^2-1)(r^2-1/9).
\end{align*}
For $r \geq 1/3$, $Q_r'$ has at most one real single root, and so the cubic $Q_r$ is strictly increasing, and has a single root. Since $Q_r(0) = 4r^2 > 0$, this root is negative, and so $Q_r$ is positive on $\mathbb{R}_+$.\\
For $r < 1/3$, $Q_r'$ has two real roots, the largest being:
\begin{align*}
x_{Q'}^+ = \frac{-(15r^2+1) + \sqrt{9r^4-10r^2+1}}{27r^2+5}\cdotp
\end{align*}
The numerator of $x_{Q'}^+$ is a strictly negative quantity for $0 < r < 1$, so that $Q_r$ is again strictly increasing on $[0,+\infty[$. Since $Q_r(0) > 0$, we deduce that, in this case as well, $Q_r(x)$ never vanishes for $x \geq 0$.\\
As a consequence, the inflection point of the cubic $\mathcal{C}$ lies in the half-plane $\mathbb{R}_-^*\times\mathbb{R}$.\\
\end{appendices}
\newline
\noindent
\textbf{Acknowledgements.} The authors are grateful to E. Caglioti, I. Gallagher, B. Lods, M. Pulvirenti, C. Saffirio and S. Simonella for many stimulating discussions concerning the topic of the present article. The authors gratefully acknowledge the financial support of the Hausdorff Research Institute for Mathematics (Bonn) through the collaborative research center The mathematics of emerging effects (CRC 1060, Project-ID 211504053), and the Deutsche Forschungsgemeinschaft (DFG, German Research Foundation).

E-mail address: \texttt{dolmaire@iam.uni-bonn.de}, \texttt{velazquez@iam.uni-bonn.de}.

\end{document}